\renewcommand{\d}{\ensuremath{\textrm{d}}}
\newcommand{\A}{\ensuremath{\mathcal{A}}}
\newcommand{\B}{\ensuremath{\mathcal{B}}}
\newcommand{\E}{\ensuremath{\mathbb{E}}}
\renewcommand{\O}{\ensuremath{\mathcal{O}}}
\renewcommand{\P}{\ensuremath{\mathbb{P}}}
\newcommand{\N}{\ensuremath{\mathbb{N}}}
\newcommand{\R}{\ensuremath{\mathbb{R}}}
\DeclareMathOperator{\polylog}{polylog}
\DeclareMathOperator{\argmin}{argmin}
\DeclareMathOperator{\diag}{diag}
\DeclareMathOperator{\Unif}{Unif}
\DeclareMathOperator{\ADV}{ADV}
\DeclareMathOperator{\sgn}{sgn}
\DeclareMathOperator{\Parity}{Parity}
\newtheorem{theorem}{Theorem}[section]
\newtheorem{lemma}[theorem]{Lemma}
\newtheorem{definition}[theorem]{Definition}
\crefname{section}{Section}{Sections}
\crefname{subsection}{Subsection}{Subsections}
\crefname{subsubsection}{Subsubsection}{Subsubsections}
\crefname{appendix}{Appendix}{Appendices}
\crefname{figure}{Figure}{Figures}
\crefname{table}{Table}{Tables}
\crefname{theorem}{Theorem}{Theorems}
\crefname{lemma}{Lemma}{Lemmas}
\crefname{definition}{Definition}{Definitions}
\title{Quantum algorithms for multivariate Monte Carlo estimation}
\author[1]{Arjan Cornelissen}
\author[2]{Sofiene Jerbi}
\affil[1]{\normalsize QuSoft -- University of Amsterdam}
\affil[2]{Institute for Theoretical Physics, University of Innsbruck}
\date{July 2021}
\begin{document}
\maketitle

\begin{abstract}
	We consider the problem of estimating the expected outcomes of Monte Carlo processes whose outputs are described by multidimensional random variables. We tightly characterize the quantum query complexity of this problem for various choices of oracle access to the Monte Carlo process and various normalization conditions on its output variables and the estimates to be made. More specifically, we design quantum algorithms that provide a quadratic speed-up in query complexity with respect to the precision of the resulting estimates and show that these complexities are asymptotically optimal in each of our scenarios. Interestingly, we find that these speed-ups with respect to the precision come at the cost of a (up to exponential) slowdown with respect to the dimension of the Monte Carlo random variables, for most of the parameter settings we consider. We also describe several applications of our algorithms, notably in the field of machine learning, and discuss the implications of our results.
\end{abstract}

\section{Introduction}

Monte Carlo methods are used extensively in various fields of science and engineering, such as statistical physics \cite{binder12}, finance \cite{glasserman13}, or machine learning \cite{andrieu03}. At the core of these methods is a Monte Carlo process, e.g., a randomized algorithm $\mathcal{A}$, whose expected outcome is to be estimated via repeated random executions.

An interesting question in complexity theory is whether, given the ability to simulate a Monte Carlo process $\mathcal{A}$ on a quantum computer, quantum algorithms can reduce the sample complexity of such Monte Carlo estimations, i.e., the number of executions of $\mathcal{A}$ needed to obtain a close estimate. From this perspective, the univariate case where $\mathcal{A}$ produces a single real-valued outcome $v(\mathcal{A}) \in [-B,B] \subset \mathbb{R}$ is well studied. An $\varepsilon$-close estimate (in additive error) of $\mathbb{E}_{\mathcal{A}}[v(\mathcal{A})]$ can indeed be estimated with high probability using $\widetilde{\mathcal{O}}\left( \frac{B^2}{\varepsilon^2} \right)$ executions of $\mathcal{A}$ classically, while $\widetilde{\mathcal{O}}\left( \frac{B}{\varepsilon}\right)$ simulations of $\mathcal{A}$ are sufficient quantumly \cite{montanaro15}.\footnote{The $\widetilde{\mathcal{O}}$ notation hides polylogarithmic factors in $B$ and $1/\varepsilon$.} Moreover, these sample complexities are known to be asymptotically optimal, up to polylogarithmic factors, since estimating mean values reduces to this problem \cite{canetti95,nayak99}.

The multivariate case $v(\mathcal{A}) \in \mathbb{R}^d$ however, appearing notably in machine learning applications (see Section \ref{sec:applications}), remains largely unaddressed by quantum algorithms. Classically, the multivariate problem does not introduce any special considerations: the $d$ additive approximations of $\mathbb{E}_{\mathcal{A}}[v_1(\mathcal{A})], \ldots, \mathbb{E}_{\mathcal{A}}[v_d(\mathcal{A})]$ can all be computed simultaneously (i.e., using the same executions of $\mathcal{A}$) with only a logarithmic overhead $\log(d)$ in sample complexity (due to the Hoeffding bound, see \cref{appdx:classical-complexity-MVMC}). In the quantum scenario however, this \emph{simultaneous} evaluation of several expectation values is more complicated. The quantum algorithms for the univariate case rely on quantum amplitude estimation \cite{brassard02}, which involves as a critical step an encoding of the expectation values in the relative phase of a quantum register. Given that the phase is bounded and periodic, this imposes a normalization constraint on the encoding of multiple expectation values and therefore causes a linear overhead in $d$ without special restrictions on the random variables $(v_1(\mathcal{A}), \ldots, v_d(\mathcal{A}))$ (i.e., when they are only bounded in $\ell_\infty$-norm). It is an open question whether, and under which conditions, this overhead can be circumvented, while maintaining a quadratic speed-up in the precision $\varepsilon$ of the Monte Carlo estimates.

In this work, we investigate this apparent limitation of quantum algorithms to speed up multivariate Monte Carlo (MVMC) estimation without an associated slowdown. We choose to formulate the MVMC problem in the general framework of Markov reward processes (MRP) \cite{Cor18,sutton98}, i.e., Monte Carlo processes whose dynamics are described by Markovian transition probabilities and the random variables of interest are described by real-valued reward vectors. This formalism is convenient to separate the query complexity to the dynamics of the Monte Carlo process from that of its random variables, and characterize their upper and lower bounds. In our analysis, we only consider access models to the MRP reward vectors that are \emph{natural} to quantum algorithms, that is, where these real-valued vectors are encoded in normalized amplitudes or phases of input quantum states. The reason for this choice is that encoding in phase/amplitude is a common step in most quantum algorithms that assume, e.g., binary oracle access. By limiting our attention to these natural access models, we hence study the power (and limitations) of this type of encodings in quantum algorithms.

\paragraph{Main contributions}
We investigate the quantum query complexity of the multivariate Monte Carlo estimation problem in various quantum-accessible settings. In the general formulation of this problem in terms of Markov reward processes with vectorized rewards, we design quantum algorithms that speed up the estimation of the resulting expected rewards (or value functions). We also prove lower bounds showing that our query complexities are asymptotically optimal in all parameters, up to polylogarithmic factors. These query complexities are summarized in \cref{fig:table-query}. On a higher level, and to the best of our knowledge, this is the first multivariate problem where we have essentially tight bounds and hence complete understanding of its quantum query complexity, which suggests that the essentially optimal quantum algorithmic techniques required to tackle such multidimensional problems are collated in this work. We also describe several applications appearing in the field of quantum machine learning (quantum approaches to reinforcement learning, regression and generative modeling) that are compatible with our results.

\paragraph{Quantum algorithms} The algorithmic techniques we employ are extensions of those developed in recent years, combined in a novel way. First, we use the observation in \cite{jordan05}, later developed by \cite{GAW18}, that it is possible to perform concurrent runs of phase estimation, as long as one can compute the inner product between the vector of phases and the vectors $\mathbf{x}$ in a hypercubic lattice with uniform spacing, centered around the origin. Next, we combine the oracle conversion techniques from \cite{GAW18}, the value function estimation construction from \cite{Cor18}, and the idea in \cite{vanApeldoorn21} to compute these inner products, and improve over the latter result by realizing that the error probability can still be bounded if we can do so on all but a constant fraction of the vectors $\mathbf{x}$, similar to the analysis in \cite{GAW18}, Appendix A. Finally, we analyze our construction with some well-known techniques from statistics that extend our results to the case where the grid is rotated by an arbitrary orthogonal matrix.

\begin{table}[h]
	\centering
	\begin{tabular}{r|r|c|c|c}
		Input & Oracle type & Exact-depth & Cumulative-depth & Path-independent \\\hline
		\multirow{7}{*}{Reward vectors} & Phase & $\displaystyle \widetilde{\Theta}\left(\frac{R_{\max}}{\varepsilon}d^{1+\frac1p}\right)$ & $\displaystyle \widetilde{\Theta}\left(T^*\frac{R_{\max}}{\varepsilon}d^{1+\frac1p}\right)$ & $\displaystyle \widetilde{\Theta}\left(T^*\frac{R_{\max}}{\varepsilon}d^{1+\frac1p}\right)$ \\
		& Probability & $\displaystyle \widetilde{\Theta}\left(\frac{R_{\max}}{\varepsilon}d^{1-\frac{1}{2q}+\frac1p}\right)$ & $\displaystyle \widetilde{\Theta}\left(T^*\frac{R_{\max}}{\varepsilon}d^{1-\frac{1}{2q}+\frac1p}\right)$ & $\displaystyle \widetilde{\Theta}\left(T^*\frac{R_{\max}}{\varepsilon}d^{1-\frac{1}{2q}+\frac1p}\right)$ \\
		& Distribution & $\displaystyle \widetilde{\Theta}\left(\frac{R_{\max}}{\varepsilon}d^{1-\frac1q+\frac1p}\right)$ & $\displaystyle \widetilde{\Theta}\left(T^*\frac{R_{\max}}{\varepsilon}d^{1-\frac1q+\frac1p}\right)$ & $\displaystyle \widetilde{\Theta}\left(T^*\frac{R_{\max}}{\varepsilon}d^{1-\frac1q+\frac1p}\right)$ \\
		& Lattice & $\displaystyle \widetilde{\Theta}\left(\frac{R_{\max}}{\varepsilon}d^{1-\frac1q+\frac1p}\right)$ & $\displaystyle \widetilde{\Theta}\left(T^*\frac{R_{\max}}{\varepsilon}d^{1-\frac1q+\frac1p}\right)$ & $\displaystyle \widetilde{\Theta}\left(T^*\frac{R_{\max}}{\varepsilon}d^{1-\frac1q+\frac1p}\right)$ \\\hline
		Walk dynamics & Probability & $\displaystyle \widetilde{\Theta}\left(T\frac{R_{\max}}{\varepsilon}d^{\xi(p,q)}\right)$ & $\displaystyle \widetilde{\Theta}\left((T^*)^2\frac{R_{\max}}{\varepsilon}d^{\xi(p,q)}\right)$ & $\displaystyle \widetilde{\Theta}\left((T^*)^2\frac{R_{\max}}{\varepsilon}d^{\xi(p,q)}\right)$
	\end{tabular}\\
	where $T^* = \min\{T,1/(1-\gamma)\}$ and $\xi(p,q) = \frac1p + \max\{0,\frac12-\frac1q\}$.
	\caption{Summary of the query complexities in \cref{thm:mvmc-alg,thm:mvmc-lower-bounds}. The $d$-dimensional reward vectors of the Monte Carlo process are assumed bounded by $R_\text{max}$ in $\ell_q$-norm, and the estimates of their corresponding value functions are to be made $\varepsilon$-precise in $\ell_p$-norm. We consider exact-depth, cumulative-depth and path-independent definitions of these value functions with (effective) depths $T^{(*)}$ and a discount factor $\gamma \in [0,1]$ (see \cref{sec:MMCstatement}). The different types of oracle access to the reward vectors are defined in \cref{sec:MMCaccessmodels}. The $\widetilde{\Theta}$ notation holds in the limit $T^{(*)}, R_{\max}, d \to \infty$ and $\varepsilon \downarrow 0$, and hides polylogarithmic factors in $T^{(*)}$, $R_{\max}$, $d$ and $1/\varepsilon$. The lower bounds for the walk dynamics only hold in the case where $\varepsilon = \O(R_{\max}d^{-1+1/p+\xi(p,q)})$.}
	\label{fig:table-query}
\end{table}

\paragraph{Lower bounds} The lower bounds we give also arise from combining well-known results in novel ways. In the non-rotated case, we show that multivariate Monte Carlo estimation is at least as hard as recovering a bit string up to errors in only a constant fraction of the bits. In the lower bound on the query complexity of the reward oracle, we need to recover this bit string using a fractional phase oracle, whereas for the transition probability oracle, each of the bits is the output of a composition of the majority and parity function. The hardness of this problem is shown by combining the information theoretic lower bound from \cite{FGGS99} with both general adversary bounds for functions and relations, as described in \cite{Belovs15}, which, to the best of our knowledge, is a novelty that we expect to be of independent interest. Finally, to generalize to the rotated case in any $\ell_p$-norm, we modify the analysis with some norm conversions and advanced techniques from analysis and probability theory.

\paragraph{Related work}
Quantum algorithms for Monte Carlo estimation were first explored by \cite{montanaro15}. This work however only considers the univariate problem where the Monte Carlo process generates a 1-dimensional random variable. In addition to studying the case where this random variable is bounded in absolute value, the authors also present quantum algorithms compatible with random variables with bounded variance. We instead restrict our attention to bounded multidimensional random variables in $\ell_q$-norm, $q \geq 1$.

Concurrently to our work, \cite{vanApeldoorn21} has investigated the related problem of estimating probability vectors $\bm{p} = (p_1, \ldots, p_d)$, $\norm{\bm{p}}_1 = 1$, given access to their associated probability oracles (see \cref{subsec:oracles}). This problem can be viewed as a special case of MVMC for a depth-one process with $d$ accessible states, and unit $d$-dimensional reward vectors $(0,\ldots, 0, 1, 0, \ldots, 0)$ in each of these states. Therefore, the upper and lower bound results of \cite{vanApeldoorn21} can be derived from our main \cref{thm:mvmc-alg,thm:mvmc-lower-bounds}. Our results also solve one of the author's open questions: given a matrix $A \in [-1,1]^{d\times d}$ and access to a probability vector $\bm{p}$ via a probability oracle, the author asks whether a vector $\bm{q}$ such that $\norm{A\bm{p}-\bm{q}}_\infty \leq \varepsilon$ can be obtained with query complexity $\mathcal{O}\left(\frac{\sqrt{d}}{\varepsilon}\right)$ to this probability oracle, which is provably optimal. Using the algorithms in \cref{thm:mvmc-alg} for a depth-one process with $d$ accessible states, and reward vectors corresponding to a column of $A$ in each of these states (hence $q=\infty$ and, by definition of the problem, $p=\infty$), we obtain this query complexity of the transition probability oracle, up to logarithmic factors.

\paragraph{Organization}
We first introduce in \cref{sec:prelim} some preliminary notions used in this manuscript, namely the types of oracles we use and known techniques to convert between these. We then provide a formal statement of the MVMC problem in \cref{sec:MMCstatement}, and \cref{sec:MMCaccessmodels} covers the different ways in which we assume to have oracular access to the specific instances of this problem. After that, in \cref{sec:MMCalgorithms} we provide algorithms that solve it for each of the assumed access models and analyses of their query complexities, and in \cref{sec:MMClowerbounds} we prove corresponding lower bounds on these query complexities. In \cref{sec:applications}, we describe applications that can be formulated as MVMC estimation and for which we discuss the implications of our results. Finally, in \cref{sec:MMCdiscussion}, we discuss our results more broadly and hint at some interesting lines of future research.

\section{Preliminaries}
\label{sec:prelim}

In this section, we introduce the preliminary notions that will be used throughout this paper. We start with some notational remarks in \cref{subsec:notation}. After that we discuss the types of oracular access that we consider, in \cref{subsec:oracles}, and finally we recall some results about oracle conversions from \cite{GAW18} in \cref{subsec:analogcomputation}.

\subsection{Notation}
\label{subsec:notation}

Throughout this text, we use $\N = \{1,2,\dots\}$, and for all $n \in \N$, we let $[n] = \{1, \dots, n\}$ and $[n]_0 = \{0,1,\dots,n\}$. An orthogonal matrix $O$ is a matrix with real entries that satisfies $OO^T = O^TO = I$, i.e., it is a unitary matrix with real entries. Whenever we refer to a Hadamard matrix $H \in \R^{d \times d}$, with $d \in \N$, then we refer to an orthogonal matrix with entries $\pm 1/\sqrt{d}$, which in particular implies that $H^TH = HH^T = I$.

If $G$ is a finite set, then any random variable $S \sim \Unif(G)$ takes values in $G$ uniformly at random. Whenever we use boldface letters, such as $\mathbf{x}$, we denote vectors with real-valued entries. Referring to their components is usually done without boldface, i.e., $x_j$ denotes the $j$th entry of $\mathbf{x}$.

When we use big-$\O$-notation, we always supply the corresponding limit in which it holds, i.e., when a function $f : \N \to \R$ satisfies $f(n) = \O(n)$ in the limit $n \to \infty$, then there exists a $N \in \N$ and $M \geq 0$ such that for all $n \geq N$, $|f(n)| \leq Mn$. Similarly, if $f : [0,1] \to \R$ and $f(\varepsilon) = \O(1/\varepsilon)$ in the limit where $\varepsilon \downarrow 0$, then there exists a $\delta \in [0,1]$ and $M \geq 0$ such that $|f(\varepsilon)| \leq M/\varepsilon$, for all $0 < \varepsilon \leq \delta$.

We can also use the big-$\O$-notation with multiple variables, each with their corresponding limits. If $f : \N \to [0,1] \to \R$ satisfies $f(n,\varepsilon) = \O(n/\varepsilon)$, then we mean that there exists an $M \geq 0$, $n_0 \geq \N$ and $\delta \in [0,1]$, such that whenever either $n \geq n_0$ or $0 < \varepsilon \leq \delta$, it holds that $|f(n,\varepsilon)| \leq Mn/\varepsilon$. The \textit{or} is important here, if \textit{any} of the variables displayed in the big-$\O$-notation is close to its limit, then this implies that the left-hand side is bounded by the right-hand side, up to a universal constant.

We can hide polylogarithmic factors in the big-$\O$-notation. If we write $f(n,\delta) = \O(g(n,\delta)\polylog(n/\delta))$, we mean that there exists an integer $m \in \N$ such that $f(n,\delta) = \O(g(n,\delta)\log^m(n/\delta))$. Note in particular that this notation is transitive, i.e., if $f(x) = \O(\polylog(g(x)))$ and $g(x) = \O(\polylog(h(x)))$, then $f(x) = \O(\polylog(h(x)))$ as well. If the expression inside the polylog equals the rest of the expression in the big-$\O$-notation, then we can abbreviate this with a tilde, i.e., $f(x) = \O(g(x)\polylog(g(x)))$ can be abbreviated to $f(x) = \widetilde{\O}(g(x))$.

If $f(x) = \O(g(x))$, then we can equally write $g(x) = \Omega(f(x))$, and similarly with the tilde. If both $f(x) = \O(g(x))$ and $f(x) = \Omega(g(x))$, then we write $f(x) = \Theta(g(x))$, and similarly we write $f(x) = \widetilde{\Theta}(g(x))$ if $f$ and $g$ are equal to one another up to polylogarithmic factors.

When we use quantum states, a ``ket'', $\ket{\cdot}$, always denotes a unit vector. A quantum register, or register for short, is a state space with an associated canonical basis. When we say that a quantum algorithm is acting on several registers, we mean that it is acting on the state space formed by the tensor product of the individual registers. In this case, we omit the tensor product symbol when talking about the individual states, i.e., we write $\ket{x}\ket{y}$ instead of $\ket{x} \otimes \ket{y}$, when it is understood that $\ket{x}$ is a state in the first register, and $\ket{y}$ a state in the second register.

\subsection{Types of oracle access}
\label{subsec:oracles}

In theoretical computer science, it is common to give algorithms the possibility of accessing their input through a routine that satisfies a given specification. Such routines are collectively referred to as oracles, and are meant to be implemented by the user of these algorithms, encoding the specific instance on which the algorithm is to be run.

The specifications of these subroutines, however, can vary. For some applications, one way of implementing the oracle circuit might be much more natural than another. To accommodate a wide range of applications and ensure flexible usage of our results, we investigate several types of oracles that our algorithm can access.

We give our algorithms access to functions $f : X \to \R^d$, where $X$ is a finite set, through four different types of oracles. They are outlined in \cref{def:oracles}.

\begin{definition}[Oracle types]
	\label{def:oracles}
	Let $d \in \N$ and let $X$ be a finite set, whose elements $x \in X$ can be encoded in mutually orthogonal states $\ket{x}$ in an input register.
	
	\begin{enumerate}
		\item \textbf{Phase oracles}
		
		First, for all $j \in [d]$ let $a_j < b_j$, and $f : X \to \times_{j=1}^d [a_j,b_j] \subseteq \R^d$. The \textit{phase oracle} evaluating $f$ is the operation $O_f$ that acts on the input register and a register with basis states $\ket{j}$ for all $j \in [d]$, and is defined as
		\[O_f : \ket{x}\ket{j} \mapsto e^{i\frac{f(x)_j - \frac{b_j+a_j}{2}}{b_j - a_j}}\ket{x}\ket{j}, \qquad (x \in X, j \in [d]).\]
		\item  \textbf{Probability oracles}
		
		Similarly, for all $j \in [d]$ let $b_j > 0$, and $f : X \to \times_{j=1}^d [0,b_j] \subseteq \R^d$. The \textit{probability oracle} evaluating $f$ is the operation $U_f$ that acts on the input register, a register that contains the basis states $\ket{j}$ for all $j \in [d]$, and one extra qubit, and is defined as
		\[U_f : \ket{x}\ket{j}\ket{0} \mapsto \ket{x}\ket{j}\left(\sqrt{\frac{f(x)_j}{b_j}}\ket{1} + \sqrt{1-\frac{f(x)_j}{b_j}}\ket{0}\right), \qquad (x \in X, j \in [d]).\]
		\item  \textbf{Distribution oracles}
		
		Next, let $B > 0$ and $f : X \to \R_{\geq 0}^d$ such that for all $x \in X$, $\norm{f(x)}_1 \leq B$. Then, the \textit{distribution oracle} evaluating $f$ is the operation $D_f$ that acts on the input register and a register that contains basis states $\ket{j}$ for all $j \in [d]_0$, and is defined as
		\[D_f : \ket{x}\ket{0} \mapsto \ket{x}\left(\sum_{j=1}^d \sqrt{\frac{f(x)}{B}}\ket{j} + \sqrt{1 - \frac{\norm{f(x)}_1}{B}}\ket{0}\right), \qquad (x \in X).\]
		\item  \textbf{Lattice oracles}
		
		Finally, let $f : X \to Y \subseteq \R^d$. Furthermore, let $G \subseteq \R^d$ be a finite set, whose elements $\mathbf{x} \in G$ can also be encoded in mutually orthogonal states $\ket{\mathbf{x}}$ in a \textit{vector register}. We let
		\[m = \min\{\mathbf{x}^T\mathbf{y} : \mathbf{x} \in G, \mathbf{y} \in Y\}, \qquad \text{and} \qquad M = \max\{\mathbf{x}^T\mathbf{y} : \mathbf{x} \in G, \mathbf{y} \in Y\}.\]
		A \textit{lattice oracle} evaluating $f$ on $G$ is an operator $L_{G,f}$ that acts on the vector and input register, as
		\[L_{G,f} : \ket{\mathbf{x}}\ket{x} \mapsto e^{i\frac{\mathbf{x}^Tf(x) - \frac{M+m}{2}}{M-m}}\ket{\mathbf{x}}\ket{x}, \qquad (\mathbf{x} \in G, x \in X).\]
	\end{enumerate}
\end{definition}

Phase and probability oracles were already explicitly defined before, in \cite{GAW18}, Definitions 8 and 10. The distribution and lattice oracles have also been used before. For instance in quantum random walks, it is common to model the walking dynamics on the graph via distribution oracles, first introduced by \cite{Ambainis03} and later developed by \cite{Szegedy04} and subsequent works. Lattice oracles are for instance considered in \cite{vanApeldoorn20}, page 24.

We remark that the set $G$ in the definition of lattice oracles, need not have any lattice structure. However, we will often consider the set $G$ to be (part of) a lattice, as for instance depicted in \cref{fig:hypercubiclattice}, motivating our choice to refer to these objects as lattice oracles in this setting. In other settings, the term \textit{inner product oracle} might be more apt.

\subsection{Oracle conversion techniques}
\label{subsec:analogcomputation}

In this section, we recall some techniques from \cite{GAW18}, regarding phase and probability oracles. On a high level, these results allow us to interconvert between these two oracle types with multiplicative overhead that is only polylogarithmic in the precision. This in turn enables us to do non-trivial arithmetic computations with these objects, that is, any combination of additions and multiplications, without ever storing intermediate values in binary. As such, we refer to these techniques as \textit{analog computation}, as opposed to \textit{digital computation} where the binary digits are explicitly manipulated.

\begin{lemma}[Conversion from probability oracle to phase oracle]
	\label{lem:probabilitytophase}
	Let $d \in \N$ and for all $j \in [d]$, $b_j > 0$. Let $f : X \to \times_{j=1}^d [0,b_j] \subseteq \R^d$, and let $U_f$ be a probability oracle evaluating $f$. We can construct a phase oracle evaluating $f$, $O_f$, up to operator norm error $\delta > 0$, with $\O(\polylog(1/\delta))$ calls to $U_f$, as $\delta \downarrow 0$.
\end{lemma}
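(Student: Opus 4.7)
The plan is to first construct a block encoding of the diagonal Hermitian operator
\[ H_f := \sum_{x \in X,\, j \in [d]} \frac{f(x)_j}{b_j}\ket{x,j}\bra{x,j} \]
from the probability oracle $U_f$, and then invoke an optimal Hamiltonian simulation routine to implement $e^{iH_f}$, which realizes the desired phase oracle up to a trivial global phase.

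For the block encoding, I would observe that the unitary $V := U_f^\dagger (I \otimes I \otimes Z) U_f$, with $Z$ acting on the auxiliary qubit of $U_f$, satisfies
\[ \bra{x,j,0}\, V \,\ket{x',j',0} = \left(1 - 2\frac{f(x)_j}{b_j}\right)\delta_{x,x'}\delta_{j,j'} \]
by a direct computation from \cref{def:oracles}, since $U_f$ rotates $\ket{0}$ on the ancilla into a superposition whose amplitudes squared are $f(x)_j/b_j$ and $1 - f(x)_j/b_j$. Equivalently, $V$ is a $(1,1)$-block encoding of $I - 2H_f$. A single application of linear combination of unitaries (LCU) against the identity, with coefficients $1/2$ and $-1/2$, then yields a $(1,\O(1))$-block encoding of $H_f$, built from $\O(1)$ calls to $U_f$ and $U_f^\dagger$.

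Given this block encoding, I would apply an optimal Hamiltonian simulation routine (for instance Low--Chuang, or equivalently quantum singular value transformation for the function $e^{ix}$ on $[-1,1]$) to obtain a unitary $W$ with $\norm{W - e^{iH_f}} \leq \delta$ using $\O(\log(1/\delta)) = \O(\polylog(1/\delta))$ queries to the block encoding, hence also to $U_f$. Since $H_f$ is diagonal in the $\{\ket{x,j}\}$ basis, $W$ acts (up to error $\delta$) as $\ket{x,j}\ket{0} \mapsto e^{if(x)_j/b_j}\ket{x,j}\ket{0}$ with ancillas returned to $\ket{0}$. Multiplying by the global phase $e^{-i/2}$ recovers the map specified in \cref{def:oracles}, since a probability oracle has $a_j = 0$, so the target phase is $e^{i(f(x)_j/b_j - 1/2)}$.

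The main delicate step is the block-encoding construction: one has to track the multiple ancilla registers (the ancilla of $U_f$, the $\ket{0}$-projection register of the block encoding, and the LCU ancilla) carefully, and verify that the LCU normalization matches the conventions required by the downstream Hamiltonian simulation subroutine. Beyond that, every step is a direct application of well-established oracle conversion and Hamiltonian simulation techniques, so no substantial novel difficulty is anticipated.
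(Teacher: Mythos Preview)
Your proposal is correct. The paper's own proof is a one-line citation to \cite{GAW18}, Theorem~14, and your construction---block-encode $I-2H_f$ via $U_f^\dagger(I\otimes Z)U_f$, shift by LCU to get a block encoding of $H_f$, then apply optimal Hamiltonian simulation to realize $e^{iH_f}$---is precisely the content of that cited result, so you have essentially reproduced the same argument in detail rather than delegating it.
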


\begin{proof}
	This follows directly from \cite{GAW18}, Theorem 14.
\end{proof}

\begin{lemma}[Conversion from phase oracle to probability oracle]
	\label{lem:phasetoprobability}
	Let $d \in \N$ and for all $j \in [d]$, $a_j < b_j$. Let $f : X \to \times_{j=1}^d [a_j,b_j] \subseteq \R^d$, and let $O_f$ be a phase oracle evaluating $f$. Let $g : X \to \times_{j=1}^d [0,2(b_j-a_j)] \subseteq \R^d$, $g(x)_j = f(x)_j - a_j + \frac{b_j - a_j}{2}$. We can construct a probability oracle evaluating $g$, $U_g$, up to operator norm $\delta > 0$, with $\O(\polylog(1/\delta))$ calls to $O_f$, as $\delta \downarrow 0$.
\end{lemma}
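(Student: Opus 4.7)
The plan is to realize $g$ through a block-encoding of the phase $\theta(x,j) = (f(x)_j - (a_j+b_j)/2)/(b_j-a_j) \in [-1/2,1/2]$ extracted from $O_f$, and then linearly combine/shift that block-encoding to obtain the required probability amplitude. This is precisely the inverse direction of \cref{lem:probabilitytophase} and is proven in \cite{GAW18}; I would reproduce the argument in three steps.

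First, I would use a standard Linear Combination of Unitaries gadget to obtain, from $O_f$ and $O_f^\dagger$, a block-encoding of $\sin\theta(x,j)$. Concretely, introduce a single ancilla qubit in state $\ket{+}$, apply $O_f$ conditioned on the ancilla being $\ket{0}$ and $O_f^\dagger$ conditioned on the ancilla being $\ket{1}$, and then Hadamard the ancilla. A direct computation gives
\[
\ket{+}\ket{x}\ket{j} \;\longmapsto\; \cos\theta(x,j)\,\ket{0}\ket{x}\ket{j} + i\sin\theta(x,j)\,\ket{1}\ket{x}\ket{j},
\]
so measuring $\ket{1}$ on the ancilla marks the value $\sin\theta(x,j)$ coherently. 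This costs one call each to $O_f$ and $O_f^\dagger$.

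Second, I would promote this $\sin\theta$-encoding to an encoding of $\theta$ itself, since the probability oracle for $g$ requires amplitude $\sqrt{g(x)_j/(2(b_j-a_j))} = \sqrt{\theta(x,j)/2 + 1/2}$, i.e.\ a linear (not sinusoidal) function of $\theta$. This is done via Quantum Signal Processing / QSVT applied to the odd polynomial that $\delta$-approximates $\arcsin$ on the compact interval $[-\sin(1/2),\sin(1/2)] \subset (-1,1)$: such a polynomial exists with degree $O(\log(1/\delta))$ by standard Chebyshev/Taylor truncation (since $\arcsin$ is analytic on $(-1,1)$ and bounded on this subinterval), and QSVT implements it with that many alternating queries to the $\sin\theta$ block-encoding. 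Composing with Step 1 yields an $(O(\log(1/\delta)))$-query block-encoding of $\theta(x,j)$ up to operator-norm error $\delta$.

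Finally, to produce $U_g$ from this block-encoding of $\theta$, I would implement the affine rescaling $\theta \mapsto \theta/2 + 1/2$ and then apply a controlled rotation to turn the amplitude into $\sqrt{\theta/2 + 1/2}$; both operations are realized as an additional constant-degree QSP polynomial transformation, preserving the $O(\polylog(1/\delta))$ query count and the $\delta$ operator-norm error (after adjusting constants). The main technical step, and the only place where the polylog bound is nontrivial, is the $\arcsin$ approximation used in Step 2 — there one must verify that $\arcsin$ is bounded strictly inside $(-1,1)$ on the relevant subinterval (which it is, since $\theta \in [-1/2, 1/2]$ and hence $\sin\theta \in [-\sin(1/2),\sin(1/2)]$), so a uniformly convergent polynomial approximation of logarithmic degree exists and is implementable by QSVT. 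Everything else is bookkeeping of norms and rescalings.
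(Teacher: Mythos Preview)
Your construction is correct and is essentially the content of \cite{GAW18}, Lemma~16, which is exactly what the paper invokes: the paper's proof consists solely of noting that the range of $g$ is bounded away from the endpoints of $[0,2(b_j-a_j)]$ (equivalently, your observation that $\theta\in[-1/2,1/2]$ keeps $\sin\theta$ away from $\pm1$) and then citing that lemma. Two small imprecisions worth cleaning up: in Step~3, ``controlled rotation'' is the wrong phrase since $\theta$ is not classically available---what you actually do (and say in the next clause) is another polynomial transformation via QSVT; and the map $\theta\mapsto\sqrt{\theta/2+1/2}$ is not constant-degree but again $O(\log(1/\delta))$-degree, which is still within the stated budget.
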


\begin{proof}
	Note that $g$ has codomain $\times_{j=1}^d [0,2(b_j-a_j)]$, but its range will be bounded away from the end points, and contained in the smaller set $\times_{j=1}^d [(b_j-a_j)/2, 3(b_j-a_j)/2]$. This allows us to use \cite{GAW18}, Lemma 16.
\end{proof}

Note that \cref{lem:phasetoprobability} subtly changes the codomain of the function to be computed. If $f$ is a function taking values in the interval $[1,3]$, then the probability oracle constructed in \cref{lem:phasetoprobability} computes the same function but with codomain $[0,4]$, ensuring that all the actual function values are far away from the boundaries of the codomain. This is necessary, as already remarked at the bottom of page 21 in \cite{GAW18}, since probability oracles that compute function values close to the boundaries of the codomain are actually more powerful than phase oracles. This explains why we obtain different query complexity results when we have access to phase oracles on the one hand and probability oracles on the other.

\section{Multivariate Monte Carlo estimation}
\label{sec:MMCstatement}

In this section, we formally introduce the problem that we aim to solve in this paper. As already mentioned in the introduction, we phrase the problem in the language of Markov reward processes, which mimics the nomenclature used in reinforcement learning. This allows us to clearly differentiate between the two different defining properties of Monte Carlo estimation problems, namely the values of the random variables themselves, and the dynamics underlying their probability distribution. For more elaborate introductions into Markov reward processes, one can for instance consult \cite{Cor18}, or \cite{sutton98}.

\subsection{Markovian walks}

Let $S$ be a finite set, referred to as the \textit{state space}, whose elements $s \in S$ we refer to as \textit{states}. Since $S$ is finite, one can always think of these states as nodes in a graph. Next, we define a \textit{probability transition matrix} $P : S \times S \to [0,1]$, which satisfies the constraint that every row $P(s,\cdot)$ is a probability distribution. Intuitively, the entry $P(s_1,s_2)$ in the probability transition matrix describes how likely it is to traverse from state $s_1$ to $s_2$ during one time step. Together with an \textit{initial state} $s_0 \in S$ and a positive integer $T$, we can now define a Markovian walk of length $T$ on the state space starting at $s_0$. To that end, we define a sequence of random variables $s_1, \dots, s_T$ taking values in $S$, such that
\[\P(s_t = s' | s_{t-1} = s) = P(s,s').\]
Furthermore, the probability that this Markovian walk traverses any given path $\tau = (s_0, \dots, s_T) \in \{s_0\} \times S^T$ is given by
\[\P(\tau) = \prod_{t=1}^T P(s_{t-1},s_t).\]
We denote the probability distribution over all such paths of length $T$ starting at initial state $s_0$ by $P(T;s_0)$.

\subsection{Reward functions}
\label{subsec:settings}

Next, we turn to a concept called the \textit{reward function}. We consider three different cases, each successive one imposing more structure on this object than the last.

\subsubsection{Exact-depth case}

Let $d,T \in \N$. To every path $\tau \in S^{T+1}$, we associate a $d$-dimensional reward vector with real entries, i.e., we define a \textit{depth-$T$ reward function} $R : S^{T+1} \to \R^d$. In order to provide rigorous convergence results, we fix some $R_{\max} > 0$ and $q \in [1,\infty]$, and we require that all reward vectors are bounded in $\ell_q$-norm by $R_{\max}$.

For any given initial state $s_0 \in S$, we can now ask how much reward the Markovian walk of length $T$ will obtain on average. This is captured by the \textit{depth-$T$ value function}, which we define as
\begin{equation}
\label{eq:Vexactdepth}
V : S \to \R^d, \qquad V(s_0) = \underset{\tau \sim P(T;s_0)}{\E}\left[R(\tau)\right].
\end{equation}
Loosely speaking, the exact-depth multivariate Monte Carlo estimation problem amounts to finding a sufficiently close approximation to this value function $V$. For concreteness, we fix $\varepsilon > 0$, $p \in [1,\infty]$ and $O$ a $d \times d$ orthogonal matrix (i.e., $O$ has real entries and $OO^T = O^TO = I$), and we demand that the resulting approximation to $OV(s_0)$ is $\varepsilon$-close in $\ell_p$-norm, i.e., if the outcome of our algorithm is $\mathbf{v} \in \R^d$, then we demand that
\[\norm{\mathbf{v} - OV(s_0)}_p \leq \varepsilon.\]
In short, we say that an algorithm solves the multivariate Monte Carlo estimation problem rotated by $O$ with rewards bounded by $R_{\max}$ in $\ell_q$-norm $\varepsilon$-precisely w.r.t.\ the $\ell_p$-norm, if it produces a vector $\mathbf{v} \in \R^d$ satisfying the above constraint with probability at least $2/3$.

\subsubsection{Cumulative-depth case}

Let $d \in \N$ and $T \in \N \cup \{\infty\}$. For all integers $t$ such that $0 \leq t \leq T$, we now define a depth-$t$ reward function $R^{(t)} : S^{t+1} \to \R^d$, with function values bounded by $R_{\max}$ in $\ell_q$-norm for some $R_{\max} > 0$ and $q \in [1,\infty]$. Let $\gamma \in [0,1]$ be a \textit{discount factor}. We assert that if $T = \infty$ then $\gamma < 1$, and define the following depth-$T$ reward function $R : S^{T+1} \to \R^d$, in the sense of the previous section, for every path $\tau = (s_0, \dots, s_T) \in S^{T+1}$ as
\[R(\tau) = \sum_{t=0}^T \gamma^tR^{(t)}(s_0, \dots, s_t).\]
The cumulative-depth Monte Carlo estimation problem is the problem of solving the exact-depth Monte Carlo estimation problem, with this particular structure on the reward function $R$.

Observe that the value function, with this extra structure in $R$, can now be written as
\[V(s_0) = \underset{\tau \sim P(T;s_0)}{\E} \left[R(\tau)\right] = \underset{\tau \sim P(T;s_0)}{\E} \left[\sum_{t=0}^T \gamma^t R^{(t)}(s_0, \dots, s_t)\right] = \sum_{t=0}^T \gamma^t \underset{\tau \sim P(t;s_0)}{\E} \left[R^{(t)}(s_0, \dots, s_t)\right],\]
where we used the shorthand notation $\tau = (s_0, \dots, s_T)$. Hence, if one has an algorithm that solves the exact-depth Monte Carlo estimation problem, then one can solve the cumulative-depth case by solving the exact-depth case for each of the individual $R^{(t)}$'s with $t$ running from $0$ to $T$, and then summing all the outcomes. Doing this naively, however, means that we would have to perform all these individual runs with precision $\varepsilon/T$, introducing an extra multiplicative overhead of $T$ to the resulting query complexity. Therefore, we employ a slightly more elaborate method to convert the cumulative-depth case into the exact-depth case. The details are in \cref{lem:valuefunctionoraclecumulativedepth}.

This setting might look artificial, but it is useful in the context of the policy gradient theorem in reinforcement learning, since this essentially reduces evaluating the gradient of the value function of a one-dimensional path-independent version of the problem, to solving the multivariate cumulative-depth case. We talk about this in more detail in \cref{subsec:rl-application}.

\subsubsection{Path-independent case}

In many cases it is natural to associate reward vectors to individual states, and have the reward obtained along a path be some weighted sum of the rewards obtained at the states the walk traverses. This is a special case of the cumulative-depth case, and we formalize it here.

Let $d \in \N$ and $T \in \N \cup \{\infty\}$. We associate a reward vector to every state, i.e., we have a \textit{state-reward function} $R_S : S \to \R^d$. Similarly as before, we require that all reward vectors are bounded by $R_{\max}$ in $\ell_q$-norm, for some $R_{\max} > 0$ and $q \in [1,\infty]$. Furthermore, we define a discount factor $\gamma = [0,1]$, such that if $T = \infty$ then $\gamma < 1$. For all integer $t$ satisfying $0 \leq t \leq T$, we define the depth-$t$ reward function, in the sense of the previous section, as
\[R^{(t)} : S^{t+1} \to \R^d, \qquad R^{(t)}(s_0, \dots, s_t) = \gamma^tR_S(s_t).\]
The path-independent version of the Monte Carlo problem is the same as the cumulative-depth version, with this extra constraint on the structure of the reward functions $R^{(t)}$.

Note that the value function, with this extra structure on the reward function $R$, can be rewritten into the more familiar form
\[V(s_0) = \underset{\tau \sim P(T;s_0)}{\E}\left[R(\tau)\right] = \underset{\tau \sim P(T;s_0)}{\E}\left[\sum_{t=0}^T \gamma^t R^{(t)}(s_0, \dots, s_t)\right] = \underset{\tau \sim P(T;s_0)}{\E}\left[\sum_{t=0}^T \gamma^tR_S(s_t)\right],\]
where the shorthand notation $\tau = (s_0, \dots, s_T) \in S^{T+1}$ is used. This formulation is closer to the definition in \cite{sutton98}.

For future reference, in the cumulative-depth and path-independent cases, we define a quantity called the \textit{effective depth}, as
\begin{equation}
	\label{eq:Tstar}
	T^* = \min\left\{T, \frac{1}{1-\gamma}\right\}.
\end{equation}

\subsection{Access models}
\label{sec:MMCaccessmodels}

In order to construct quantum algorithms that solve the multivariate Monte Carlo estimation problem, we must describe how such algorithms have access to the quantities that define the specific instance of the problem. In this section, we present several such input models.

In all cases, we assume that we have full classical knowledge of the state space $S$, the reward bounds $q$ and $R_{\max}$, the depth $T$, the discount factor $\gamma$, the rotation matrix $O$ and the precision parameters $p$ and $\varepsilon$. We also assume that all states $s \in S$ are encoded into basis states $\ket{s}$ of a \textit{state register}. Finally, we assume to have access to the probability transition matrix by means of an oracle that acts on two such state registers, as
\begin{equation}
\label{eq:DP}
D_P : \ket{s}\ket{0} \mapsto \ket{s}\sum_{s' \in S} \sqrt{P(s,s')}\ket{s'}, \qquad (s \in S).
\end{equation}
Here, $\ket{0}$ can be any arbitrary fiducial state. For future reference, we also mention that we can take several state registers, and combine them into a single \textit{path register}. Context makes it clear how many state registers go into a single path register.

It remains to describe how we have access to the reward function. For this, we consider several different options, each with its own version for the exact-depth, cumulative-depth and path-independent case.

\subsubsection{Phase oracles}

If we say that we have access to the reward function by means of a phase oracle, we mean that we can make queries to the oracles $O_R$, $O_{R^{(t)}}$, or $O_{R_S}$, in the exact-depth, cumulative-depth or path-independent cases respectively. Besides a register that contains the states $\ket{j}$ for all $j \in [d]$, $O_R$ and $O_{R^{(t)}}$ act on a path register, whereas $O_{R_S}$ only acts on a state register. Their action is defined as
\begin{align}
\label{eq:OR}
O_R &: \ket{\tau}\ket{j} \mapsto e^{i\frac{R(\tau)_j}{2R_{\max}}}\ket{\tau}\ket{j}, & (\tau \in S^{T+1}, j \in [d]), \\
\label{eq:ORt}
O_{R^{(t)}} &: \ket{\tau}\ket{j} \mapsto e^{i\frac{R^{(t)}(\tau)_j}{2R_{\max}}}\ket{\tau}\ket{j}, & (t \in [T]_0, \tau \in S^{t+1}, j \in [d]), \\
\label{eq:ORS}
O_{R_S} &: \ket{s}\ket{j} \mapsto e^{i\frac{R_S(s)_j}{2R_{\max}}}\ket{s}\ket{j}, & (s \in S, j \in [d]).
\end{align}

\subsubsection{Probability oracles}

If we say that we have access to the reward function by means of a probability oracle, we assume that the rewards are entry-wise non-negative, and that we can make queries to the oracles $U_R$, $U_{R^{(t)}}$, or $U_{R_S}$, in the exact-depth, cumulative-depth and path-independent cases respectively. These oracles act on the same registers as their phase oracle counterparts, and one additional qubit, and their action is defined as
\begin{align}
\label{eq:UR}
U_R &: \ket{\tau}\ket{j}\ket{0} \mapsto \ket{\tau}\ket{j}\left(\sqrt{\frac{R(\tau)_j}{R_{\max}}}\ket{1} + \sqrt{1-\frac{R(\tau)_j}{R_{\max}}}\ket{0}\right), & (\tau \in S^{T+1}, j \in [d]), \\
\label{eq:URt}
U_{R^{(t)}} &: \ket{\tau}\ket{j}\ket{0} \mapsto \ket{\tau}\ket{j} \left(\sqrt{\frac{R^{(t)}(\tau)_j}{R_{\max}}}\ket{1} + \sqrt{1-\frac{R^{(t)}(\tau)_j}{R_{\max}}}\ket{0}\right), & (t \in [T]_0, \tau \in S^{t+1}, j \in [d]), \\
\label{eq:URS}
U_{R_S} &: \ket{s}\ket{j}\ket{0} \mapsto \ket{s}\ket{j}\left(\sqrt{\frac{R(s)_j}{R_{\max}}}\ket{1} + \sqrt{1-\frac{R(s)_j}{R_{\max}}}\ket{0}\right), & (s \in S, j \in [d]).
\end{align}

\subsubsection{Distribution oracles}

If we say that we have access to the reward function by means of a distribution oracle, we assume that the reward vectors are entry-wise non-negative, and that we can make queries to the oracles $D_R$, $D_{R^{(t)}}$, or $D_{R_S}$, in the exact-depth, cumulative-depth and path-independent cases respectively. All of these oracles act on a register that contains the states $\ket{j}$ for all $j \in [d]_0$, and $D_R$ and $D_{R^{(t)}}$ also act on a path register, whereas $D_{R_S}$ uses a state register instead. The action is defined as
\begin{align}
\label{eq:DR}
D_R &: \ket{\tau}\ket{0} \mapsto \ket{\tau}\left(\sum_{j=1}^d \sqrt{\frac{R(\tau)_j}{d^{1-\frac1q}R_{\max}}}\ket{j} + \sqrt{1 - \frac{\norm{R(\tau)}_1}{d^{1-\frac1q}R_{\max}}}\ket{0}\right), & (\tau \in S^{T+1}, j \in [d]), \\
\label{eq:DRt}
D_{R^{(t)}} &: \ket{\tau}\ket{0} \mapsto \ket{\tau}\left(\sum_{j=1}^d \sqrt{\frac{R^{(t)}(\tau)_j}{d^{1-\frac1q}R_{\max}}}\ket{j} + \sqrt{1 - \frac{\norm{R^{(t)}(\tau)}_1}{d^{1-\frac1q}R_{\max}}}\ket{0}\right), & (t \in [T]_0, \tau \in S^{t+1}, j \in [d]), \\
\label{eq:DRS}
D_{R_S} &: \ket{s}\ket{0} \mapsto \ket{s}\left(\sum_{j=1}^d \sqrt{\frac{R_S(s)_j}{d^{1-\frac1q}R_{\max}}}\ket{j} + \sqrt{1 - \frac{\norm{R_S(s)}_1}{d^{1-\frac1q}R_{\max}}}\ket{0}\right),\!\!\!\!\!\!\! & (s \in S, j \in [d]).
\end{align}

\subsubsection{Lattice oracles}

If we say that we have access to the reward function by means of a lattice oracle on a finite set $G \subseteq \R^d$, we assume that we can make queries to the oracles $L_{G,R}$, $L_{G,R^{(t)}}$, or $L_{G,R_S}$, in the exact-depth, cumulative-depth and path-independent cases respectively. These oracles all act on a \textit{vector register}, encoding the basis states $\ket{\mathbf{x}}$ for all $\mathbf{x} \in G$, and $L_{G,R}$ and $L_{G,R^{(t)}}$ act on a path register, whereas $L_{G,R_S}$ acts on a state register. Their action is defined as
\begin{align}
	\label{eq:LR}
	L_{G,R} &: \ket{\mathbf{x}}\ket{\tau} \mapsto e^{i\frac{\mathbf{x}^TR(\tau)}{2r_G(q)R_{\max}}}\ket{\mathbf{x}}\ket{\tau}, & (\mathbf{x} \in G, \tau \in S^{T+1}), \\
	\label{eq:LRt}
	L_{G,R^{(t)}} &: \ket{\mathbf{x}}\ket{\tau} \mapsto e^{i\frac{\mathbf{x}^TR^{(t)}(\tau)}{2r_G(q)R_{\max}}}\ket{\mathbf{x}}\ket{\tau}, & (\mathbf{x} \in G, t \in [T]_0, \tau \in S^{t+1}), \\
	\label{eq:LRS}
	L_{G,R_S} &: \ket{\mathbf{x}}\ket{s} \mapsto e^{i\frac{\mathbf{x}^TR_S(s)}{2r_G(q)R_{\max}}}\ket{\mathbf{x}}\ket{s}, & (\mathbf{x} \in G, s \in S),
\end{align}
where $r_G : [1,\infty] \to \R$ is defined as
\begin{equation}
	\label{eq:rG}
	r_G(q) = \max\left\{\mathbf{x}^T\mathbf{y} : \mathbf{x} \in G, \mathbf{y} \in \R^d, \norm{\mathbf{y}}_q \leq 1\right\} = \max_{\mathbf{x} \in G} \norm{\mathbf{x}}_{\frac{1}{1-\frac1q}}.
\end{equation}
Intuitively, if $G$ is point-symmetric around the origin, one can think of $r_G$ as the radius of the set $G$.

\section{Quantum algorithms}
\label{sec:MMCalgorithms}

In this section, we present algorithms that solve the multivariate Monte Carlo estimation problem for all settings defined in \cref{subsec:settings}, and all oracles types defined in \cref{sec:MMCaccessmodels}. We construct these algorithms in three steps. First, in \cref{subsec:rewardoracleconversions} we show how the access models introduced in \cref{sec:MMCaccessmodels} can be converted into a lattice oracle. Then, in \cref{subsec:valueoracleconstruction}, we show how this lattice oracle can be used compute inner products with the value function. Finally, in \cref{subsec:valuefunctionestimation}, we show how we can use these building blocks to actually retrieve a classical description of the value function. We end with a summary of our results in \cref{subsec:algorithmresults}.

Throughout our constructions, we will use three different notions of \textit{radius} of a finite set $G \subseteq \R^d$. We already saw the definition of $r_G$ in \cref{eq:rG}, which we repeat here for convenience, along with the notions of \textit{approximate radius} and \textit{effective radius}. To that end, let $\delta \geq 0$, and $r_{G,\delta}, \overline{r}_{G,\delta} : [1,\infty] \to \R$, defined as
\begin{align*}
	r_G(q) &= \max\left\{\mathbf{x}^T\mathbf{y} : \mathbf{x} \in G, \mathbf{y} \in \R^d, \norm{\mathbf{y}}_q \leq 1\right\} = \norm{\mathbf{x}}_{\frac{1}{1-\frac1q}}, & \text{(radius)}, \\
	r_{G,\delta}(q) &= \min\left\{t \geq 0 : \underset{\mathbf{x} \sim \Unif(G)}{\P}\left[\norm{\mathbf{x}}_{\frac{1}{1-\frac1q}} \geq t\right] \leq \delta\right\}, & \text{(approximate radius)}, \\
	\overline{r}_{G,\delta}(q) &= \min\left\{t \geq 0 : \forall \mathbf{y} \in \R^d, \norm{\mathbf{y}}_q \leq 1 \Rightarrow \underset{\mathbf{x} \sim \Unif(G)}{\P}\left[\left|\mathbf{x}^T\mathbf{y}\right| \geq t\right] \leq \delta \right\}, & \text{(effective radius)}.
\end{align*}
Given any such finite set $G \subseteq \R^d$, we define its trimmed versions as
\begin{equation}
	\label{eq:trimmedsets}
	G^{(q)}_{\delta} = \left\{\mathbf{x} \in G : \norm{\mathbf{x}}_{\frac{1}{1-\frac1q}} \leq r_{G,\delta}(q)\right\}, \qquad \text{and} \qquad G^{(q)}_{\delta,\mathbf{y}} = \left\{\mathbf{x} \in G : \left|\mathbf{x}^T\mathbf{y}\right| \leq \overline{r}_{G,\delta}(q)\right\}.
\end{equation}
These sets are graphically displayed in \cref{fig:trimmedsets}. It is immediately clear that the trimmed sets $G_{\delta}^{(q)}$ and $G_{\delta,\mathbf{y}}^{(q)}$ are both subsets of $G$, and in general the trimmed sets are incomparable, as is for instance the case in the figure. We can also say something about the fraction of points in $G$ that are contained in these trimmed sets, which is the objective of the following lemma.

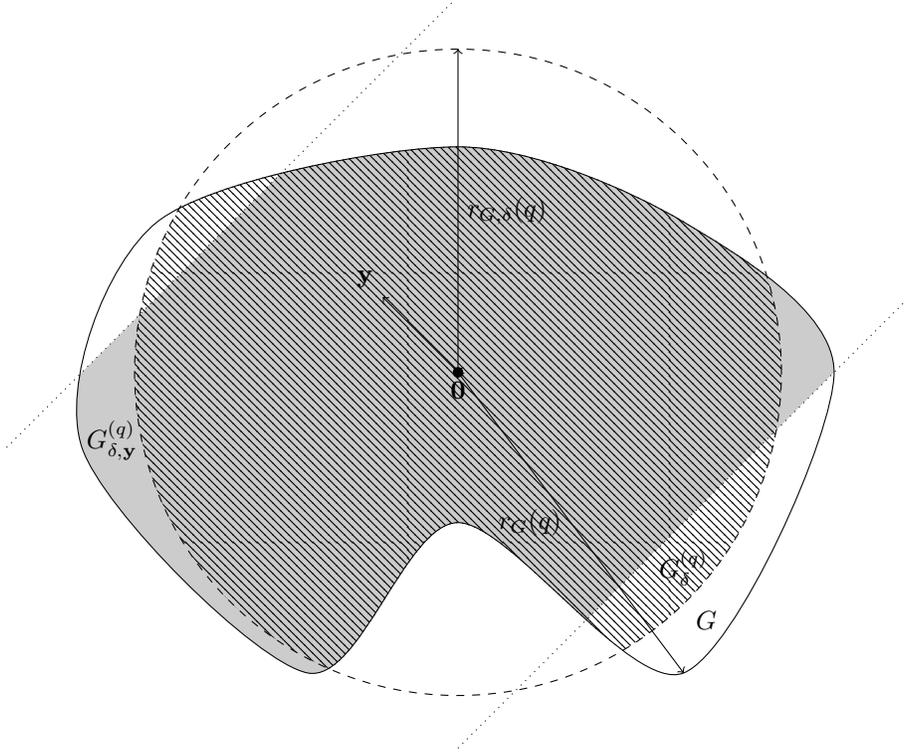
\begin{figure}[h!]
	\centering
	\begin{tikzpicture}
	\begin{scope}
	\clip (-6,-1) to (0,5) to (6,1) to (0,-5) to (-2,-5);
	\fill[gray!40] plot[smooth cycle] coordinates {(5,0) (3,2) (0,3) (-4,2) (-5,-1) (-2,-4) (0,-2) (3,-4)};
	\end{scope}
	\begin{scope}
	\clip (0,0) circle[radius=4.3];
	\draw[pattern=north west lines, dotted] plot[smooth cycle] coordinates {(5,0) (3,2) (0,3) (-4,2) (-5,-1) (-2,-4) (0,-2) (3,-4)};
	\end{scope}
	\draw plot[smooth cycle] coordinates {(5,0) (3,2) (0,3) (-4,2) (-5,-1) (-2,-4) (0,-2) (3,-4)};
	\fill (0,0) circle[radius=.2em] node[below] {$\mathbf{0}$};
	\draw[dashed] (0,0) circle[radius=4.3];
	\draw[dotted] (-6,-1) to (0,5);
	\draw[dotted] (0,-5) to (6,1);
	\draw[->] (0,0) to (-1,1) node[above left] {$\mathbf{y}$};
	\draw[<->] (0,0) to node[left] {$r_G(q)$} (3,-4);
	\draw[<->] (0,0) to node[right] {$r_{G,\delta}(q)$} (0,4.3);
	\node at (3.3,-3.3) {$G$};
	\node at (3,-2.6) {$G_{\delta}^{(q)}$};
	\node at (-4.6,-.9) {$G_{\delta,\mathbf{y}}^{(q)}$};
	\end{tikzpicture}
	\caption{A set $G$, and its trimmed sets $G_{\delta}^{(q)}$ and $G_{\delta,\mathbf{y}}^{(q)}$. The set $G$ is the region enclosed by the solid line. $G_{\delta}^{(q)}$ is represented by the dotted region, and $G_{\delta,\mathbf{y}}^{(q)}$ is the gray region. From the figure, it is apparent that even though both trimmed sets are subsets of $G$, they are incomparable with one another.}
	\label{fig:trimmedsets}
\end{figure}

\begin{lemma}[Trimmed sets]
	\label{lem:trimmedgrids}
	Let $G \subseteq \R^d$ be a finite set, $q \in [1,\infty]$, $\delta \geq 0$ and $\mathbf{y} \in \R^d$ with $\norm{\mathbf{y}}_q \leq 1$. Then,
	\[\left|G_{\delta}^{(q)}\right| \geq (1-\delta)|G|, \qquad \text{and} \qquad \left|G_{\delta,\mathbf{y}}^{(q)}\right| \geq (1-\delta)|G|,\]
	and we also have that $r_{G^{(q)}_{\delta}}(q) = r_{G,\delta}(q)$.
\end{lemma}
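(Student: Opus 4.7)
The plan is to handle the three claims together, exploiting the finiteness of $G$ to tame the minimality conditions appearing in the definitions. The key observation is that since $G$ is finite, the step function $P(t) = \P_{\mathbf{x}\sim\Unif(G)}[\norm{\mathbf{x}}_{\frac{1}{1-\frac1q}} \geq t]$ has only finitely many jumps, one at each attained norm value, so there exists some $\epsilon > 0$ for which no norm value lies in $(r_{G,\delta}(q), r_{G,\delta}(q)+\epsilon)$. This lets us replace the strict inequality $>$ by a non-strict $\geq$ at the price of a harmless shift.

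For the first claim, I would write $G \setminus G_{\delta}^{(q)} = \{\mathbf{x} \in G : \norm{\mathbf{x}}_{\frac{1}{1-\frac1q}} > r_{G,\delta}(q)\}$, which by the observation above equals $\{\mathbf{x} \in G : \norm{\mathbf{x}}_{\frac{1}{1-\frac1q}} \geq r_{G,\delta}(q) + \epsilon/2\}$. Its fraction in $G$ equals $P(r_{G,\delta}(q) + \epsilon/2) \leq \delta$ by the defining property of $r_{G,\delta}(q)$, giving $|G_{\delta}^{(q)}| \geq (1-\delta)|G|$. The second claim is essentially identical once $\mathbf{y}$ with $\norm{\mathbf{y}}_q \leq 1$ is fixed: apply the same small-$\epsilon$ trick to the finite step function $t \mapsto \P[|\mathbf{x}^T\mathbf{y}| \geq t]$, noting that the defining property of $\overline{r}_{G,\delta}(q)$ is required to hold uniformly over all admissible $\mathbf{y}$, hence in particular for this one.

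For the third claim, $r_{G_{\delta}^{(q)}}(q) \leq r_{G,\delta}(q)$ is immediate from the defining constraint of $G_{\delta}^{(q)}$. For the reverse I would argue by contradiction: if $r_{G_{\delta}^{(q)}}(q) < r_{G,\delta}(q)$, pick any $t$ strictly between them; then every $\mathbf{x} \in G$ with $\norm{\mathbf{x}}_{\frac{1}{1-\frac1q}} \geq t$ has norm exceeding $r_{G_{\delta}^{(q)}}(q)$ and therefore lies outside $G_{\delta}^{(q)}$, so the count is at most $|G \setminus G_{\delta}^{(q)}| \leq \delta |G|$ by the first claim. This yields $P(t) \leq \delta$ with $t < r_{G,\delta}(q)$, contradicting the minimality in the definition of $r_{G,\delta}(q)$. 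The only subtlety throughout is the $\min$-versus-$\inf$ issue latent in the definitions, but the finiteness of $G$ reduces this to a routine discretization, so I do not expect a genuine obstacle.
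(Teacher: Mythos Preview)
Your proof is correct and follows essentially the same route as the paper's. The paper dispatches all three claims as one-liners, writing $|G_{\delta}^{(q)}| = |G|\cdot\P[\norm{\mathbf{x}}_{1/(1-1/q)} \le r_{G,\delta}(q)] \ge (1-\delta)|G|$ and declaring the last statement to ``follow directly from the definitions''; you unpack the same inequalities more carefully via the $\epsilon$-shift and give an explicit contradiction argument for the equality of radii, but the underlying ideas are identical. If anything, your treatment of the strict-versus-nonstrict issue is more honest than the paper's, since the ``$\min$'' in the definition of $r_{G,\delta}(q)$ is really an infimum and your finiteness argument is exactly what justifies the paper's terse step.
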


\begin{proof}
	All proofs are one-liners, e.g.,
	\[\left|G_{\delta}^{(q)}\right| \geq |G| \cdot \underset{\mathbf{x} \sim \Unif(G)}{\P}\left[\mathbf{x} \in G_{\delta}^{(q)}\right] = |G| \cdot \underset{\mathbf{x} \sim \Unif(G)}{\P}\left[\norm{\mathbf{x}}_{\frac{1}{1-\frac1q}} \leq r_{G,\delta}(q)\right] \geq |G|(1-\delta),\]
	and similarly for $G_{\delta,\mathbf{y}}^{(q)}$. The final statement follows directly from the definitions, completing the proof.
\end{proof}

It turns out that the \textit{approximate radius} is the relevant quantity when we are converting the reward function oracles to lattice oracles, which is the objective of \cref{subsec:rewardoracleconversions}, whereas the \textit{effective radius} is the relevant quantity for calculating the value function from such a lattice oracle, which is the objective of \cref{subsec:valueoracleconstruction}. As we will see in \cref{subsec:valuefunctionestimation}, the fact that these two radii differ for a regular square lattice, is the deep reason behind the difference in query complexities to the reward oracles and the probability transition oracle, when $q > 1$.

\subsection{Reward oracle conversions}
\label{subsec:rewardoracleconversions}

In this subsection, we focus on the interconvertibility between oracles providing access to the reward function. All the proofs in this section are provided for the exact-depth setting, but they carry over to the cumulative-depth and path-independent setting word for word. The graph displayed in \cref{fig:oracleconversions} shows the conversions that we present in this section, with the corresponding overheads in query complexity. Other conversions from those shown in the figure do exist (i.e., from probability oracle to phase oracle and vice versa follows directly from \cref{lem:probabilitytophase,lem:phasetoprobability}), but these are not necessary in our construction of efficient quantum algorithms for multivariate Monte Carlo estimation.

\begin{figure}[h!]
	\centering
	\begin{tikzpicture}[xscale=4,yscale=3,oracle/.style={draw,rounded corners=.2em}]
	\node[oracle] (OR) at (-1,0) {$O_R$};
	\node[oracle] (UR) at (0,1) {$U_R$};
	\node[oracle] (DR) at (1,0) {$D_R$};
	\node[oracle] (LRG) at (0,0) {$L_{G,R}$};
	\draw[->] (OR) to node[below] {$\frac{r_G(\infty)}{r_G(q)}$} node[above] {\cref{lem:phasetolattice}} (LRG);
	\draw[->] (UR) to node[right] {$\sqrt{\frac{r_G(\infty)}{r_G(q)}}$} node[left] {\rotatebox{90}{\cref{lem:probabilitytolattice}}} (LRG);
	\draw[->] (DR) to node[below] {$\sqrt{\frac{d^{1-\frac1q}r_G(1)}{r_G(q)}}$} node[above] {\cref{lem:distributiontolattice}} (LRG);
	\end{tikzpicture}
	\caption{The oracle conversions that we present in \cref{subsec:rewardoracleconversions}. An arrow pointing from oracle type A to B means that an oracle of type B can be constructed given access to the reward function via oracle A, with a number of queries to A specified by the complexity labeling the edge. The complexities shown are in the limit of the factor shown to $\infty$ and the precision downwards to $0$, and they are all hiding polylogarithmic factors in the reciprocal of the precision with which the conversion is to be performed. The complexity shown at the conversion from $O_R$ to $L_{R,G}$ in addition also hides a polylogarithmic factor in the complexity displayed.}
	\label{fig:oracleconversions}
\end{figure}
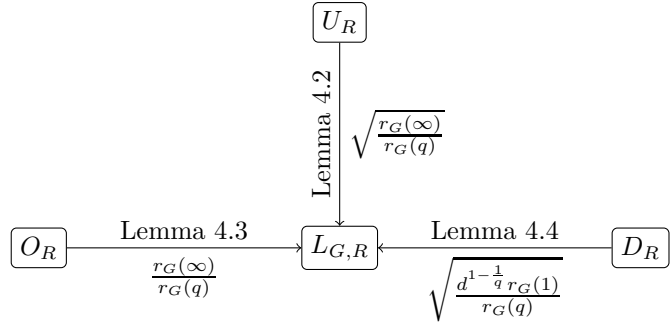

We emphasize that all results presented here hold for arbitrary sets $G \subseteq \R^d$, so in particular they also hold for trimmed sets $G_{\delta}^{(q)}$, where in the complexities all $G$'s are replaced by $G_{\delta}^{(q)}$. In particular, in \cref{subsec:valuefunctionestimation} we use these results on the intersection of two trimmed grids.

We start with showing how a probability oracle can be converted into a lattice oracle.

\begin{lemma}[Conversion from probability oracle to lattice oracle]
	\label{lem:probabilitytolattice}
	Let $d,T \in \N$, $\delta > 0$, $q \in [1,\infty]$ and $R : S^{T+1} \to \R^d$ be a reward function whose reward vectors are entry-wise non-negative and bounded by $R_{\max}$ in $\ell_q$-norm. Suppose that we have access to this reward function by means of a probability oracle $U_R$, as defined in \cref{eq:UR}. Let $G \subseteq \R^d$ be a finite set. Then, we can implement the operation $L_{G,R}$, as defined in \cref{eq:LR}, up to norm error $\delta$ with a number of calls to $U_R$ that scales as
	\[\O\left(\sqrt{\frac{r_G(\infty)}{r_G(q)}}\polylog\left(\frac{1}{\delta}\right)\right) \qquad \left(\frac{r_G(\infty)}{r_G(q)} \to \infty, \;\; \delta \downarrow 0\right).\]
	The same statement holds in the cumulative-depth and path-independent cases, if we replace all $R$'s by $R^{(t)}$'s and $R_S$'s, respectively.
\end{lemma}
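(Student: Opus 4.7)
My strategy is to build a probability oracle whose success probability encodes the inner product $\mathbf{x}^T R(\tau)$ at the natural normalization $r_G(\infty)R_{\max}$, then to amplitude-amplify its success amplitude so that the boosted success probability matches the target normalization $2r_G(q)R_{\max}$, and finally to convert this boosted probability oracle into a phase oracle via \cref{lem:probabilitytophase}. The square-root improvement over the naive route---first converting $U_R$ to a phase oracle using \cref{lem:probabilitytophase} and then implementing $L_{G,R}$ from that phase oracle, which would incur a linear overhead of $r_G(\infty)/r_G(q)$---comes from performing the amplification on the amplitude rather than on the phase.

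\textbf{Inner-product probability oracle.} For each $\mathbf{x} \in G$, split $\mathbf{x} = \mathbf{x}^+ - \mathbf{x}^-$ into its non-negative parts and, controlled on $\ket{\mathbf{x}}$, prepare the ancilla state
\[\ket{\psi_{\mathbf{x}^{\pm}}} \;=\; \sum_{j=1}^d\sqrt{\frac{x_j^{\pm}}{r_G(\infty)}}\,\ket{j}\ket{\mathrm{ok}} \;+\; \sqrt{1-\frac{\norm{\mathbf{x}^{\pm}}_1}{r_G(\infty)}}\,\ket{j_0}\ket{\mathrm{skip}},\]
which is a unit vector since $\norm{\mathbf{x}^{\pm}}_1 \leq \norm{\mathbf{x}}_1 \leq r_G(\infty)$ for every $\mathbf{x}\in G$. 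Applying $U_R$ once, controlled on $\ket{\mathrm{ok}}$ and on the path register $\ket{\tau}$, a direct calculation shows that the probability of jointly measuring $\ket{\mathrm{ok}}$ on the flag and $\ket{1}$ on the probability qubit equals
\[p^{\pm}(\mathbf{x},\tau)\;=\;\frac{(\mathbf{x}^{\pm})^T R(\tau)}{r_G(\infty)R_{\max}}\;\leq\;\frac{r_G(q)}{r_G(\infty)},\]
where the upper bound follows from Hölder's inequality together with the fact that $\norm{\mathbf{x}^{\pm}}_{q/(q-1)}\leq r_G(q)$ for $\mathbf{x}\in G$. Hence a single query to $U_R$ yields a probability oracle for $p^{\pm}$.

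\textbf{Amplification, phase conversion, and combination.} Set $c = r_G(\infty)/(2r_G(q))$, so that $cp^{\pm}\leq 1/2$. Interpreting the construction above as a projected unitary encoding of the scalar amplitude $\sqrt{p^{\pm}}$, I apply quantum singular value transformation with a polynomial of degree $\Theta(\sqrt{c}\log(1/\delta))$ that approximates the linear map $x\mapsto\sqrt{c}\,x$ on $[0,\sqrt{r_G(q)/r_G(\infty)}]$ while staying bounded by $1$ on $[-1,1]$ (a standard Chebyshev-based construction). This consumes $\O(\sqrt{r_G(\infty)/r_G(q)}\log(1/\delta))$ queries to $U_R$ and produces a boosted probability oracle whose success probability is $cp^{\pm}$ up to operator-norm error $\delta/3$. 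Feeding the boosted oracles into \cref{lem:probabilitytophase}, at a further $\polylog(1/\delta)$ cost, yields phase oracles with phases $cp^{\pm} = (\mathbf{x}^{\pm})^T R(\tau)/(2r_G(q)R_{\max})$, up to the $\mathbf{x}$-independent centering offset from \cref{def:oracles} which is correctable by a global phase. Composing the ``$+$'' phase oracle with the inverse of the ``$-$'' phase oracle produces the combined phase $c(p^+-p^-) = \mathbf{x}^T R(\tau)/(2r_G(q)R_{\max})$, i.e.\ exactly the phase of $L_{G,R}$, up to operator-norm error $\delta$. The cumulative-depth and path-independent cases follow verbatim by substituting $R^{(t)}$ or $R_S$ for $R$, since the construction only uses the abstract form of the probability oracle.

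\textbf{Main obstacle.} The delicate step is the amplification: plain Grover-style amplitude amplification sends $\sqrt{p^{\pm}}\mapsto\sin\bigl((2k+1)\arcsin\sqrt{p^{\pm}}\bigr)$, which is \emph{not} linear in $p^{\pm}$ and would therefore inject uncontrolled nonlinear distortions into the final phase. Invoking QSVT with a linear-amplification polynomial restricted to $[0,\sqrt{r_G(q)/r_G(\infty)}]$ is precisely what converts this rotation into the exact rescaling $p^{\pm}\mapsto cp^{\pm}$ while keeping the query count at $\O(\sqrt{c})$. The remaining bookkeeping---propagating the $\delta$-budget through the QSVT step, \cref{lem:probabilitytophase}, and the sign-combination step---contributes only polylogarithmic factors, yielding the advertised $\O\bigl(\sqrt{r_G(\infty)/r_G(q)}\polylog(1/\delta)\bigr)$ query complexity.
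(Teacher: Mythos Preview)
Your proposal is correct and follows essentially the same route as the paper: split $\mathbf{x}$ into its non-negative parts, use one call to $U_R$ to build a probability oracle whose success probability is $(\mathbf{x}^{\pm})^TR(\tau)/(r_G(\infty)R_{\max})$, linearly amplify the amplitude by $\sqrt{r_G(\infty)/(2r_G(q))}$ via QSVT (the paper invokes \cite{GSLW18}, Corollary~66, which is exactly the Chebyshev-based linear amplification you describe), convert to a phase oracle through \cref{lem:probabilitytophase}, and combine the two sign parts. Your explicit remark that plain Grover amplification would introduce nonlinear distortion and that QSVT with a linear polynomial is what avoids this is a point the paper leaves implicit; otherwise the arguments coincide up to constant-factor bookkeeping (the paper picks up an extra factor of $2$ in the denominator after the phase conversion and compensates by running the whole construction twice).
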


\begin{proof}
	Let $\mathbf{x} \in G$ and $\tau \in S^{T+1}$. We let $\mathbf{x}^{(+)}, \mathbf{x}^{(-)} \in \R^d$ be the vectors that contain all the positive resp.\ negative entries of $\mathbf{x}$, and $0$ otherwise. For both $\mathbf{x}^{(+)}$ and $\mathbf{x}^{(-)}$ separately, we follow the idea from \cite{vanApeldoorn20}, page 24. We use a vector register, a path register, a register containing the states $\ket{j}$ for $j \in [d]_0$, and a single qubit. We start with the state
	\[\ket{\mathbf{x}}\ket{\tau}\ket{0}\ket{0},\]
	which we can turn, without making any queries, into
	\[\ket{\mathbf{x}}\ket{\tau}\left(\sum_{j=1}^d \sqrt{\frac{x_j^{(+)}}{r_G(\infty)}}\ket{j}\ket{0} + \sqrt{1-\frac{\norm{\mathbf{x}^{(+)}}_1}{r_G(\infty)}}\ket{0}\ket{0}\right).\]
	Next, we apply the reward oracle $U_R$ to the last three registers, controlled on the second to last register not being in state $\ket{0}$, to obtain the state
	\[\ket{\mathbf{x}}\ket{\tau}\left(\sum_{j=1}^d \sqrt{\frac{x_j^{(+)}R(\tau)_j}{r_G(\infty)R_{\max}}}\ket{j}\ket{1} + \sqrt{\frac{x_j^{(+)}}{r_G(\infty)} \cdot \left(1 - \frac{R(\tau)_j}{R_{\max}}\right)}\ket{j}\ket{0} + \sqrt{1-\frac{\norm{\mathbf{x}^{(+)}}_1}{r_G(\infty)}}\ket{0}\ket{0}\right).\]
	The overlap between the resulting state and the subspace spanned by the states of the form $\ket{\mathbf{x}}\ket{\tau}\ket{\cdot}\ket{1}$ equals
	\[\sqrt{\sum_{j=1}^d \frac{x_j^{(+)}R(\tau)_j}{r_G(\infty)R_{\max}}} = \sqrt{\frac{\left(\mathbf{x}^{(+)}\right)^TR(\tau)}{r_G(\infty)R_{\max}}} \leq \sqrt{\frac{\norm{\mathbf{x}^{(+)}}_{\frac{1}{1-\frac1q}} \cdot \norm{R(\tau)}_q}{r_G(\infty)R_{\max}}}  \leq \sqrt{\frac{r_G(q)}{r_G(\infty)}},\]
	where we used H\"older's inequality in the first inequality, and the definition of $r_G$ from \cref{eq:rG} in the second one.
	
	We now use the entire above operation as a black box, and we amplify the part of the resulting state that is of the form $\ket{\mathbf{x}}\ket{\tau}\ket{\cdot}\ket{1}$, with a factor of $\sqrt{r_G(\infty)/(2r_G(q))}$. This can be realized by interpreting our black box as a block-encoding between the one-dimensional subspace spanned by the initial state, $\ket{\mathbf{x}}\ket{\tau}\ket{0}\ket{0}$, and the subspace of all states of the form $\ket{\mathbf{x}}\ket{\tau}\ket{\cdot}\ket{1}$. It has only one singular value, namely the overlap, which can be amplified by applying the  function $x \mapsto x\sqrt{r_G(\infty)/(2r_G(q))}$ to it. According to \cite{GSLW18}, Corollary 66, a polynomial approximating this function can be constructed up to precision $\delta' > 0$ with degree $\O(\sqrt{r_G(\infty)/r_G(q)}\log(1/\delta'))$, which equals the number of calls to the black box that we need to make.
	
	We now have a procedure that for all $\mathbf{x} \in G$ constructs a state whose overlap with states of the form $\ket{\mathbf{x}}\ket{\tau}\ket{\cdot}\ket{1}$ is equal to
	\[\sqrt{\frac{r_G(\infty)}{2r_G(q)}} \cdot \sqrt{\frac{\left(\mathbf{x}^{(+)}\right)^TR(\tau)}{r_G(\infty)R_{\max}}} = \sqrt{\frac{\left(\mathbf{x}^{(+)}\right)^TR(\tau)}{2r_G(q)R_{\max}}}.\]
	This operation, if we could implement it perfectly, can be turned into a phase oracle with operator norm precision $\delta/8$ with a total of $\O(\polylog(1/\delta))$ invocations, following \cref{lem:probabilitytophase}. Hence, if we choose $\delta' = \Theta(\delta/\polylog(1/\delta))$, we can implement such a phase oracle from $U_R$ with a total error of only $\delta/4$ in operator norm, and with a number of calls to $U_R$ that scales as $\O(\sqrt{r_G(\infty)/r_G(q)}\polylog(1/\delta))$. The resulting operation we end up implementing is
	\[\ket{\mathbf{x}}\ket{\tau} \mapsto e^{i\frac{\left(\mathbf{x}^{(+)}\right)^TR(\tau)}{4r_G(q)R_{\max}}} \ket{\mathbf{x}}\ket{\tau}.\]
	We can run the same operation with $-\mathbf{x}^{(-)}$ in reverse, again with operator norm error in $\delta/4$, which implements the mapping
	\[\ket{\mathbf{x}}\ket{\tau} \mapsto e^{i\frac{\left(\mathbf{x}^{(-)}\right)^T R(\tau)}{4r_G(q)R_{\max}}}\ket{\mathbf{x}}\ket{\tau}.\]
	Running these two operations consecutively, and repeating everything twice, implements the desired mapping $L_{G,R}$, up to norm error $\delta$.
	
	The conversions in the cumulative-depth and path-independent cases follow analogous arguments.
\end{proof}

Next, we show how the phase oracle can be used to construct a lattice oracle. We use the conversion from the probability oracle as a subroutine.

\begin{lemma}[Conversion from phase oracle to lattice oracle]
	\label{lem:phasetolattice}
	Let $d,T \in \N$, $\delta > 0$, $q \in [1,\infty]$, and $R : S^{T+1} \to \R^d$ be a reward function whose reward vectors are bounded by $R_{\max}$ in $\ell_q$-norm. Suppose that we have access to this reward function by means of a phase oracle $O_R$, as defined in \cref{eq:OR}. Let $G \subseteq \R^d$ be a finite set. Then, we can implement the lattice oracle $L_{G,R}$, as defined in \cref{eq:LR}, up to operator norm $\delta$ with a number of calls to $O_R$ that scales as
	\[\O\left(\frac{r_G(\infty)}{r_G(q)}\polylog\left(\frac{r_G(\infty)}{r_G(q)\delta}\right)\right), \qquad \left(\frac{r_G(\infty)}{r_G(q)} \to \infty, \;\; \delta \downarrow 0\right).\]
	The same statement holds in the cumulative-depth and path-independent cases by replacing the $R$'s by $R^{(t)}$'s and $R_S$'s, respectively.
\end{lemma}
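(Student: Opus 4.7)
The plan is to build up the desired lattice-oracle phase in three stages, combining \cref{lem:phasetoprobability,lem:probabilitytophase} with a non-amplified variant of the construction inside the proof of \cref{lem:probabilitytolattice}. First, I would apply \cref{lem:phasetoprobability} to $O_R$ at $\O(\polylog(1/\delta))$ overhead to obtain a probability oracle $U_g$ for the shifted, non-negative reward $g(\tau)_j = R(\tau)_j + 2R_{\max}$. Then, for the positive part $\mathbf{x}^{(+)}$ of $\mathbf{x}$, I would prepare the ancilla state $\sum_{j=1}^d \sqrt{x_j^{(+)}/r_G(\infty)}\ket{j}\ket{0}$ (plus a $\ket{0}\ket{0}$ term for normalization, which is valid since $\|\mathbf{x}^{(+)}\|_1 \leq r_G(\infty)$) and apply $U_g$ controlled on the ancilla register not being in $\ket{0}$, exactly as in the proof of \cref{lem:probabilitytolattice} but \emph{without} its amplification step. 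The squared amplitude on the ``success'' subspace is then $p(\mathbf{x},\tau) = (\mathbf{x}^{(+)})^T g(\tau)/(4 r_G(\infty) R_{\max}) \in [0,3/4]$, since $\|g(\tau)\|_\infty \leq 3R_{\max}$, so the construction constitutes a valid probability oracle and no amplitude amplification is needed.

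Second, I would apply \cref{lem:probabilitytophase} to convert this weak probability oracle into a phase oracle implementing the phase $e^{i(p(\mathbf{x},\tau) - 1/2)}$, and then apply that weak phase oracle $N = \lceil 2 r_G(\infty)/r_G(q)\rceil$ times in sequence. The accumulated phase equals $(\mathbf{x}^{(+)})^T g(\tau)/(2 r_G(q) R_{\max})$ plus a $\tau$-independent offset arising from the shift $g - R = 2R_{\max}\mathbf{1}$ and the $-1/2$ term from \cref{lem:probabilitytophase}; since $G$ is known classically, this offset can be cancelled at zero query cost by a diagonal phase gate on $\ket{\mathbf{x}}$. Running the same procedure with $-\mathbf{x}^{(-)}$ in place of $\mathbf{x}^{(+)}$ and inverting the resulting phase contributes $(\mathbf{x}^{(-)})^T R(\tau)$, and summing produces the full desired phase $e^{i\mathbf{x}^T R(\tau)/(2 r_G(q) R_{\max})}$.

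The crucial reason to skip amplification, rather than invoking \cref{lem:probabilitytolattice} as a black box on $U_g$, is that the shift enlarges $\|g(\tau)\|_q$ by up to a factor $d^{1/q}$, so the amplified $\sqrt{r_G(\infty)/r_G(q)}$ construction would have to be rescaled by this factor and would produce a strictly worse bound. The unamplified variant instead costs $\O(\polylog(1/\delta))$ queries to $O_R$ per weak-phase-oracle application, multiplied by the $N = \O(r_G(\infty)/r_G(q))$ repetitions, for a total of $\O\bigl((r_G(\infty)/r_G(q))\polylog\bigr)$ queries. For the error analysis, each weak phase oracle must be implemented with per-step error $\O(\delta/N)$ so that the compositional error after $N$ repetitions stays below $\delta$; this inflates the polylog argument to $r_G(\infty)/(r_G(q)\delta)$, matching the stated complexity. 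The cumulative-depth and path-independent cases follow verbatim after substituting $O_{R^{(t)}}$ or $O_{R_S}$ for $O_R$.
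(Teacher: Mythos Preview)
Your proposal is correct and follows essentially the same route as the paper. The paper first builds a shifted-phase oracle (via a fractional half-power of $O_R$ plus a $Z$-gate offset), converts it to a probability oracle for the shifted reward $\overline{R}=R+2R_{\max}\mathbf{1}$, and then invokes \cref{lem:probabilitytolattice} as a black box with $q=\infty$ (so that $\sqrt{r_G(\infty)/r_G(\infty)}$ amplification is trivial), obtaining a lattice oracle with normalization $r_G(\infty)\cdot 4R_{\max}$ that it repeats $4r_G(\infty)/r_G(q)$ times after stripping the $\tau$-independent offset. Your version compresses the first two steps into a direct citation of \cref{lem:phasetoprobability}, and then unpacks the interior of \cref{lem:probabilitytolattice} while explicitly skipping its amplification step---which is exactly what applying that lemma with $q=\infty$ amounts to, since the amplification factor degenerates to a constant. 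The repetition count, offset removal, $(\mathbf{x}^{(+)},\mathbf{x}^{(-)})$ split, and error budgeting to $\O(\delta/N)$ per step all match the paper's argument.
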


\begin{proof}
	Let $\mathbf{x} \in G$ and $\tau \in S^{T+1}$. Recall that we can implement a fractional phase oracle $O_R^{1/2}$ up to precision $\delta' > 0$, with just $\O(\polylog(1/\delta'))$ calls to $O_R$, according to \cite{GSLW18}, Corollary 72. By adding a power of the $Z$-gate to the control qubit if we call it in a controlled manner, we can henceforth build the operation described by
	\[\ket{\tau}\ket{j} \mapsto e^{i\left(\frac12 + \frac{R(\tau)_j}{4R_{\max}}\right)}\ket{\tau}\ket{j},\]
	with precision $\delta' > 0$.
	
	Now, the phase shift incurred by this operation is always contained in the interval $[1/4,3/4]$, meaning that we can turn this operation into a probability oracle via \cref{lem:phasetoprobability}. Specifically, with $\O(\polylog(1/\delta''))$ calls to the previous operation, we can construct the mapping
	\[\ket{\tau}\ket{j}\ket{0} \mapsto \ket{\tau}\ket{j}\left(\sqrt{\frac12+\frac{R(\tau)_j}{4R_{\max}}}\ket{1}+ \sqrt{\frac12 - \frac{R(\tau)_j}{4R_{\max}}}\ket{0}\right),\]
	with precision $\delta'' > 0$.
	
	This is a probability oracle to the slightly modified reward function $\overline{R} : S^{T+1} \to \R^d$, defined as
	\[\overline{R}(\tau) = R(\tau) + 2R_{\max}\mathbf{1},\]
	where $\overline{R}_{\max} = 4R_{\max}$. Note that this reward function $\overline{R}$ is bounded in $\ell_{\infty}$-norm by $\overline{R}_{\max}$, but not in any $\ell_q$-norm with $q < \infty$, since we have shfited it away from the origin in all directions. According to \cref{lem:probabilitytolattice}, we can turn this probability oracle into a lattice oracle $L_{G,\overline{R}}$ with overhead logarithmic in the precision, i.e., we can implement the operation
	\[\ket{\mathbf{x}}\ket{\tau} \mapsto e^{i\frac{\mathbf{x}^T\overline{R}(\tau)}{2r_G(\infty)\overline{R}_{\max}}}\ket{\mathbf{x}}\ket{\tau} = e^{i\frac{\mathbf{x}^TR(\tau)}{8r_G(\infty)R_{\max}} + i\frac{\mathbf{x}^T\mathbf{1}}{4r_G(\infty)}}\ket{\mathbf{x}}\ket{\tau},\]
	up to precision $\delta/8 \cdot r_G(q) / r_G(\infty)$, using a number of calls to $U_{\overline{R}}$ that scales as $\O(\polylog(r_G(\infty)/(r_G(q)\delta)))$.
	
	We can remove the global phase again by applying a power of the $Z$-gate to the control qubit, and then we can run this operation a total of $4r_G(\infty)/r_G(q)$ times, to implement the mapping
	\[\ket{\mathbf{x}}\ket{\tau} \mapsto e^{i\frac{\mathbf{x}^TR(\tau)}{2r_G(q)R_{\max}}}\ket{\mathbf{x}}\ket{\tau},\]
	up to precision $\delta/2$. It suffices to choose $\delta' = \delta'' = \Theta((\delta r_G(q)/r_G(\infty)) / \polylog(r_G(\infty)/(r_G(q)\delta)))$ in order to ensure that we only lose another $\delta/2$ in all the times we need to construct $U_{\overline{R}}$, amounting to a total norm error of at most $\delta$.
	
	The total number of calls to $O_R$ scales as
	\[\O\left(\frac{r_G(\infty)}{r_G(q)}\polylog\left(\frac{r_G(\infty)}{r_G(q)\delta}\right) \cdot \polylog\left(\frac{1}{\delta''}\right) \cdot \polylog\left(\frac{1}{\delta'}\right)\right) = \O\left(\frac{r_G(\infty)}{r_G(q)}\polylog\left(\frac{r_G(\infty)}{r_G(q)\delta}\right)\right),\]
	completing the proof for the exact-depth case.
	
	The argument in the cumulative-depth and path-independent cases is completely analogous.
\end{proof}

Finally, we also show how one can convert a distribution oracle to a lattice oracle.

\begin{lemma}[Conversion from distribution oracle to lattice oracle]
	\label{lem:distributiontolattice}
	Let $d,T \in \N$, $\delta > 0$, $q \in [1,\infty]$, and $R : S^{T+1} \to \R^d$ be a reward function whose reward vectors are entry-wise non-negative and bounded by $R_{\max}$ in $\ell_q$-norm. Suppose that we have access to this reward function by means of a distribution oracle $D_R$, as defined in \cref{eq:DR}. Let $G \subseteq \R^d$ be a finite set. Then we can implement the lattice oracle $L_{G,R}$, as defined in \cref{eq:LR}, up to operator norm error $\delta$ with a number of calls to $D_R$ that scales as
	\[\O\left(\sqrt{\frac{d^{1-\frac1q}r_G(1)}{r_G(q)}}\polylog\left(\frac{1}{\delta}\right)\right), \qquad \left(d, \frac{r_G(1)}{r_G(q)} \to \infty, \;\; \delta \downarrow 0\right).\]
	The same statement holds in the cumulative-depth and path-independent cases, if we replace all $R$'s by $R^{(t)}$'s and $R_S$'s, respectively.
\end{lemma}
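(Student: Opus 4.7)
The plan is to mirror the structure of the proof of \cref{lem:probabilitytolattice}, but since the distribution oracle $D_R$ prepares the reward superposition on the $\ket{j}$-register automatically (rather than letting us control $j$ externally), I would separate the two weightings: use $D_R$ to imprint the $\sqrt{R(\tau)_j}$ factors, and then use a $D_R$-free controlled rotation to imprint the $\sqrt{x_j^{(+)}}$ factors on a flag qubit. The inner product then materializes as an amplitude on that flag, which we amplify via QSVT and feed through the usual probability$\to$phase conversion.

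Concretely, I would fix $\mathbf{x} \in G$ and $\tau \in S^{T+1}$ and, as in \cref{lem:probabilitytolattice}, split $\mathbf{x}$ into its positive and negative parts $\mathbf{x}^{(+)}, \mathbf{x}^{(-)}$. Starting from $\ket{\mathbf{x}}\ket{\tau}\ket{0}\ket{0}$ on the vector register, path register, $\ket{j}$-register ($j \in [d]_0$), and a flag qubit, I would first apply $D_R$ to the last two registers to obtain amplitudes $\sqrt{R(\tau)_j/(d^{1-1/q}R_{\max})}$ on $\ket{j}$ for $j \in [d]$. I would then apply a rotation on the flag qubit controlled on $(\mathbf{x},j)$ that, for each $j \in [d]$, rotates by $\sqrt{x_j^{(+)}/r_G(1)}$ onto $\ket{1}$. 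This is unitary because $x_j^{(+)} \le \norm{\mathbf{x}}_1 \le r_G(1)$, and it uses no queries to $D_R$. The overlap of the resulting state with the subspace $\ket{\mathbf{x}}\ket{\tau}\ket{\cdot}\ket{1}$ equals
\[
\sqrt{\frac{(\mathbf{x}^{(+)})^T R(\tau)}{d^{1-1/q} r_G(1) R_{\max}}} \le \sqrt{\frac{\norm{\mathbf{x}^{(+)}}_{1/(1-1/q)} \cdot \norm{R(\tau)}_q}{d^{1-1/q} r_G(1) R_{\max}}} \le \sqrt{\frac{r_G(q)}{d^{1-1/q} r_G(1)}},
\]
using H\"older's inequality with exponents $q$ and $1/(1-1/q)$ together with the definition of $r_G(q)$.

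Viewing the above composite procedure as a block-encoding with this single singular value, I would apply the QSVT singular-value amplification polynomial from \cite{GSLW18}, Corollary 66, to boost the amplitude by a factor $\sqrt{d^{1-1/q} r_G(1)/(2 r_G(q))}$. This incurs the stated $\O(\sqrt{d^{1-1/q} r_G(1)/r_G(q)}\log(1/\delta'))$ queries to the composite procedure, and hence to $D_R$. The result is a probability oracle for $(\mathbf{x}^{(+)})^T R(\tau)/(2 r_G(q) R_{\max})$, which I convert to a phase oracle via \cref{lem:probabilitytophase} at a further $\O(\polylog(1/\delta))$ multiplicative cost. Running the analogous procedure with $-\mathbf{x}^{(-)}$ and composing yields $L_{G,R}$ up to operator norm $\delta$.

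The main obstacle is not conceptual but bookkeeping the error budget: the QSVT amplification precision $\delta'$, the phase-oracle conversion precision, and the two halves (for $\mathbf{x}^{(+)}$ and $\mathbf{x}^{(-)}$) must each be chosen at $\Theta(\delta/\polylog(1/\delta))$ so that all accumulated norm errors sum to at most $\delta$, exactly as in \cref{lem:probabilitytolattice}. The cumulative-depth and path-independent cases are verbatim substitutions of $R^{(t)}$ and $R_S$ for $R$ throughout.
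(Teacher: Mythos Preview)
Your proposal is correct and essentially identical to the paper's own proof: apply $D_R$ to load the $\sqrt{R(\tau)_j}$ amplitudes, then a query-free controlled rotation to load $\sqrt{x_j^{(+)}/r_G(1)}$ onto a flag qubit, bound the resulting overlap via H\"older, amplify via \cite{GSLW18} Corollary~66, convert to a phase oracle with \cref{lem:probabilitytophase}, handle $-\mathbf{x}^{(-)}$ analogously, and combine. The only cosmetic omission is that the paper explicitly repeats the final composition twice to fix the normalization constant in the exponent, but this is a constant-factor detail that does not affect the asymptotic claim.
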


\begin{proof}
	Let $\mathbf{x} \in G$ and $\tau \in S^{T+1}$. Again, we let $\mathbf{x}^{(+)} \in \R^d$ be the vector that contains the positive entries of $\mathbf{x}$ and is $0$ otherwise, and similarly let $\mathbf{x}^{(-)} \in \R^d$ be the vector that only contains the negative entries of $\mathbf{x}$, and is $0$ in all the other coordinates. The approach taken here now follows the technique displayed in \cite{vanApeldoorn20}, at the top of page 24. We act on a vector register, a path register, a register that contains the states $\ket{j}$ for all $j \in [d]_0$, and a single extra qubit. We start in the state
	\[\ket{\mathbf{x}}\ket{\tau}\ket{0}\ket{0}.\]
	First, we call the distribution oracle on the path and the coordinate register, to obtain
	\[\ket{\mathbf{x}}\ket{\tau}\left(\sum_{j=1}^d \sqrt{\frac{R(\tau)_j}{d^{1-\frac1q}R_{\max}}}\ket{j} + \sqrt{1-\frac{\norm{R(\tau)}_1}{d^{1-\frac1q}R_{\max}}}\ket{0}\right)\ket{0}.\]
	Next, without making any queries, we can turn this state into
	\[\ket{\mathbf{x}}\ket{\tau}\left(\sum_{j=1}^d \sqrt{\frac{R(\tau)_j}{d^{1-\frac1q}R_{\max}}}\ket{j} \left(\sqrt{\frac{x_j^{(+)}}{r_G(1)}}\ket{1} + \sqrt{1-\frac{x_j^{(+)}}{r_G(1)}}\ket{0}\right) + \sqrt{1-\frac{\norm{R(\tau)}_1}{d^{1-\frac1q}R_{\max}}}\ket{0}\ket{0}\right).\]
	The total overlap with the subspace spanned by the states of the form $\ket{\mathbf{x}}\ket{\tau}\ket{\cdot}\ket{1}$ now equals
	\[\sqrt{\sum_{j=1}^d \frac{x_j^{(+)}R(\tau)_j}{d^{1-\frac1q}R_{\max}r_G(1)}} = \sqrt{\frac{\left(\mathbf{x}^{(+)}\right)^TR(\tau)}{d^{1-\frac1q}R_{\max}r_G(1)}} \leq \sqrt{\frac{\norm{\mathbf{x}^{(+)}}_{\frac{1}{1-\frac1q}}\norm{R(\tau)}_q}{d^{1-\frac1q}R_{\max}r_G(1)}} \leq \sqrt{\frac{r_G(q)}{d^{1-\frac1q}r_G(1)}}.\]
	Hence, using a similar argument as in \cref{lem:probabilitytolattice}, Corollary 66 from \cite{GSLW18} implies that we can multiply this overlap with a factor of $\sqrt{d^{1-1/q}r_G(1)/(2r_G(q))}$, and we can implement the resulting operation up to norm error $\delta' > 0$ using $\O(\sqrt{d^{1-1/q}r_G(1)/r_G(q)}\polylog(1/\delta'))$ calls to the previous operations. The resulting operation produces a state that has overlap with the subspace spanned by states of the form $\ket{\mathbf{x}}\ket{\tau}\ket{\cdot}\ket{1}$ equal to
	\[\sqrt{\frac{d^{1-\frac1q}r_G(1)}{2r_G(q)}} \cdot \sqrt{\frac{\left(\mathbf{x}^{(+)}\right)^TR(\tau)}{d^{1-\frac1q}R_{\max}r_G(1)}} = \sqrt{\frac{\left(\mathbf{x}^{(+)}\right)^TR(\tau)}{2r_G(q)R_{\max}}}.\]
	Thus, the resulting operation can be turned into a phase oracle with operator norm error $\delta/8$ with $\O(\polylog(1/\delta))$ calls, using \cref{lem:probabilitytophase}. The resulting operation is
	\[\ket{\mathbf{x}}\ket{\tau} \mapsto e^{i\frac{\left(\mathbf{x}^{(+)}\right)^TR(\tau)}{4r_G(q)R_{\max}}}\ket{\mathbf{x}}\ket{\tau},\]
	which we can implement with total error $\delta/4$ if we choose $\delta' = \Theta(\delta/\polylog(1/\delta))$. The total number of calls to $D_R$ then scales as
	\[\O\left(\sqrt{\frac{d^{1-\frac1q}r_G(1)}{r_G(q)}}\polylog\left(\frac{1}{\delta}\right)\right).\]
	We can perform the same construction in reverse with the negative entries of $\mathbf{x}$, i.e., with similar cost and error we can implement
	\[\ket{\mathbf{x}}\ket{\tau} \mapsto e^{i\frac{\left(\mathbf{x}^{(-)}\right)^TR(\tau)}{4r_G(q)R_{\max}}}\ket{\mathbf{x}}\ket{\tau}.\]
	Running both consecutively, and repeating the whole construction twice, we implement $L_{G,R}$ with operator norm error $\delta$. This completes the proof in the exact-depth case.
	
	The proofs for the cumulative-depth and path-independent cases follow analogously.
\end{proof}

\subsection{Computation of the value function}
\label{subsec:valueoracleconstruction}

In this section, we use the lattice oracle for the reward function, for which we gave constructions in the previous section, to construct an object that acts almost as a lattice oracle for the value function. More precisely, let $\delta > 0$, and let the operation $\overline{L}_{G,V,\delta}$ be defined as
\begin{equation}
	\label{eq:LGV}
	\overline{L}_{G,V,\delta} : \frac{1}{\sqrt{|G|}} \sum_{\mathbf{x} \in G} e^{i\varphi(\mathbf{x},s_0)} \ket{\mathbf{x}} \ket{s_0} \mapsto \frac{1}{\sqrt{|G|}} \sum_{\mathbf{x} \in G} e^{i\varphi(\mathbf{x},s_0) + i\frac{\mathbf{x}^TV(s_0)}{2\overline{r}_{G,\delta}(q)}} \ket{\mathbf{x}}\ket{s_0},
\end{equation}
where $\varphi(\mathbf{x},s_0) \in \R$ for all $\mathbf{x} \in G$ and $s_0 \in S$, the state space of the Markov reward process on which the value function is defined. The construction we provide implements this operation $\overline{L}_{G,V,\delta}$, up to some small operator norm error.

The difference between this operation $\overline{L}_{G,V,\delta}$ and an approximate implementation of a regular lattice oracle evaluating the value function, in terms of the definition of lattice oracles given in \cref{def:oracles}, is that here we only require that it acts approximately well on a superposition over all states $\ket{\mathbf{x}}$ with equal weight, where $\mathbf{x} \in G$. In particular, we do not demand that it acts approximately well on any given $\ket{\mathbf{x}}$ individually, with $\mathbf{x} \in G$ -- loosely speaking this means that it is okay if we screw up big time on some of the $\mathbf{x}$'s, as long as this happens only for a small fraction of the points in $G$ and is compensated by performing exceptionally well on the others.

This distinction between $\overline{L}_{G,V,\delta}$ and a regular lattice oracle evaluating the value function, has the high-level implication that it is possible to implement it using a lattice oracle that evaluates the reward function on all but a few points of $G$. In particular, we construct the operation $\overline{L}_{G,V,\delta}$ given access to a lattice oracle evaluating the reward function on a slightly smaller set $\overline{G}$, with $|\overline{G}| \geq (1-\delta)|G|$ for some specific $\delta > 0$.

The constructions of $\overline{L}_{G,V,\delta}$ presented in this subsection depend on the setting, that is, exact-depth, cumulative-depth, or path-independent. We first present the construction in the exact-depth case, in \cref{lem:valuefunctionoracleexactdepth}, and subsequently use this as a subroutine in the cumulative-depth and path-independent case, in \cref{lem:valuefunctionoraclecumulativedepth,lem:valuefunctionoraclepathindependent}.

\begin{lemma}[Computation of the value function (exact-depth)]
	\label{lem:valuefunctionoracleexactdepth}
	Let $d,T \in \N$, $R_{\max} > 0$, $q \in [1,\infty]$, $\delta > 0$, $S$ a state space, $P : S \times S \to [0,1]$ a probability transition matrix and $R : S^{T+1} \to \R^d$ be a depth-$T$ reward function, bounded by $R_{\max}$ in $\ell_q$-norm, and let $G \subseteq \R^d$ be a finite set. Then there exists a function $\delta' = \Theta(\delta^2/\polylog(1/\delta))$, in the limit where $\delta \downarrow 0$, such that the following statement holds. Suppose that $\overline{G} \subseteq G$ such that $|\overline{G}| \geq (1-\delta')|G|$, and that we have access to the reward function by means of a lattice oracle $L_{\overline{G},R}$, as defined in \cref{eq:LR}. Then we can implement the operation $\overline{L}_{G,V,\delta'}$, defined in \cref{eq:LGV}, up to operator norm $\delta$ using a number of calls to $D_P$ and $L_{\overline{G},R}$ that scales as
	\[\O\left(T\polylog\left(\frac{1}{\delta}\right)\right), \qquad \text{and} \qquad \O\left( \frac{r_{\overline{G}}(q)}{\overline{r}_{G,\delta'}(q)}\polylog\left(\frac{1}{\delta}\right)\right), \qquad \left(\frac{r_{\overline{G}}(q)}{\overline{r}_{G,\delta'}(q)} \to \infty, \;\; \delta \downarrow 0\right),\]
	respectively.
\end{lemma}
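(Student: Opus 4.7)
The plan is to realize $\overline{L}_{G,V,\delta'}$ by sandwiching the reward lattice oracle between a quantum walk and its inverse, and then converting the resulting ``characteristic-function'' amplitude into the desired phase oracle via the analog-computation primitives of \cref{subsec:analogcomputation}. The construction is essentially a Monte Carlo phase estimation: each call to $L_{\overline{G},R}$ imprints the per-path phase $\phi(\mathbf{x},\tau) := \mathbf{x}^T R(\tau)/(2r_{\overline{G}}(q)R_{\max})$ on every branch of the walk, so that averaging over branches produces a signal proportional to $\mathbf{x}^T V(s_0)$.

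First, using $T$ applications of $D_P$, I would build a walk isometry $W_{s_0}: \ket{s_0}\ket{0}^{\otimes T} \mapsto \ket{s_0}\sum_{\tau}\sqrt{P(\tau\mid s_0)}\ket{\tau}$. Setting $B := W_{s_0}^{\dagger} L_{\overline{G},R} W_{s_0}$ and restricting to the flag subspace $\{\ket{\mathbf{x}}\ket{s_0}\ket{0}^{\otimes T} : \mathbf{x}\in G\}$, a direct calculation gives that the diagonal amplitude of $B$ on any such state with $\mathbf{x}\in\overline{G}$ equals $\E_{\tau \sim P(T;s_0)}[e^{i\phi(\mathbf{x},\tau)}]$, so $B$ block-encodes an operator diagonal in the $\ket{\mathbf{x}}$-basis. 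Using the Taylor expansion $\E[e^{i\phi}] = e^{i\E[\phi]} + \O(\mathrm{Var}[\phi])$, the block-encoded amplitude is an approximate phase oracle of phase $\E[\phi] = \mathbf{x}^T V(s_0)/(2r_{\overline{G}}(q)R_{\max})$; alternating \cref{lem:phasetoprobability} and \cref{lem:probabilitytophase} purifies it into an exact fractional phase oracle at this scale. Iterating the purified primitive $\O(r_{\overline{G}}(q)/\overline{r}_{G,\delta'}(q))$ times amplifies the phase to the target $\mathbf{x}^T V(s_0)/(2\overline{r}_{G,\delta'}(q))$ (with $R_{\max}$ absorbed into the effective-radius normalisation). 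Each iteration uses one call to $L_{\overline{G},R}$ together with a constant number of walks, so with the polylog-in-$1/\delta$ conversion overheads the claimed query complexities follow.

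The operator norm error decomposes into three pieces: (i) the $G\setminus\overline{G}$ component of the equal-weight superposition, of amplitude at most $\sqrt{\delta'}$ by \cref{lem:trimmedgrids} and the triangle inequality; (ii) the Taylor truncation $\E[e^{i\phi}] \leftrightarrow e^{i\E[\phi]}$, whose per-query error $\O(\mathrm{Var}[\phi])$ can be driven down to $\O(\delta')$ after amplification by taking the conversion precisions polynomially small in $\delta'$; and (iii) polylog precision losses from the conversion lemmas. Balancing these with $\delta' = \Theta(\delta^2 / \polylog(1/\delta))$ makes the cumulative operator-norm error $\O(\delta)$. The main obstacle will be the second step: rigorously turning the characteristic-function block-encoding into a phase oracle with the correct amplified scale, while keeping the $\E[e^{i\phi}] \neq e^{i\E[\phi]}$ discrepancy subleading and threading the approximate-radius hypothesis $|\overline{G}|\geq(1-\delta')|G|$ through the analysis via a good-fraction argument analogous to \cite{GAW18}, Appendix A, so that only the effective-radius ratio $r_{\overline{G}}(q)/\overline{r}_{G,\delta'}(q)$ (rather than the full radius ratio carrying an extra $R_{\max}$) governs the call count to $L_{\overline{G},R}$.
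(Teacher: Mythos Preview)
Your construction has two coupled gaps, one analytic and one structural, and both stem from the same decision: you apply the raw lattice oracle $L_{\overline{G},R}$ inside the walk sandwich and only afterwards try to amplify and linearise.

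\textbf{The characteristic-function step does not go through.} At the native scale of $L_{\overline{G},R}$ the per-path phase satisfies $|\phi(\mathbf{x},\tau)|\le 1/2$, so $\mathrm{Var}_\tau[\phi]$ can be $\Theta(1)$, and hence $\E_\tau[e^{i\phi}]-e^{i\E_\tau[\phi]}=\Theta(1)$ in general. \Cref{lem:phasetoprobability,lem:probabilitytophase} convert between \emph{unitary} phase oracles and probability oracles; they do not turn a block-encoding of the complex scalar $\E_\tau[e^{i\phi}]$ into the unitary $e^{i\E_\tau[\phi]}$. So your ``purification'' step is undefined as stated, and the Taylor error is not a precision knob you can tune --- it is a systematic bias that survives amplification. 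The paper sidesteps this entirely by linearising \emph{before} averaging: it first rescales the phase to the $\overline{r}_{G,\delta'}(q)$ scale (so that $|\phi|\le 1/2$ on the good set), converts \emph{per path} to a probability amplitude $\sqrt{1/2+\phi/2}$ via \cref{lem:phasetoprobability}, and only then superposes over paths. Because probabilities add linearly under expectation, the walk-averaged amplitude is exactly $\sqrt{1/2+\mathbf{x}^TV(s_0)/(4\overline{r}_{G,\delta'}(q)R_{\max})}$, with no variance term, and \cref{lem:probabilitytophase} recovers the desired phase.

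\textbf{The $D_P$ complexity is wrong.} In your ordering, each of the $\O(r_{\overline{G}}(q)/\overline{r}_{G,\delta'}(q))$ amplification iterations rebuilds the walk, so $D_P$ is called $\O\bigl(T\cdot r_{\overline{G}}(q)/\overline{r}_{G,\delta'}(q)\bigr)$ times rather than the claimed $\O(T\polylog(1/\delta))$. The paper's ordering decouples these: the $\O(r_{\overline{G}}(q)/\overline{r}_{G,\delta'}(q))$ rescaling is spent entirely on $L_{\overline{G},R}$ \emph{before} the walk is attached, and the walk is then invoked only $\O(\polylog(1/\delta))$ times by the outer probability-to-phase conversion. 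This decoupling is precisely what yields the separation between the two query complexities in the lemma statement, and it is lost in your sandwich-first approach.
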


\begin{proof}
	In this proof, we are going to implement many operations that we will use as black boxes in subsequent steps of the construction. We do an independent analysis on the implementation error in each of these building blocks, and combine the entire error analysis at the end of the proof.
	
	We are going to act on a vector register, $T+1$ state registers, and a single extra qubit, which start in the state
	\[\ket{\psi} = \frac{1}{\sqrt{|G|}} \sum_{\mathbf{x} \in G} e^{i\varphi(\mathbf{x},s_0)} \ket{\mathbf{x}}\ket{s_0}\ket{0}^{\otimes T}\ket{0}.\]
	With $T$ consecutive calls to $D_P$, each acting on two adjacent state registers starting from the left and moving to the right one state register at the time, we prepare the probability distribution over the paths of length $T$ starting at $s_0$ in the $T+1$ state registers,
	\begin{equation}
		\label{eq:pathtree}
		\frac{1}{\sqrt{|G|}} \sum_{\mathbf{x} \in G} e^{i\varphi(\mathbf{x},s_0)} \ket{\mathbf{x}}\left(\sum_{\tau \in \{s_0\} \times S^T} \sqrt{\P(\tau)}\ket{\tau}\right)\ket{0}.
	\end{equation}
	We remark that this operation is block-diagonal with respect to the subspaces spanned by states of the form $\ket{\mathbf{x}}\ket{s_0}\ket{\cdot}^{\otimes T}\ket{\cdot}$. Next, for all $\tau \in S^{T+1}$ and $\mathbf{x} \in \overline{G}$, recall from the definition of lattice oracles, \cref{eq:LR}, that $L_{\overline{G},R}$ acts as
	\[\ket{\mathbf{x}}\ket{\tau} \mapsto e^{i\frac{\mathbf{x}^TR(\tau)}{2r_{\overline{G}}(q)R_{\max}}}\ket{\mathbf{x}}\ket{\tau}.\]
	Using $\O(r_{\overline{G}}(q)/\overline{r}_{G,\delta'}(q) + \polylog(1/\delta''))$ calls to $L_{\overline{G},R}$, i.e., many regular calls and one fractional phase oracle call as can be constructed via \cite{GSLW18}, Corollary 72, we can convert this into an operation that for all $\mathbf{x} \in \overline{G}$ and $\tau \in S^{T+1}$ acts as
	\begin{equation}
		\label{eq:scaledLR}
		\ket{\mathbf{x}}\ket{\tau} \mapsto e^{i\frac{\mathbf{x}^TR(\tau)}{2\overline{r}_{G,\delta'}(q)R_{\max}}}\ket{\mathbf{x}}\ket{\tau},
	\end{equation}
	up to precision $\delta'' > 0$. It is immediate that this operation is block-diagonal with respect to the subspaces spanned by states of the form $\ket{\mathbf{x}}\ket{s_0}\ket{\cdot}^{\otimes T}\ket{\cdot}$ too. On top of that, for all $\tau \in S^{T+1}$, we have that for all $\mathbf{x} \in G_{\delta',R(\tau)}^{(q)}$,
	\[\left|\frac{\mathbf{x}^TR(\tau)}{2\overline{r}_{G,\delta'}(q)R_{\max}}\right| \leq \frac{\overline{r}_{G,\delta'}(q)\norm{R(\tau)}_q}{2\overline{r}_{G,\delta'}(q)R_{\max}} \leq \frac12,\]
	by the definition of the effective radius. Hence, using the construction from \cref{lem:phasetoprobability}, with $\O(\polylog(1/\delta'''))$ calls, we can turn the operation from \cref{eq:scaledLR} into one that acts for all $\mathbf{x} \in \overline{G} \cap G_{\delta',R(\tau)}^{(q)}$ and $\tau \in S^{T+1}$ as
	\begin{equation}
		\label{eq:Utau}
		\ket{\mathbf{x}}\ket{\tau}\ket{0} \mapsto \ket{\mathbf{x}}\ket{\tau}\left(\sqrt{\frac12 + \frac{\mathbf{x}^TR(\tau)}{4\overline{r}_{G,\delta'}(q)R_{\max}}}\ket{1} + \sqrt{\frac12 - \frac{\mathbf{x}^TR(\tau)}{4\overline{r}_{G,\delta'}(q)R_{\max}}}\ket{0}\right),
	\end{equation}
	up to precision $\delta''' > 0$. Moreover, since the construction in \cref{lem:phasetoprobability} only acts on the latter registers, the resulting operation remains block-diagonal w.r.t.\ the subspaces spanned by states of the form $\ket{\mathbf{x}}\ket{s_0}\ket{\cdot}^{\otimes T}\ket{\cdot}$. When we apply this operation to the state from \cref{eq:pathtree}, we approximately obtain the state
	\[\ket{\phi} = \frac{1}{\sqrt{|G|}} \sum_{\mathbf{x} \in G} e^{i\varphi(\mathbf{x},s_0)} \ket{\mathbf{x}}\left(\sum_{\tau \in \{s_0\} \times S^T} \sqrt{\P(\tau)}\ket{\tau}\left(\sqrt{\frac12 + \frac{\mathbf{x}^T R(\tau)}{4\overline{r}_{G,\delta'}(q)R_{\max}}}\ket{1} + \sqrt{\frac12 - \frac{\mathbf{x}^T R(\tau)}{4\overline{r}_{G,\delta'}(q)R_{\max}}}\ket{0}\right)\right).\]
	Let $U$ be the operation that constructs the above state $\ket{\phi}$ from the initial state $\ket{\psi}$ perfectly, and let $\widetilde{U}$ be the complete unitary operation we described above, where the operation in \cref{eq:Utau} is implemented perfectly. Then, the norm squared error we make can be bounded by
	\begin{align*}
		\norm{\widetilde{U}\ket{\psi} - \ket{\phi}}^2 &= \sum_{\mathbf{x} \in G} \sum_{\tau \in \{s_0\} \times S^T} \norm{(\bra{\mathbf{x}}\bra{\tau} \otimes I)(\widetilde{U}\ket{\psi} - \ket{\phi})}^2 \leq \sum_{\tau \in \{s_0\} \times S^T} \sum_{\mathbf{x} \in G \setminus (\overline{G} \cap G_{\delta',R(\tau)}^{(q)})} \frac{\P(\tau)}{|G|} \cdot 4 \\
		&\leq 4\sum_{\tau \in \{s_0\} \times S^T} \P(\tau) \cdot \frac{\left|G \setminus \left(\overline{G} \cap G_{\delta',R(\tau)}^{(q)}\right)\right|}{|G|} \leq 4\sum_{\tau \in \{s_0\} \times S^T} \P(\tau) \cdot \frac{\left|G \setminus \overline{G}\right| + \left|G \setminus G_{\delta',R(\tau)}^{(q)}\right|}{|G|} \\
		&\leq 4\sum_{\tau \in \{s_0\} \times S^T} \P(\tau) \cdot 2\delta' = 8\delta'.
	\end{align*}
	We now want to convert $U$ into parallel phase oracles, evaluating the overlap of every individual branch $\ket{\mathbf{x}}\ket{s_0}$ with the states of the form $\ket{\mathbf{x}}\ket{s_0}\ket{\cdot}^{\otimes T}\ket{1}$ while retaining the relative phases. For each such branch, this overlap equals, after renormalization of the branch,
	\[\sqrt{\sum_{\tau \in \{s_0\} \times S^T} \P(\tau) \cdot \left(\frac12 + \frac{\mathbf{x}^TR(\tau)}{4\overline{r}_{G,\delta'}(q)R_{\max}}\right)} = \sqrt{\frac12 + \frac{\mathbf{x}^T\E\left[R(\tau)\right]}{4\overline{r}_{G,\delta'}(q)R_{\max}}} = \sqrt{\frac12 + \frac{\mathbf{x}^T V(s_0)}{4\overline{r}_{G,\delta'}(q)R_{\max}}}.\]
	The idea is to run the construction presented in \cref{lem:probabilitytophase}, but with the probability oracle replaced by $U$. With a number of calls that scales as $\O(\polylog(1/\delta))$, we can implement the operation
	\begin{equation}
		\label{eq:LGVdelta}
		\frac{1}{\sqrt{|G|}} \sum_{\mathbf{x} \in G} e^{i\varphi(\mathbf{x},s_0)} \ket{\mathbf{x}}\ket{s_0} \mapsto \frac{1}{\sqrt{|G|}} \sum_{\mathbf{x} \in G} e^{i\varphi(\mathbf{x},s_0) + i\left(\frac12 + \frac{\mathbf{x}^TV(s_0)}{4\overline{r}_{G,\delta'}(q)R_{\max}}\right)}\ket{\mathbf{x}}\ket{s_0},
	\end{equation}
	up to precision $\delta/4$. We crucially use here that $U$ acts block-diagonally on the subspaces of states of the form $\ket{\mathbf{x}}\ket{s_0}\ket{\cdot}^{\otimes T}\ket{\cdot}$. The constant global phase $e^{i/2}$ can be removed with a single qubit gate on the control qubit if we call this operation in a controlled manner. Finally, we run the entire construction above twice to multiply the function value by $2$ so that we recover the required multiplicative factor displayed in \cref{eq:LGV}. Hence, the resulting norm error that we obtain in this step is $\delta/2$.
	
	Next, we turn to the error analysis. In the final step, using a perfect implementation of $U$ implies that we pick up an operator norm error of $\delta/2$. We make $\O(\polylog(1/\delta))$ calls to $U$, so we need to argue that we lose only $\O(\delta/\polylog(1/\delta))$ in norm error for every call to $U$.
	
	Crucially, we have argued before that all operations comprising $\widetilde{U}$ act block-diagonally on subspaces spanned by states of the form $\ket{\mathbf{x}}\ket{s_0}\ket{\cdot}^{\otimes T}\ket{\cdot}$. Similarly, all other operations in \cref{lem:probabilitytophase} only act on the latter registers, leaving those containing $\ket{\mathbf{x}}$ and $\ket{s_0}$ alone as well. Thus, throughout the construction of the operation displayed in \cref{eq:LGVdelta}, the weight on each of the branches with $\ket{\mathbf{x}}$ remains uniform, meaning that on every call to $\widetilde{U}$, we only pick up an error of $\sqrt{8\delta'}$. Hence, we can indeed choose $\delta' = \Theta(\delta^2/\polylog(1/\delta))$, in order to make sure that in this step we cumulatively obtain at most $\delta/4$ in norm error.
	
	Now, it remains to ensure we accumulate at most $\delta/4$ in the remaining operations. To that end, we observe that we can indeed choose $\delta'' = \Theta(\delta/\polylog(1/(\delta\delta''')))$ and $\delta''' = \Theta(\delta/\polylog(1/\delta))$, such that the accumulated error in the construction of both \cref{eq:scaledLR,eq:Utau} is at most $\delta/8$ each.
	
	Finally, we check the query complexity claims. Observe that we call $U$ a total of $\O(\polylog(1/\delta))$ times. $U$ itself is implemented using $T$ calls to $D_P$, proving the claimed query complexity to $D_P$, and a number of calls to the operation in \cref{eq:scaledLR} that scales as $\O(\polylog(1/\delta'''))$. This operation in turn is implemented with a number of calls to $L_{\overline{G},R}$ that satisfies $\O(r_{\overline{G}}(q)/\overline{r}_{G,\delta'}(q) + \polylog(1/\delta''))$. Thus, the total number of calls to $L_{\overline{G},R}$ required is
	\[\O\left(\polylog\left(\frac{1}{\delta}\right) \cdot \polylog\left(\frac{1}{\delta'''}\right) \cdot \left(\frac{r_{\overline{G}}(q)}{\overline{r}_{G,\delta'}(q)} + \polylog\left(\frac{1}{\delta''}\right)\right)\right) = \O\left(\frac{r_{\overline{G}}(q)}{\overline{r}_{G,\delta'}(q)} \cdot \polylog\left(\frac{1}{\delta}\right)\right),\]
	where we used that $\delta''$ and $\delta'''$ can be chosen as $\delta$ up to polylogarithmic factors.
\end{proof}

There is a slightly annoying subtlety in the statement of the previous lemma, which is that an explicit formula for $\delta'$ is not given, rather we give an existence result of $\delta' = \Theta(\delta^2 / \polylog(1/\delta))$. We remark here that in principle it is possible to relate these two parameters more concretely and give a direct formula to compute $\delta'$ from $\delta$, but this would require a more careful analysis of the oracle conversion result, \cref{lem:probabilitytophase}, beyond big-$\O$-notation. As far as we are aware, such a result is not available in the current literature. In the end, we will use such existence arguments in the proof of \cref{thm:mvmc-alg} anyway, so there is no end-to-end qualitative improvement to be gained by figuring out the direct relation between $\delta'$ and $\delta$.

Next, we show how we can perform a similar construction in the cumulative-reward case. If $\gamma < 1$, then the high-level idea is to truncate the summation in cumulative-depth value function, and treat the remaining terms as individual exact-depth value functions. Up to polylogarithmic factors, it turns out to be sufficient to do this truncation at $T^*$, defined in \cref{eq:Tstar}.

\begin{lemma}[Computation of the value function (cumulative-depth)]
	\label{lem:valuefunctionoraclecumulativedepth}
	Let $d,T \in \N$, $R_{\max} > 0$, $q \in [1,\infty]$, $0 < \delta < 2$, and $\gamma \in [0,1]$ such that if $T = \infty$, then $\gamma < 1$. Let $S$ a state space, $P : S \times S \to [0,1]$ a probability transition matrix, for all integer $t$ such that $0 \leq t \leq T$, let $R^{(t)} : S^{t+1} \to \R^d$ be a depth-$t$ reward function, bounded by $R_{\max}$ in $\ell_q$-norm, and let $G \subseteq \R^d$ be a finite set. Then there exists a function $\delta' = \Theta(\delta^2/\polylog(1/\delta))$ such that the following holds. Suppose that $\overline{G} \subseteq G$ with $|\overline{G}| \geq (1-\delta')|G|$, and that we have access to these reward functions by means of lattice oracles $L_{\overline{G},R^{(t)}}$, as defined in \cref{eq:LRt}. Then, we can implement the operation $\overline{L}_{G,V,\delta'}$, as defined in \cref{eq:LGV}, up to operator norm $\delta$ with a number of calls to $D_P$ and $L_{\overline{G},R^{(t)}}$ that scale as
	\[\O\left((T^*)^2\polylog\left(\frac{T^*}{\delta}\right)\right), \qquad \text{and} \qquad \O\left(T^*\frac{r_{\overline{G}}(q)}{\overline{r}_{G,\delta'}(q)}\polylog\left(\frac{T^*}{\delta}\right)\right), \qquad \left(T^*, \frac{r_{\overline{G}}(q)}{\overline{r}_{G,\delta'}(q)} \to \infty, \;\; \delta \downarrow 0\right),\]
	respectively.
\end{lemma}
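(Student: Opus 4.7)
The plan is to reduce the cumulative-depth case to a sequence of exact-depth computations, one for each summand $\gamma^t R^{(t)}$. Since the value function decomposes as $V(s_0) = \sum_{t=0}^{T} \gamma^t V^{(t)}(s_0)$, where $V^{(t)}(s_0) = \E_{\tau \sim P(t;s_0)}[R^{(t)}(\tau)]$, I would construct $\overline L_{G,V,\delta'}$ by composing operations that each contribute a phase proportional to $\gamma^t \mathbf{x}^T V^{(t)}(s_0)$ on the uniform superposition in the vector register, so that the total phase on each $\ket{\mathbf{x}}\ket{s_0}$ branch is exactly $\mathbf{x}^T V(s_0)/(2\overline r_{G,\delta'}(q) R_{\max})$.

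First I would truncate the summation at $T' = \Theta(T^* \log(T^*/\delta))$. When $T$ is already smaller than this threshold no truncation is needed; otherwise $\gamma < 1$ and the discarded tail satisfies $\norm{\sum_{t > T'} \gamma^t V^{(t)}(s_0)}_q \leq R_{\max} \gamma^{T'+1}/(1-\gamma) \leq R_{\max} T^* e^{-T'/T^*}$, which can be driven below any polynomial threshold in $\delta$ by choosing the implicit constant in $T'$ large enough. Choosing this threshold so that the induced phase on every $\mathbf{x} \in G$ is at most $\delta/4$ then bounds the truncation error in operator norm by $\delta/4$.

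Next, for each $t \in \{0, \ldots, T'\}$ I would invoke the exact-depth \cref{lem:valuefunctionoracleexactdepth} on $R^{(t)}$, using the oracle $L_{\overline G, R^{(t)}}$, to obtain an operation implementing $\overline L_{G, V^{(t)}, \delta'}$ with operator norm error $\O(\delta/T')$. To incorporate the $\gamma^t$ weight, I would view the resulting operation as a phase oracle on the vector register (the exact-depth proof already establishes that it is block-diagonal on the subspaces spanned by $\ket{\mathbf{x}}\ket{s_0}\ket{\cdot}^{\otimes t}\ket{\cdot}$ and acts trivially on the path ancilla after uncomputation) and raise it to the $\gamma^t$-th fractional power via \cite{GSLW18}, Corollary 72. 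Since $\gamma^t \in [0,1]$, this fractional-power step only incurs a multiplicative overhead of $\O(\polylog(T'/\delta))$. Composing the $T'+1$ resulting operations sequentially then sums the phases and yields the desired $\overline L_{G,V,\delta'}$.

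For the query accounting, each exact-depth call uses $\O(t \polylog(T^*/\delta))$ queries to $D_P$ and $\O\bigl((r_{\overline G}(q)/\overline r_{G,\delta'}(q))\polylog(T^*/\delta)\bigr)$ queries to $L_{\overline G, R^{(t)}}$; summing $t$ from $0$ to $T'$ gives the claimed $\O((T^*)^2 \polylog(T^*/\delta))$ and $\O(T^* (r_{\overline G}(q)/\overline r_{G,\delta'}(q))\polylog(T^*/\delta))$ totals. The main technical obstacle is that, once we move to the cumulative setting, we must sequentially compose operations whose error guarantees live only on the uniform superposition over $\mathbf{x}$ rather than on each $\ket{\mathbf{x}}$ individually; this is precisely why the block-diagonality preserved by the exact-depth construction is essential, both for the fractional-power step to behave as a scaled phase oracle on the $\mathbf{x}$-register and for the individual operator-norm errors of size $\O(\delta/T')$ to add up to at most $\delta/2$ rather than blowing up coherently. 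Combining this with the $\delta/4$ truncation error and choosing $\delta' = \Theta(\delta^2/\polylog(1/\delta))$ as in the exact-depth lemma yields the stated precision.
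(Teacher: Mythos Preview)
Your proposal follows essentially the same route as the paper: truncate the sum at $T' = \Theta(T^*\log(T^*/\delta))$, invoke \cref{lem:valuefunctionoracleexactdepth} for each $t \leq T'$, take a fractional power via \cite{GSLW18}, Corollary~72 to insert the $\gamma^t$ weight, compose sequentially, and sum the query costs.

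One imprecision worth flagging: you write that the truncation threshold can be chosen so that ``the induced phase on every $\mathbf{x} \in G$ is at most $\delta/4$''. That would require controlling $|\mathbf{x}^T\mathbf{y}|/(2\overline r_{G,\delta'}(q)R_{\max})$ uniformly over $G$, i.e.\ via the full radius $r_G(q)$ rather than the effective radius $\overline r_{G,\delta'}(q)$, which for general $G$ would force an extra $\log(r_G(q)/\overline r_{G,\delta'}(q))$ factor into $T'$. The paper avoids this by bounding the truncation error only on the uniform superposition: it uses the trimmed set $G^{(q)}_{\delta',\mathbf{y}}$ (where $|\mathbf{x}^T\mathbf{y}| \leq \overline r_{G,\delta'}(q)\norm{\mathbf{y}}_q$ holds by definition), and absorbs the contribution of the remaining $\delta'$-fraction of bad points into an additive $4\delta'$ in the squared norm. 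This is precisely the ``error guarantees live only on the uniform superposition'' mechanism you already identify as essential in your last paragraph; you just need to apply it to the truncation step as well. With that adjustment your argument matches the paper's.
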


\begin{proof}
	Let
	\begin{equation}
		\label{eq:Tdelta}
		T_{\delta} = \min\left\{T, \left\lceil T^* \cdot \ln\frac{2}{\delta(1-\gamma)}\right\rceil \right\} = \O\left(T^*\polylog\left(\frac{T^*}{\delta}\right)\right).
	\end{equation}
	Let $s_0 \in S$ and observe that
	\[V(s_0) = \underset{\tau \sim P(T;s_0)}{\E}\left[\sum_{t=0}^T \gamma^t R^{(t)}(s_0, \dots, s_t)\right] = \sum_{t=0}^T \gamma^t \underset{\tau \sim P(t;s_0)}{\E}\left[R^{(t)}(\tau)\right].\]
	Suppose that $T_{\delta} > T$. Since all of the expectations in the right-most expression are bounded by $R_{\max}$ in $\ell_q$-norm, we find that
	\[\norm{V(s_0) - \sum_{t=0}^{T_{\delta}} \gamma^t \underset{\tau \sim P(t;s_0)}{\E}\left[R^{(t)}(\tau)\right]}_q \leq \sum_{t=T_{\delta}+1}^{\infty} \gamma^t \norm{\underset{\tau \sim P(t;s_0)}{\E}\left[R^{(t)}(\tau)\right]}_q \leq \sum_{t=T_{\delta}+1}^{\infty} \gamma^tR_{\max} = \gamma^{T_{\delta}+1}\frac{R_{\max}}{1-\gamma}.\]
	Since $T_{\delta} > T$ by assumption, we also find that $T_{\delta}+1 > \ln(2/(\delta(1-\gamma)))/(1-\gamma)$, and hence
	\begin{align*}
		\gamma^{T_{\delta}+1} &\leq \gamma^{\frac{\ln\frac{2}{\delta(1-\gamma)}}{1-\gamma}} = \exp\left(\frac{\ln(\gamma)}{1-\gamma} \cdot \ln\frac{2}{\delta(1-\gamma)}\right) = \exp\left(\frac{\ln(1 - (1-\gamma))}{1-\gamma} \cdot \ln\frac{2}{\delta(1-\gamma)}\right) \\
		&\leq \exp\left(-\frac{1-\gamma}{1-\gamma} \cdot \ln\frac{2}{\delta(1-\gamma)}\right) = \frac{\delta(1-\gamma)}{2},
	\end{align*}
	which implies that
	\[\norm{V(s_0) - \sum_{t=0}^{T_{\delta}} \gamma^t \underset{\tau \sim P(t;s_0)}{\E}\left[R^{(t)}(\tau)\right]}_q \leq \frac{\delta R_{\max}}{2},\]
	which is also trivially true if $T_{\delta} = T$, since then the left-hand side is $0$ by definition. Hence, we define the operation
	\begin{equation}
		\label{eq:LVtilde}
		\widetilde{L}_{G,V,\delta'} : \frac{1}{\sqrt{|G|}} \sum_{\mathbf{x} \in G} e^{i\varphi(\mathbf{x},s_0)} \ket{\mathbf{x}}\ket{s_0} \mapsto \frac{1}{\sqrt{|G|}} \sum_{\mathbf{x} \in G} e^{i\varphi(\mathbf{x},s_0) + i\frac{\mathbf{x}^T\sum_{t=0}^{T_{\delta}} \gamma^t\underset{\tau \sim P(t;s_0)}{\E} \left[R^{(t)}(\tau)\right]}{2\overline{r}_{G,\delta'}(q)R_{\max}}}\ket{\mathbf{x}}\ket{s_0}.
	\end{equation}
	and we argue that it is close to $\overline{L}_{G,V,\delta'}$. To that end, let
	\[\mathbf{y} = V(s_0) - \sum_{t=0}^{T_{\delta}} \gamma^t \underset{\tau \sim P(t;s_0)}{\E}\left[R^{(t)}(\tau)\right],\]
	and observe that for all $\mathbf{x} \in G_{\delta',\mathbf{y}}^{(q)}$,
	\begin{align*}
		\left|\frac{\mathbf{x}^T \left(V(s_0) - \sum_{t=0}^{T_{\delta}} \gamma^t \underset{\tau \sim P(t;s_0)}{\E} \left[R^{(t)}(\tau)\right]\right)}{2\overline{r}_{G,\delta'}(q)R_{\max}}\right| &\leq \frac{\overline{r}_{G,\delta'}(q) \cdot  \norm{V(s_0) - \sum_{t=0}^{T_{\delta}} \gamma^t \underset{\tau \sim P(t;s_0)}{\E}\left[R^{(t)}(\tau)\right]}_q}{2\overline{r}_{G,\delta'}(q)R_{\max}} \leq \frac{\delta}{4},
	\end{align*}
	which implies that
	\begin{align*}
		&\norm{\left(\overline{L}_{G,V,\delta'} - \widetilde{L}_{G,V,\delta'}\right) \frac{1}{\sqrt{|G|}} \sum_{\mathbf{x} \in G} e^{i\varphi(\mathbf{x},s_0)} \ket{\mathbf{x}}\ket{s_0}}^2 = \frac{1}{|G|}\sum_{\mathbf{x} \in G}  \left|e^{i\frac{\mathbf{x}^TV(s_0)}{2\overline{r}_{G,\delta'}(q)R_{\max}}} - e^{i\frac{\mathbf{x}^T \sum_{t=0}^{T_{\delta}} \gamma^t \underset{\tau \sim P(t;s_0)}{\E}\left[R^{(t)}(\tau)\right]}{2\overline{r}_{G,\delta'}(q)R_{\max}}}\right| \\
		&\quad \leq 4 \cdot \frac{\left|G \setminus G_{\delta',\mathbf{y}}^{(q)}\right|}{|G|} + \frac{1}{|G|} \sum_{\mathbf{x} \in G_{\delta',\mathbf{y}}^{(q)}} \left|\frac{\mathbf{x}^T\left(V(s_0) - \sum_{t=0}^{T_{\delta}} \gamma^t \underset{\tau \sim P(t;s_0)}{\E} \left[R^{(t)}(\tau)\right]\right)}{2\overline{r}_{G,\delta'}(q)R_{\max}}\right|^2 \leq 4\delta' + \frac{\left|G_{\delta',\mathbf{y}}^{(q)}\right|}{|G|} \cdot \frac{\delta^2}{16} \\
		&\quad \leq 4\delta' + \frac{\delta^2}{16}.
	\end{align*}
	Hence, if $\delta' \leq \delta^2/16$, we find that the norm error difference between states produced by $\overline{L}_{G,V,\delta'}$ and $\widetilde{L}_{G,V,\delta'}$ is at most $\delta/2$. Thus, it suffices to show that we can implement $\widetilde{L}_{G,V,\delta'}$ up to norm error $\delta/2$.
	
	If we use the machinery for the exact-depth case, as elaborated upon in \cref{lem:valuefunctionoracleexactdepth}, with the reward function $R^{(t)}$, then for any $t \in [T_{\delta}]_0$ we can implement the following operation
	\begin{equation}
		\label{eq:valuefunctionterm}
		\frac{1}{\sqrt{|G|}} \sum_{\mathbf{x} \in G} e^{i\varphi(\mathbf{x},s_0)} \ket{\mathbf{x}}\ket{s_0} \mapsto \frac{1}{\sqrt{|G|}} \sum_{\mathbf{x} \in G} e^{i\varphi(\mathbf{x},s_0) + i\frac{\mathbf{x}^T\underset{\tau \sim P(t;s_0)}{\E} \left[R^{(t)}(\tau)\right]}{2\overline{r}_{G,\eta'}(q)R_{\max}}}\ket{\mathbf{x}}\ket{s_0}
	\end{equation}
	up to precision $\eta > 0$ with a number of calls to $L_{\overline{G},R^{(t)}}$ and $D_P$ that scales as
	\[\O\left(\frac{r_{\overline{G}}(q)}{\overline{r}_{G,\eta'}(q)} \polylog\left(\frac{1}{\eta}\right)\right), \qquad \text{and} \qquad \O\left(t\polylog\left(\frac{1}{\eta}\right)\right),\]
	respectively, where $\eta' = \Theta(\eta^2\polylog(1/\eta))$, and $|\overline{G}| \geq (1-\eta')|G|$. With $\O(\polylog(T_{\delta}/\delta))$ calls to a perfect execution of the previous operation, it can be turned into a fractional phase oracle with precision $\delta/(4(T_{\delta}+1))$, implementing the operation
	\[\frac{1}{\sqrt{|G|}} \sum_{\mathbf{x} \in G} e^{i\varphi(\mathbf{x},s_0)} \ket{\mathbf{x}}\ket{s_0} \mapsto \frac{1}{\sqrt{|G|}} \sum_{\mathbf{x} \in G} e^{i\varphi(\mathbf{x},s_0) + i\frac{\mathbf{x}^T\gamma^t\underset{\tau \sim P(t;s_0)}{\E}\left[R^{(t)}(\tau)\right]}{2\overline{r}_{G,\eta'}(q)R_{\max}}}\ket{\mathbf{x}}\ket{s_0}.\]
	Hence, applying all these operations consecutively, with $t$ running from $0$ to $T_{\delta}$, implements $\widetilde{L}_{G,V,\delta'}$ up to operator norm error $\delta/4$. This implies that it suffices to choose $\eta = \Theta((\delta/T_{\delta})\polylog(T_{\delta}/\delta))$ to ensure that the resulting error per call to \cref{eq:valuefunctionterm} is at most $\delta/(4(T_{\delta}+1))$, and hence we can choose $\delta' = \eta' = \Theta((\delta/T_{\delta})^2/\polylog(T_{\delta}/\delta))$. With suitable constants, it follows automatically that $\delta' \leq \delta^2/16$, which is what we required in the previous step.
	
	The total number of calls to $L_{\overline{G},R^{(t)}}$ and $D_P$ now becomes
	\[\O\left(T_{\delta}\frac{r_{\overline{G}}(q)}{\overline{r}_{G,\delta'}(q)} \cdot \polylog\left(\frac{T_{\delta}}{\delta}\right)\right), \qquad \text{and} \qquad \O\left(\sum_{t=0}^{T_{\delta}} t\polylog\left(\frac{T_{\delta}}{\delta}\right)\right) = \O\left(T_{\delta}^2\polylog\left(\frac{T_{\delta}}{\delta}\right)\right),\]
	respectively. Finally, the observation that $T_{\delta} = \O(T^*\polylog(T^*/\delta))$ completes the proof.
\end{proof}

Finally, we show how the operation $\overline{L}_{G,V,\delta}$ can be implemented in the path-independent case.

\begin{lemma}[Computation of the value function (path-independent)]
	\label{lem:valuefunctionoraclepathindependent}
	Let $d,T \in \N$, $R_{\max} > 0$, $q \in [1,\infty]$, $\delta > 0$, and $\gamma \in [0,1]$, such that if $T = \infty$, then $\gamma < 1$. Let $S$ a state space, $P : S \times S \to [0,1]$ a probability transition matrix, $R_S : S \to \R^d$ a state-reward function, bounded by $R_{\max}$ in $\ell_q$-norm, and let $G \subseteq \R^d$ be a finite set. Then there exists a $\delta' = \Theta((\delta/T^*)^2/\polylog(T^*/\delta))$ such that the following holds. Suppose that $\overline{G} \subseteq G$ such that $|\overline{G}| \geq (1-\delta')|G|$, and that we have access to the state-reward function by means of lattice oracles $L_{\overline{G},R_S}$, as defined in \cref{eq:LRS}. Then, we can implement the operation $\overline{L}_{G,V,\delta'}$, defined in \cref{eq:LGV}, up to norm error $\delta$ with a number of calls to $D_P$ and $L_{\overline{G},R_S}$ that scale as
	\[\O\left((T^*)^2\polylog\left(\frac{T^*}{\delta}\right)\right), \qquad \text{and} \qquad \O\left(T^*\frac{r_{\overline{G}}(q)}{\overline{r}_{G,\delta'}(q)}\polylog\left(\frac{T^*}{\delta}\right)\right),\]
	respectively, in the limit where $T^*, r_{\overline{G}}(q)/\overline{r}_{G,\delta'}(q) \to \infty$ and $\delta \downarrow 0$.
\end{lemma}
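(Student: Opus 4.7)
The plan is to reduce this lemma to the cumulative-depth case, \cref{lem:valuefunctionoraclecumulativedepth}, by recognizing that the path-independent setting is a special case of the cumulative-depth setting in which the reward functions factor as $R^{(t)}(s_0,\dots,s_t) = \gamma^t R_S(s_t)$, as recorded in \cref{subsec:settings}. Since $\norm{\gamma^t R_S(s_t)}_q \leq \gamma^t R_{\max} \leq R_{\max}$, these $R^{(t)}$'s satisfy the $\ell_q$-boundedness hypothesis of \cref{lem:valuefunctionoraclecumulativedepth}, and the resulting depth-$T$ value function matches the path-independent value function exactly. So the entire task collapses to constructing, for each $t \in [T^*]_0$, a lattice oracle $L_{\overline{G}, R^{(t)}}$ out of the given $L_{\overline{G}, R_S}$ with only polylogarithmic overhead in the precision.

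To build $L_{\overline{G}, R^{(t)}}$, observe that $L_{\overline{G}, R_S}$ applied to the vector register and the \emph{last} state register of a path register implements
\[\ket{\mathbf{x}}\ket{s_0,\dots,s_{t-1},s_t} \mapsto e^{i\frac{\mathbf{x}^T R_S(s_t)}{2r_{\overline{G}}(q)R_{\max}}}\ket{\mathbf{x}}\ket{s_0,\dots,s_{t-1},s_t},\]
without touching the earlier state registers. The desired $L_{\overline{G}, R^{(t)}}$ differs from this merely by the scalar factor $\gamma^t \in [0,1]$ inside the exponent, so I would realize it as the fractional power $L_{\overline{G}, R_S}^{\gamma^t}$ applied to the vector register and the final state register, invoking the fractional phase oracle construction from \cite{GSLW18}, Corollary 72. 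This costs $\O(\polylog(1/\eta))$ queries to $L_{\overline{G}, R_S}$ per implementation, where $\eta$ is the operator-norm precision required by the cumulative-depth construction for each $L_{\overline{G},R^{(t)}}$ call.

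With these building blocks in hand, I would invoke \cref{lem:valuefunctionoraclecumulativedepth} directly. It returns the existence of some $\delta' = \Theta((\delta/T^*)^2 / \polylog(T^*/\delta))$, an implementation of $\overline{L}_{G,V,\delta'}$ up to operator norm $\delta$, a cost of $\O((T^*)^2 \polylog(T^*/\delta))$ calls to $D_P$, and a cost of $\O(T^* r_{\overline{G}}(q)/\overline{r}_{G,\delta'}(q) \polylog(T^*/\delta))$ calls to the lattice oracles $L_{\overline{G}, R^{(t)}}$. Multiplying the latter by the per-call polylogarithmic overhead of the fractional phase oracle step and choosing $\eta = \Theta(\delta / \polylog(T^*/\delta))$ (with the constant tuned so the cumulated error over all invocations stays within $\delta/2$) absorbs cleanly into the existing polylog factor, producing the claimed $\O(T^* r_{\overline{G}}(q)/\overline{r}_{G,\delta'}(q) \polylog(T^*/\delta))$ bound on calls to $L_{\overline{G},R_S}$.

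The only mildly delicate points are purely bookkeeping: verifying that the reduction from path-independent to cumulative-depth preserves the same trimmed-set parameter $\delta'$ (it does, since the same $\overline{G}$ is fed through), and checking that the fractional phase oracle for $\gamma^t$ is well-defined at the endpoint $\gamma^t = 0$ (trivial, since $\gamma = 0$ forces $T^* = 1$ and only the $t=0$ term contributes). I do not anticipate any genuine obstacle here; the lemma is essentially a corollary of \cref{lem:valuefunctionoraclecumulativedepth} once the fractional-power trick is in place.
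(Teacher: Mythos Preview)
Your high-level strategy is exactly the paper's: recognize the path-independent setting as a special case of the cumulative-depth one, manufacture the lattice oracles $L_{\overline{G},R^{(t)}}$ from $L_{\overline{G},R_S}$, and invoke \cref{lem:valuefunctionoraclecumulativedepth}. The paper's proof is three lines long and follows precisely this route.

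There is, however, one slip in your bookkeeping that would make the reduction compute the wrong value function. The cumulative-depth lemma already applies the discount $\gamma^t$ internally (via the fractional phase oracle step in its proof), so the reward functions $R^{(t)}$ it expects must satisfy $V(s_0) = \sum_t \gamma^t\,\E[R^{(t)}]$. In the path-independent case this means you need $R^{(t)}(s_0,\dots,s_t) = R_S(s_t)$, \emph{without} the $\gamma^t$ prefactor. (The displayed definition $R^{(t)} = \gamma^t R_S(s_t)$ in \cref{subsec:settings} is inconsistent with the value-function formula immediately below it; the proof of \cref{lem:valuefunctionoraclepathindependent} makes clear the intended convention is $R^{(t)} = R_S(s_t)$.) If you feed $R^{(t)} = \gamma^t R_S$ into \cref{lem:valuefunctionoraclecumulativedepth} you end up implementing $\overline{L}_{G,V',\delta'}$ for $V'(s_0) = \sum_t \gamma^{2t}\,\E[R_S(s_t)]$, which is not $V$.

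Once you drop the extra $\gamma^t$, your fractional-phase-oracle step becomes unnecessary: a single call to $L_{\overline{G},R_S}$ acting on the vector register and the last state register of the path already implements $L_{\overline{G},R^{(t)}}$ exactly, since
\[
\ket{\mathbf{x}}\ket{s_0}\cdots\ket{s_t} \;\mapsto\; e^{i\frac{\mathbf{x}^T R_S(s_t)}{2r_{\overline{G}}(q)R_{\max}}}\ket{\mathbf{x}}\ket{s_0}\cdots\ket{s_t}
\;=\; e^{i\frac{\mathbf{x}^T R^{(t)}(\tau)}{2r_{\overline{G}}(q)R_{\max}}}\ket{\mathbf{x}}\ket{\tau}.
\]
This is exactly how the paper does it, and it dispenses with the precision management for the fractional power entirely. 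The rest of your argument (same $\delta'$, same complexities) then goes through verbatim.
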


\begin{proof}
	Let $T_{\delta}$ be as in \cref{eq:Tdelta}, and let $t$ be an integer such that $0 \leq t \leq T_{\delta}$. Let $\mathbf{x} \in G$ and $\ket{\tau} = (s_0, \dots, s_t) \in S^{t+1}$. We can represent the path register by $t+1$ state registers. If we call $L_{\overline{G},R_S}$ on the first and last register, then we implement the following mapping:
	\[\ket{\mathbf{x}}\ket{\tau} = \ket{\mathbf{x}}\ket{s_0} \cdots \ket{s_t} \mapsto e^{i\frac{\mathbf{x}^TR_S(s_t)}{2r_{G,\delta'}(q)R_{\max}}}\ket{\mathbf{x}}\ket{s_0} \cdots \ket{s_t} = e^{i\frac{\mathbf{x}^TR^{(t)}(\tau)}{2r_{G,\delta'}(q)R_{\max}}}\ket{\mathbf{x}}\ket{\tau},\]
	which exactly equals the mapping implemented by $L_{\overline{G},R^{(t)}}$. This means we can directly use the machinery from the cumulative-depth case, completing the proof.
\end{proof}

\subsection{Value function estimation}
\label{subsec:valuefunctionestimation}

In this subsection, we show how one can obtain a classical estimate of the value function, if we have access to this function via the operation defined in \cref{eq:LGV}. The method we use was first introduced in \cite{vanApeldoorn20}, page~24, which in turn uses some ideas from the gradient estimation algorithm introduced in \cite{jordan05}, and later developed in \cite{GAW18}.

Concretely, let $d \in \N$, $\varepsilon > 0$, $p \in [1,\infty]$ and $O$ an orthogonal $d \times d$ matrix. If we want to obtain a $\varepsilon$-precise estimate in $\ell_p$-norm of the quantity $OV(s_0)$, then we use the set $G_O$, defined as
\begin{equation}
	\label{eq:GO}
	G_O = \left\{\frac{O^T\mathbf{x}}{2^n} : \mathbf{x} \in \left\{-2^{n-1}+\frac12, -2^{n-1}+\frac32, \dots, 2^{n-1}-\frac12\right\}^d\right\} \quad \text{with} \quad n = \left\lceil\log\left(\frac{24d^{\frac12+\frac1p}}{\varepsilon}\right)\right\rceil.
\end{equation}
This set can be visualized as a hypercubic lattice in $d$ dimensions with side length $1$, rotated by the orthogonal matrix $O^T$. A graphical depiction of such a set when $d = 2$ is supplied in \cref{fig:hypercubiclattice}.

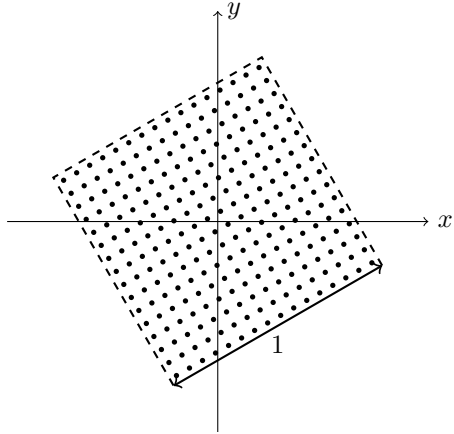
\begin{figure}[h!]
	\centering
	\begin{tikzpicture}[scale=.4]
		\draw[->] (-7,0) -- (7,0) node[right] {$x$};
		\draw[->] (0,-7) -- (0,7) node[right] {$y$};
		\foreach \x in {-3.75,-3.25,...,3.75} {
			\foreach \y in {-3.75,-3.25,...,3.75} {
				\fill ({cos(30)*\x+sin(30)*\y},{sin(30)*\x-cos(30)*\y}) circle[radius=.25em];
			}
		}
		\draw[<->,thick] ({cos(30)*-4+sin(30)*4},{sin(30)*-4-cos(30)*4}) to node[below] {$1$} ({cos(30)*4+sin(30)*4},{sin(30)*4-cos(30)*4});
		\draw[dashed,thick] ({cos(30)*4+sin(30)*4},{sin(30)*4-cos(30)*4}) to ({cos(30)*4+sin(30)*-4},{sin(30)*4-cos(30)*-4}) to ({cos(30)*-4+sin(30)*-4},{sin(30)*-4-cos(30)*-4}) to ({cos(30)*-4+sin(30)*4},{sin(30)*-4-cos(30)*4});
	\end{tikzpicture}
	\caption{Graphical depiction of the set $G_O$, defined in \cref{eq:GO}, in the case where $d = 2$. The side length of the grid is always $1$, and the number of points in every dimension is $2^n$, so in this case $n = 4$. The grid is rotated by the matrix $O^T$, in this case $O$ is a clockwise rotation matrix of $30$ degrees.}
	\label{fig:hypercubiclattice}
\end{figure}

As we already hinted at in the previous sections, it will prove crucial to analyze the different radii of this set $G_O$. Intuitively, the approximate radius $r_{G_O,\delta}(q)$ can be thought of as the radius of the $\ell_q$-ball that encloses exactly a $(1-\delta)$-fraction of $G_O$, and the effective radius $\overline{r}_{G_O,\delta}(q)$ can be thought of as half of the minimal distance between two parallel hyperplanes, both equally far from the origin, such that they chop off exactly a $\delta$-fraction of the grid. We calculate these quantities in the following lemma.

\begin{lemma}[Radii of the grid]
	\label{lem:radii}
	Let $d \in \N$, $\varepsilon > 0$, $p \in [1,\infty]$, $O$ an orthogonal $d \times d$ matrix, and $G_O$ as in \cref{eq:GO}. Then, for all $\delta > 0$,
	\begin{align*}
		r_{G_O,\delta}(q) &\leq d^{1-\frac1q}\sqrt{\frac{\ln\frac{2d}{\delta}}{2}} = \O\left(d^{1-\frac1q}\polylog\left(\frac{1}{\delta}\right)\right),\\
		\overline{r}_{G_O,\delta}(q) &\leq d^{\max\{0,\frac12-\frac1q\}}\sqrt{\frac{\ln\frac{2}{\delta}}{2}} = \O\left(d^{\max\{0,\frac12-\frac1q\}}\polylog\left(\frac{1}{\delta}\right)\right),
	\end{align*}
	where the big-$\O$-notation holds in the limit where $d \to \infty$ and $\delta \downarrow 0$.
\end{lemma}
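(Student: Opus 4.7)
The plan is to exploit the orthogonality of $O$ to reduce both bounds to Hoeffding's inequality applied to the axis-aligned coordinates of the unrotated lattice. Write every $\mathbf{x} \in G_O$ as $\mathbf{x} = O^T \mathbf{u}/2^n$, where $\mathbf{u}$ has coordinates drawn uniformly and independently from $\{-2^{n-1}+\tfrac12,\dots,2^{n-1}-\tfrac12\}$. Setting $X_i = u_i/2^n$, the random variables $X_i$ are i.i.d., mean zero, and take values in $[-\tfrac12,\tfrac12]$. From here, both inequalities reduce to tail bounds for sums of bounded independent random variables.

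For the effective radius $\overline{r}_{G_O,\delta}(q)$, fix $\mathbf{y}$ with $\norm{\mathbf{y}}_q \leq 1$ and write
\[
\mathbf{x}^T \mathbf{y} \;=\; \frac{1}{2^n}\mathbf{u}^T O \mathbf{y} \;=\; \sum_{i=1}^d X_i z_i,
\]
where $\mathbf{z} = O\mathbf{y}$ satisfies $\norm{\mathbf{z}}_2 = \norm{\mathbf{y}}_2$ because $O$ is orthogonal. Each term $X_i z_i$ lies in $[-|z_i|/2, |z_i|/2]$, so Hoeffding's inequality yields $\P[|\mathbf{x}^T\mathbf{y}| \geq t] \leq 2\exp(-2t^2/\norm{\mathbf{y}}_2^2)$. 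Solving $2\exp(-2t^2/\norm{\mathbf{y}}_2^2) \leq \delta$ gives $t \leq \norm{\mathbf{y}}_2\sqrt{\ln(2/\delta)/2}$, and the standard norm inequality $\norm{\mathbf{y}}_2 \leq d^{\max\{0,1/2-1/q\}}\norm{\mathbf{y}}_q$ (using $\norm{\cdot}_2 \leq \norm{\cdot}_q$ for $q\leq 2$ and $\norm{\cdot}_2 \leq d^{1/2-1/q}\norm{\cdot}_q$ for $q\geq 2$) closes the bound uniformly over $\mathbf{y}$ in the unit $\ell_q$-ball.

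For the approximate radius $r_{G_O,\delta}(q)$, I would go through the dual norm via $\ell_\infty$. Writing $\tfrac{1}{1-1/q}$ as the dual exponent $q^\ast$ with $1/q + 1/q^\ast = 1$, the elementary bound $\norm{\mathbf{x}}_{q^\ast} \leq d^{1-1/q}\norm{\mathbf{x}}_\infty$ reduces everything to an $\ell_\infty$ tail bound. Each coordinate $x_j = \sum_i O_{ij} X_i$ is a Hoeffding sum with $\sum_i O_{ij}^2 = 1$ (rows of $O$ are unit vectors), giving $\P[|x_j| \geq t] \leq 2\exp(-2t^2)$. A union bound over $j \in [d]$ then yields $\P[\norm{\mathbf{x}}_\infty \geq t] \leq 2d\exp(-2t^2) \leq \delta$ provided $t \geq \sqrt{\ln(2d/\delta)/2}$, and multiplying by $d^{1-1/q}$ delivers the stated bound.

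There is no real obstacle; the two bounds are essentially one-liners once one recognizes that the rotation by $O^T$ converts the problem into Hoeffding tail estimates on i.i.d.\ bounded coordinates, and that the orthogonality of $O$ keeps $\ell_2$-norms (in the effective-radius case) and row norms (in the approximate-radius case) under control. The only mild subtlety is to make sure that the dualization $\ell_{q^\ast}$-to-$\ell_\infty$ and the norm inequality $\norm{\cdot}_2\leq d^{\max\{0,1/2-1/q\}}\norm{\cdot}_q$ are invoked in the correct direction for all $q\in[1,\infty]$, but both are standard and cover the stated range by case distinction at $q = 2$.
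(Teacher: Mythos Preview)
Your proposal is correct and essentially identical to the paper's own proof: both arguments write $\mathbf{x} = O^T\mathbf{u}/2^n$, apply Hoeffding's inequality to the resulting sums of bounded independent variables, use the norm inequality $\norm{\cdot}_{q^\ast} \leq d^{1-1/q}\norm{\cdot}_\infty$ together with a union bound over the $d$ coordinates for the approximate radius, and use $\norm{O\mathbf{y}}_2 = \norm{\mathbf{y}}_2 \leq d^{\max\{0,1/2-1/q\}}\norm{\mathbf{y}}_q$ for the effective radius. The only cosmetic slip is that the relevant identity $\sum_i O_{ij}^2 = 1$ comes from the \emph{columns} of $O$ having unit norm (though of course the rows do too, so the conclusion is unaffected).
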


\begin{proof}
	For the first statement, observe that it suffices to prove that
	\[\underset{\mathbf{x} \sim \Unif(G_O)}{\P}\left[\norm{\mathbf{x}}_{\frac{1}{1-\frac1q}} \geq d^{1-\frac1q}\sqrt{\frac{\ln\frac{2d}{\delta}}{2}}\right] \leq \delta.\]
	By H\"older's inequality, we have for all $\mathbf{x} \in \R^d$,
	\[\norm{\mathbf{x}}_{\frac{1}{1-\frac1q}} \leq d^{1-\frac1q}\norm{\mathbf{x}}_{\infty},\]
	and hence,
	\[\underset{\mathbf{x} \sim \Unif(G_O)}{\P}\left[\norm{\mathbf{x}}_{\frac{1}{1-\frac1q}} \geq d^{1-\frac1q}\sqrt{\frac{\ln\frac{2d}{\delta}}{2}}\right] \leq \underset{\mathbf{x} \sim \Unif(G_O)}{\P}\left[\norm{\mathbf{x}}_{\infty} \geq \sqrt{\frac{\ln\frac{2d}{\delta}}{2}}\right],\]
	which implies that it suffices to prove the case where $q = 1$. We now observe that, for all $t \geq 0$,
	\begin{align*}
		\underset{\mathbf{x} \sim \Unif(G_O)}{\P}\left[\norm{\mathbf{x}}_{\infty} \geq t\right] &= \underset{\mathbf{x} \sim \Unif(G_I)}{\P}\left[\norm{O^T\mathbf{x}}_{\infty} \geq t\right] = \underset{\mathbf{x} \sim \Unif(G_I)}{\P}\left[\max_{j \in [d]} \left|\sum_{k=1}^d O_{kj}x_k\right| \geq t\right] \\
		&\leq \sum_{j=1}^d \P\left[\left|\sum_{k=1}^d O_{kj}x_k\right| \geq t\right] \leq \sum_{j=1}^d 2\exp\left(-\frac{2t^2}{\sum_{k=1}^d O_{kj}^2}\right) = 2d\exp\left(-2t^2\right),
	\end{align*}
	where we used the union bound, and Hoeffding's inequality. Plugging in $t = \sqrt{\ln(2d/\delta)/2}$ yields
	\[\underset{\mathbf{x} \sim \Unif(G_O)}{\P}\left[\norm{\mathbf{x}}_{\infty} \geq \sqrt{\frac{\ln\frac{2d}{\delta}}{2}}\right] \leq 2d\exp\left(-2 \cdot \frac{\ln\frac{2d}{\delta}}{2}\right) = \delta,\]
	completing the proof of the first statement. 
	
	For the second statement, observe that it is sufficient to prove that for all $\mathbf{y} \in \R^d$, satisfying $\norm{\mathbf{y}}_q \leq 1$,
	\[\underset{\mathbf{x} \sim \Unif(G_O)}{\P}\left[\left|\mathbf{x}^T\mathbf{y}\right| \geq d^{\max\{0,\frac12-\frac1q\}}\sqrt{\frac{\ln\frac{2}{\delta}}{2}}\right] \leq \delta.\]
	To that end, let $\mathbf{y} \in \R^d$ such that $\norm{\mathbf{y}}_q \leq 1$. We again employ Hoeffding's inequality, which this time allows us to obtain that for all $t \geq 0$,
	\[\underset{\mathbf{x} \sim \Unif(G_O)}{\P}\left[\left|\mathbf{x}^T\mathbf{y}\right| \geq t\right] = \underset{\mathbf{x} \sim \Unif(G_I)}{\P}\left[\left|\mathbf{x}^TO\mathbf{y}\right| \geq t\right] \leq 2\exp\left(-\frac{2t^2}{\sum_{j=1}^d (O\mathbf{y})_j^2}\right) = 2\exp\left(-\frac{2t^2}{\norm{O\mathbf{y}}_2^2}\right).\]
	We can bound this further by H\"older's inequality, as
	\[\norm{O\mathbf{y}}_2 = \norm{\mathbf{y}}_2 \leq \norm{\mathbf{y}}_q \cdot \begin{cases}
		d^{\frac12-\frac1q}, & \text{if } q \geq 2, \\
		1, & \text{otherwise}
	\end{cases} \leq d^{\max\{0,\frac12-\frac1q\}}.\]
	By choosing $t = d^{\max\{0,1/2-1/q\}}\sqrt{\ln(2/\delta)/2}$, we find that
	\begin{align*}
		\underset{\mathbf{x} \sim \Unif(G_O)}{\P}\left[\left|\mathbf{x}^T\mathbf{y}\right| \geq d^{\max\{0,\frac12-\frac1q\}}\sqrt{\frac{\ln\frac{2}{\delta}}{2}}\right] &\leq 2\exp\left(-\frac{2d^{\max\{0,1-\frac2q\}}}{d^{\max\{0,1-\frac2q\}}} \cdot \frac{\ln\frac{2}{\delta}}{2}\right) = \delta,
	\end{align*}
	completing the proof.
\end{proof}

Note that there is a profound difference between the approximate and effective radius of $G_O$ -- if $q \geq 2$, the difference is in the order of $\sqrt{d}$. In \cref{lem:valuefunctionoracleexactdepth}, we were able to exploit these differences in the query complexity to $D_P$, but not in the query complexity to the reward oracles. Since in \cref{sec:MMClowerbounds} we present matching lower bounds for all values of $q$, we find that this barrier is fundamental, there indeed exists no other trick to reduce the query complexity to the reward oracles in \cref{lem:valuefunctionoracleexactdepth} in full generality.

Now that we have analyzed the radii of the grid $G_O$, we turn to the algorithm that approximates the value function. The idea for this algorithm stems from \cite{jordan05}, and is the fundamental idea behind \cite{GAW18} and \cite{Cor18}, Chapter 4.

\begin{lemma}[Value function estimation]
	\label{lem:valuefunctionestimationlattice}
	Let $d \in \N$, $0 < \varepsilon < R_{\max}$, $p,q \in [1,\infty]$, $\delta > 0$, $O$ a $d \times d$ orthogonal matrix, and $G_O$ as in \cref{eq:GO}. Let
	\[M = \left\lceil\frac{32\pi \overline{r}_{G_O,\delta}(q)R_{\max}d^{\frac1p}}{\varepsilon}\right\rceil,\]
	and let $V(s_0) \in \R^d$ be accessible through $\overline{L}_{G_O,V,\delta}$. Then, we can compute a vector $\mathbf{v} \in \R^d$ such that
	\[\norm{\mathbf{v} - OV(s_0)}_p \leq \varepsilon,\]
	with probability at least $5/6$, with a number of calls to $\overline{L}_{G_O,V,\delta}$ that scales as
	\[\widetilde{\O}\left(M\right) = \widetilde{\O}\left(\frac{\overline{r}_{G_O,\delta}(q)R_{\max}d^{\frac1p}}{\varepsilon}\right) = \widetilde{\O}\left(\frac{R_{\max}}{\varepsilon}d^{\frac1p+\max\{0,\frac12-\frac1q\}}\right), \qquad \left(R_{\max},d \to \infty, \;\; \varepsilon \downarrow 0\right),\]
	where the tilde hides polylogarithmic factors in $R_{\max}$, $d$, $1/\varepsilon$.
\end{lemma}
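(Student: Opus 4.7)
The plan is to follow the Jordan-style gradient-estimation paradigm from \cite{jordan05,GAW18,vanApeldoorn20}: prepare a uniform superposition over the rotated hypercubic lattice $G_O$, build up a linear phase in the entries of $OV(s_0)$ by invoking $\overline{L}_{G_O,V,\delta}$ a total of $M$ times, and then extract the $d$ coordinates via a dimension-wise inverse QFT. Concretely, I would first prepare $|G_O|^{-1/2}\sum_{\mathbf{x}\in G_O}\ket{\mathbf{x}}\ket{s_0}$; writing $G_O=\{O^T\mathbf{y}/2^n:\mathbf{y}\in G_I\}$ for $G_I=\{-2^{n-1}+\tfrac12,\ldots,2^{n-1}-\tfrac12\}^d$, this is simply Hadamard-like preparation on $d$ registers of $n$ qubits each, followed by a relabeling by the orthogonal transform, which commutes with every subsequent operation because it only relabels vector-register basis states. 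Applying $\overline{L}_{G_O,V,\delta}$ $M$ times then yields a state in which the phase attached to $\ket{\mathbf{x}} = \ket{O^T\mathbf{y}/2^n}$ equals $\exp(2\pi i \mathbf{y}^T\bm{\mu}/2^n)$ with $\bm{\mu}$ proportional to $M\,OV(s_0)/\overline{r}_{G_O,\delta}(q)$. Because the resulting state is a tensor product across the $d$ coordinate subregisters (viewed in integer coordinates), an inverse QFT on each subregister followed by a computational-basis measurement returns integer estimates $\hat{\mu}_j$ of each $\mu_j$, from which a classical rescaling recovers coordinate-wise estimates $\hat{v}_j$ of $(OV(s_0))_j$.

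The error analysis proceeds in two stages. Standard phase-estimation analysis (cf.\ \cite{GAW18}, Appendix A) guarantees $|\hat{\mu}_j-\mu_j|\leq 1$ modulo $2^n$ with per-coordinate probability at least $8/\pi^2$, \emph{provided} $|\mu_j|\leq 2^{n-1}$ so that no aliasing occurs; the explicit choice of $n$ in \cref{eq:GO}, together with $\norm{V(s_0)}_q\leq R_{\max}$ and H\"older's inequality, ensures this non-wrapping condition. Rescaling then yields an $\ell_\infty$-error of $\O(\overline{r}_{G_O,\delta}(q)R_{\max}/M)$ on $\hat{\mathbf{v}}$, which by the choice of $M$ is at most $\varepsilon/(8d^{1/p})$; passing to the $\ell_p$-norm via $\norm{\cdot}_p\leq d^{1/p}\norm{\cdot}_\infty$ delivers the required $\varepsilon$-precision. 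To lift the constant per-coordinate success probability into a joint $5/6$ success probability over all $d$ coordinates, I would repeat the whole procedure $\O(\log d)$ times and take the coordinate-wise median: a Chernoff bound boosts each marginal to $1-\Theta(1/d)$, and a union bound over the $d$ coordinates closes the argument. Substituting the bound $\overline{r}_{G_O,\delta}(q)=\O(d^{\max\{0,1/2-1/q\}}\polylog(1/\delta))$ from \cref{lem:radii} then turns $\widetilde{\O}(M)$ into $\widetilde{\O}(R_{\max}d^{1/p+\max\{0,1/2-1/q\}}/\varepsilon)$.

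The technical care concentrates in the non-wrapping check and in the claim that the $d$ coordinate QFTs act independently: once the state is written in the integer-lattice basis the phases decouple into a tensor product, so the marginal on each coordinate register is literally the outcome of one-dimensional phase estimation, and the \cite{GAW18} analysis applies coordinate by coordinate. A secondary subtlety worth verifying is that $\overline{L}_{G_O,V,\delta}$ is only promised to act correctly on uniform superpositions over $G_O$, rather than on each lattice basis state individually; this is not an obstacle because after each of the $M$ invocations the state still lies in that class --- only the coordinate-wise phase pattern $\varphi(\mathbf{x},s_0)$ in \cref{eq:LGV} evolves --- so the guarantee of \cref{eq:LGV} applies unchanged at every step.
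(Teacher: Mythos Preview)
Your proposal is correct and follows essentially the same approach as the paper: prepare the uniform superposition over $G_O$, apply $\overline{L}_{G_O,V,\delta}$ a total of $M$ times, rewrite in the integer-lattice basis to obtain a tensor product across the $d$ coordinate registers, run inverse QFTs coordinate-wise, verify the non-wrapping condition via the choice of $n$ in \cref{eq:GO}, rescale, and then boost the per-coordinate constant success probability via $\O(\log d)$ repetitions with coordinate-wise medians and a union bound. The only cosmetic differences are that the paper cites \cite{NC00}, Equation~(5.34), for the per-coordinate phase-estimation guarantee (stated as $|b_j-\mu_j|\leq 4$ with probability $\geq 5/6$) rather than \cite{GAW18}, and your explicit observation that $\overline{L}_{G_O,V,\delta}$ is only guaranteed to act correctly on uniform superpositions---and that this class is preserved after each call---is a point the paper leaves implicit.
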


\begin{proof}
	We will act on a vector register encoding the vectors from $G_O$, as defined in \cref{eq:GO}, and a state register, starting in state
	\[\frac{1}{\sqrt{2^{nd}}}\sum_{\mathbf{x} \in G_O} \ket{\mathbf{x}}\ket{s_0}.\]
	Next, we call the lattice oracle $\overline{L}_{G_O,V,\delta}$ a total of $M$ times, which prepares the state
	\begin{align*}
	\frac{1}{\sqrt{2^{nd}}} \sum_{\mathbf{x} \in G_O} e^{iM\frac{\mathbf{x}^T V(s_0)}{2\overline{r}_{G_O,\delta}(q)R_{\max}}}\ket{\mathbf{x}} \ket{s_0} &= \frac{1}{\sqrt{2^{nd}}} \sum_{\mathbf{y} \in \{-2^{n-1} + \frac12, \dots, 2^{n-1} - \frac12\}^d} e^{\frac{2\pi i}{2^n} \cdot \mathbf{y}^T\frac{MOV(s_0)}{4\pi \overline{r}_{G_O,\delta}(q)R_{\max}}} \ket{\mathbf{y}} \ket{s_0} \\
	&= \bigotimes_{j=1}^d \left[\frac{1}{\sqrt{2^n}} \sum_{y_j=-2^{n-1}+\frac12}^{2^{n-1}-\frac12} e^{\frac{2\pi i}{2^n} \cdot y_j \cdot \frac{M(OV(s_0))_j}{4\pi \overline{r}_{G_O,\delta}(q)R_{\max}}}\ket{y_j}\right]\ket{s_0}.
	\end{align*}
	Next, we run the inverse quantum Fourier transform on $n$ qubits on each of the $d$ individual parts of the above tensor product separately, after which we measure each of the $n$-qubit states. We interpret the outcomes as signed $n$-bit integers $b_j$, with $j \in [d]$, which we bundle together in a vector $\mathbf{b} \in \R^d$. Since for all $j \in [d]$,
	\[\left|\frac{M(OV(s_0))_j}{4\pi \overline{r}_{G_O^{(q)},\delta}(q)R_{\max}}\right| \leq \frac{32\pi \overline{r}_{G_O^{(q)},\delta}(q)R_{\max}d^{\frac1p}}{4\pi \overline{r}_{G_O^{(q)},\delta}(q)R_{\max}\varepsilon}\left|(UV(s_0))_j\right| = \frac{8d^{\frac12+\frac1p}}{\varepsilon} = \frac132^{\log\left(\frac{24d^{\frac12+\frac1p}}{\varepsilon}\right)} \leq \frac13 \cdot 2^n,\]
	the analysis of the inverse quantum Fourier transform, \cite{NC00}, Equation (5.34), now implies that for each $j \in [d]$,
	\[\P\left[\left|b_j - \frac{M(OV(s_0))_j}{4\pi \overline{r}_{G_O^{(q)},\delta}(q)R_{\max}}\right| \leq 4\right] \geq \frac56.\]
	Next, we let
	\[\mathbf{v} = \frac{4\pi \overline{r}_{G_O^{(q)},\delta}(q)R_{\max}}{M}\mathbf{b},\]
	which implies for all $j \in [d]$ that
	\[\left|v_j - (OV(s_0))_j\right| = \frac{4\pi \overline{r}_{G_O^{(q)},\delta}(q)R_{\max}}{M}\left|b_j - \frac{M(OV(s_0))_j}{4\pi \overline{r}_{G_O^{(q)},\delta}(q)R_{\max}}\right| \leq \frac{\varepsilon}{4d^{\frac1p}}\left|b_j - \frac{M(OV(s_0))_j}{4\pi \overline{r}_{G_O^{(q)},\delta}(q)R_{\max}}\right|,\]
	and hence for all $j \in [d]$,
	\[\P\left[\left|v_j - (OV(s_0))_j\right| \leq \frac{\varepsilon}{d^{\frac1p}}\right] \geq \P\left[\left|b_j - \frac{M(OV(s_0))_j}{4\pi \overline{r}_{G_O^{(q)},\delta}(q)R_{\max}}\right| \leq 4\right] \geq \frac56.\]
	Finally, we let
	\[N = \left\lceil18\ln(6d)\right\rceil,\]
	and we run the above procedure $N$ times, generating vectors $\mathbf{v}_1, \dots, \mathbf{v}_N$. We take the coordinate-wise median of all these vectors, which we call $\mathbf{v}$. For any $j \in [d]$, let $X_k$ be the random variable denoting whether the $j$th entry of $\mathbf{v}_k$ approximates $(OV(s_0))_j$ with precision $\varepsilon/d^{1/p}$. By the Hoeffding bound, we find that
	\begin{align*}
	\P\left[\sum_{j=1}^N X_j \leq \frac{N}{2}\right] &\leq \P\left[\left|\sum_{j=1}^N X_j - N\E\left[X_j\right]\right| \geq \left|\frac{N}{2} - N\E\left[X_j\right]\right|\right] \leq \exp\left(-2N\left|\frac12 - \E\left[X_j\right]\right|^2\right) \\
	&\leq \exp\left(-\frac{N}{18}\right) \leq \exp\left(-\ln(6d)\right) = \frac{1}{6d},
	\end{align*}
	and hence with probability at least $1-1/(6d)$, more than half of the approximations to the $j$th entry of $OV(s_0)$ are $\varepsilon/d^{1/p}$-close. This means that the median is also $\varepsilon/d^{1/p}$-close, and hence by the union bound we find that with probability at least $5/6$ all entries of $\mathbf{v}$ are this close to $OV(s_0)$, so
	\[\norm{\mathbf{v} - OV(s_0)}_p \leq d^{\frac1p}\norm{\mathbf{v} - OV(s_0)}_\infty \leq \varepsilon.\]
	Moreover, this step only adds a multiplicative factor of $\O(N)$, which is logarithmic in $d$, to the number of calls to $\overline{L}_{G_O,V,\delta}$ we perform to generate one vector $\mathbf{v}_j$. Thus, the total query complexity is
	\[\widetilde{\O}\left(M\right) = \widetilde{\O}\left(\frac{\overline{r}_{G_O,\delta}(q)R_{\max}d^{\frac1p}}{\varepsilon}\right) = \widetilde{\O}\left(\frac{R_{\max}}{\varepsilon} d^{\frac1p + \max\{0, \frac12-\frac1q\}}\right).\]
	In the final equality, we used the asymptotic complexity of $\overline{r}_{G_O,\delta}$, derived in \cref{lem:radii}. This completes the proof.
\end{proof}

\subsection{Results}
\label{subsec:algorithmresults}

In this subsection, we stitch all the constructions from \cref{subsec:rewardoracleconversions,subsec:valueoracleconstruction,subsec:valuefunctionestimation} together into one theorem statement that lists all the query complexities for all possible combinations of oracle type, setting, and approximation parameters. We present matching lower bounds in \cref{sec:MMClowerbounds}, up to polylogarithmic factors, for all these complexities, implying that the techniques we presented in this section are essentially optimal.

\begin{theorem}[Multivariate Monte Carlo estimation algorithms]
    \label{thm:mvmc-alg}
Let $d \in \N$, $0 < \varepsilon < R_{\max}$, $\gamma \in [0,1]$, $p,q \in [1,\infty]$, and $T \in \N \cup \{\infty\}$ such that if $T = \infty$, then $\gamma < 1$. Let $S$ be a state space, $P : S \times S \to [0,1]$ a probability transition matrix and $O$ a $d \times d$ orthogonal matrix. The query complexities of the quantum algorithms solving the multivariate Monte Carlo estimation problem rotated by $O$ up to error $\varepsilon$ in $\ell_p$-norm in the exact-depth, cumulative-depth and path-independent cases are listed in the table below:
	\begin{center}
		\begin{tabular}{r|c|c}
			Case & Calls to reward oracle & Calls to probability transition oracle \\\hline
			Exact-depth & $\displaystyle\widetilde{\O}\left(\frac{R_{\max}}{\varepsilon}d^{\frac1p} \cdot \xi(d,q)\right)$ & $\displaystyle\widetilde{\O}\left(T\frac{R_{\max}}{\varepsilon}d^{\frac1p+\max\{0,\frac12-\frac1q\}}\right)$ \\
			Cumulative-depth & $\displaystyle\widetilde{\O}\left(T^*\frac{R_{\max}}{\varepsilon}d^{\frac1p} \cdot \xi(d,q)\right)$ & $\displaystyle\widetilde{\O}\left((T^*)^2\frac{R_{\max}}{\varepsilon}d^{\frac1p+\max\{0,\frac12-\frac1q\}}\right)$ \\
			Path-independent & $\displaystyle\widetilde{\O}\left(T^*\frac{R_{\max}}{\varepsilon}d^{\frac1p} \cdot \xi(d,q)\right)$ & $\displaystyle\widetilde{\O}\left((T^*)^2\frac{R_{\max}}{\varepsilon}d^{\frac1p+\max\{0,\frac12-\frac1q\}}\right)$
		\end{tabular}
	\end{center}
	where $\xi(d,q)$ depends on the specific access model to the reward oracle that we have:
	\begin{center}
		\begin{tabular}{r|c}
			Access model & $\xi(d,q)$ \\\hline
			Phase oracle & $d$ \\
			Probability oracle & $d^{1-\frac{1}{2q}}$ \\
			Distribution oracle & $d^{1-\frac1q}$ \\
			Lattice oracle on $G_O$ & $d^{1-\frac1q}$
		\end{tabular}
	\end{center}
	The big-$\O$-notation is in the limit where $T^{(*)},R_{\max},d \to \infty$ and $\varepsilon \downarrow 0$, the tilde hides polylogarithmic factors in $T^{(*)}$, $R_{\max}$, $d$ and $1/\varepsilon$.
\end{theorem}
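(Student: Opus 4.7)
The plan is to chain together the three building blocks of this section in reverse order: first \cref{lem:valuefunctionestimationlattice}, which converts access to an approximate lattice oracle for $V$ into the desired classical estimate; then \cref{lem:valuefunctionoracleexactdepth,lem:valuefunctionoraclecumulativedepth,lem:valuefunctionoraclepathindependent} (one per setting), which build such an oracle from an appropriate reward-lattice oracle and the transition oracle $D_P$; and finally \cref{lem:phasetolattice,lem:probabilitytolattice,lem:distributiontolattice}, which convert the given reward oracle into a reward-lattice oracle on a trimmed subset (this last step is trivial when access is already of lattice type on $G_O$).

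At the outermost layer I would apply \cref{lem:valuefunctionestimationlattice} with $G = G_O$ and $\delta$ set to a sufficiently small constant, which produces an $\varepsilon$-precise estimate in $\ell_p$-norm with $\widetilde{\O}\!\left(R_{\max}\varepsilon^{-1} d^{1/p + \max\{0,1/2 - 1/q\}}\right)$ queries to $\overline{L}_{G_O,V,\delta}$; the $d$-exponent comes directly from the effective-radius bound $\overline{r}_{G_O,\delta}(q) = \widetilde{\O}\!\left(d^{\max\{0,1/2 - 1/q\}}\right)$ of \cref{lem:radii}. To implement $\overline{L}_{G_O,V,\delta}$, I would invoke the relevant value-function-oracle lemma on the intersected trimmed set $\overline{G} = (G_O)^{(q)}_{\delta_1'} \cap (G_O)^{(1)}_{\delta_2'}$, which by \cref{lem:trimmedgrids} satisfies $|\overline{G}| \geq (1 - \delta_1' - \delta_2')|G_O|$ and inherits from \cref{lem:radii} the bounds $r_{\overline{G}}(q) = \widetilde{\O}(d^{1-1/q})$, $r_{\overline{G}}(1) = \widetilde{\O}(1)$, and $r_{\overline{G}}(\infty) = \widetilde{\O}(d)$. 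The per-call ratio $r_{\overline{G}}(q)/\overline{r}_{G_O,\delta'}(q) = \widetilde{\O}\!\left(d^{1-1/q-\max\{0,1/2 - 1/q\}}\right)$ then combines with the outer query count to give a reward-lattice-oracle exponent of $1/p + 1 - 1/q$, augmented by a factor $T^*$ in the cumulative-depth and path-independent cases.

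At the innermost layer, plugging the above radii into \cref{lem:phasetolattice,lem:probabilitytolattice,lem:distributiontolattice} yields per-call overheads of $\widetilde{\O}(d^{1/q})$, $\widetilde{\O}(d^{1/(2q)})$, and $\widetilde{\O}(1)$ for phase, probability, and distribution access respectively; the last bound is precisely the payoff of having intersected with the $\ell_\infty$-trimmed set $(G_O)^{(1)}_{\delta_2'}$, as it collapses the factor $d^{1-1/q}\, r_{\overline{G}}(1)/r_{\overline{G}}(q)$ inside the square root of \cref{lem:distributiontolattice} to a polylogarithm. Multiplying these three overheads (and the trivial overhead of $1$ for direct lattice-oracle access) into the count from the middle layer reproduces the claimed $\xi(d,q) \in \{d,\,d^{1-1/(2q)},\,d^{1-1/q},\,d^{1-1/q}\}$. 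The $D_P$-complexity is obtained in the same way by multiplying the outer count against the $\O(T)$ (exact-depth) or $\O((T^*)^2)$ (cumulative-depth and path-independent) per-invocation $D_P$-count guaranteed by the middle-layer lemmas.

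The main technical obstacle I anticipate is the downward propagation of the operator-norm tolerance through the three nested layers. Each middle-layer lemma relates its input tolerance $\delta'$ to the output tolerance $\delta$ only existentially, via $\delta' = \Theta(\delta^2/\polylog(1/\delta))$, so there is no closed-form expression to track inward. The cleanest fix is to choose the outermost $\delta$ as a small enough constant that the operator-norm perturbation from an imperfect $\overline{L}_{G_O,V,\delta}$ degrades the $5/6$ success probability of \cref{lem:valuefunctionestimationlattice} by at most $1/6$, leaving the required $2/3$ guarantee intact, and then to propagate the $\delta'_k$'s inward existentially; since compositions of polylogarithms remain polylogarithmic, these slacks show up only in the $\polylog$ factors absorbed by the $\widetilde{\O}$-notation and never contaminate the leading $d$-exponents displayed in the table.
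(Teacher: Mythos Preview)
Your proposal is correct and follows essentially the same three-layer chaining as the paper's own proof: \cref{lem:valuefunctionestimationlattice} on top, then the setting-specific lemma from \cref{subsec:valueoracleconstruction}, then an oracle-conversion lemma from \cref{subsec:rewardoracleconversions}, with the intermediate radii cancelling to leave the displayed $d$-exponents. The only cosmetic difference is your choice of trimmed set: you take $\overline{G}=(G_O)^{(q)}_{\delta_1'}\cap(G_O)^{(1)}_{\delta_2'}$ uniformly, whereas the paper illustrates the phase-oracle case with $(G_O)^{(q)}_{\delta'/2}\cap(G_O)^{(\infty)}_{\delta'/2}$ and remarks that other oracle types may require different choices; your single choice in fact covers all four access models at once (via $r_{\overline{G}}(\infty)\le d\,r_{\overline{G}}(1)$), which is a mild simplification.
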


\begin{proof}
	All the claimed complexities follow from combining \cref{lem:valuefunctionestimationlattice} with the result for the specific setting from \cref{subsec:valueoracleconstruction}, and possibly the result from \cref{subsec:rewardoracleconversions} corresponding to the given oracle type. We illustrate the proof method in the exact-depth setting, when we have access to the reward function via a phase oracle, and omit the others since they follow the exact same reasoning.
	
	Let $\delta,\delta',\eta,\eta' > 0$, and let
	\[\overline{G}_{\delta'} = (G_O)_{\delta'/2}^{(q)} \cap (G_O)_{\delta'/2}^{(\infty)}.\]
	It follows immediately that
	\[\frac{\left|G_O \setminus \overline{G}_{\delta'}\right|}{|G_O|} \leq \frac{\left|G_O \setminus (G_O)_{\delta'/2}^{(q)}\right| + \left|G_O \setminus (G_O)_{\delta'/2}^{(\infty)}\right|}{|G_O|} \leq \frac{\delta'}{2} + \frac{\delta'}{2} = \delta',\]
	and hence $|\overline{G}_{\delta'}| \geq (1-\delta')|G_O|$. We employ \cref{lem:phasetolattice} to implement the lattice oracle $L_{\overline{G}_{\delta'},R}$ up to norm error $\delta$ with a number of calls to $O_R$ that scales as
	\[\O\left(\frac{r_{\overline{G}_{\delta'}}(\infty)}{r_{\overline{G}_{\delta'}}(q)}\polylog\left(\frac{r_{\overline{G}_{\delta'}}(\infty)}{r_{\overline{G}_{\delta'}}(q)\delta}\right)\right).\]
	By \cref{lem:valuefunctionoracleexactdepth}, we can implement $\overline{L}_{G_O,V,\eta'}$ up to norm error $\eta$, where $\eta' = \Theta(\eta^2/\polylog(1/\eta))$, using a number of calls to $D_P$ and $L_{\overline{G}_{\eta'},R}$ that scales as
	\[\O\left(T\polylog\left(\frac{1}{\eta}\right)\right), \qquad \text{and} \qquad \O\left(\frac{r_{\overline{G}_{\eta'}}(q)}{\overline{r}_{G_O,\eta'}(q)}\polylog\left(\frac{1}{\eta}\right)\right),\]
	respectively. Finally, by \cref{lem:valuefunctionestimationlattice}, we can calculate the value function from $\overline{L}_{G,V,\eta'}$, using a number of calls that scales as
	\[\widetilde{\O}\left(\frac{\overline{r}_{G_O,\eta'}(q)R_{\max}d^{\frac1p}}{\varepsilon}\right).\]
	Hence, we can choose $\delta' = \eta'$, $\eta = \widetilde{\Theta}(\varepsilon/(\overline{r}_{G_O,\eta'}(q)R_{\max}d^{1/p}))$, and $\delta = \widetilde{\Theta}(\eta\overline{r}_{G_O,\eta'}(q)/r_{\overline{G}_{\delta'}}(q))$, such that the total norm error that we make is upper bounded by $1/6$, which implies that the total success probability of the algorithm decreases at most from $5/6$ to $2/3$, as is for instance proven in \cite{Cor18}, Appendix B. Moreover, the resulting query complexity to $O_R$ becomes
	\begin{align*}
		\widetilde{\O}\left(\frac{\overline{r}_{G_O,\eta'}(q)R_{\max}d^{\frac1p}}{\varepsilon} \cdot \frac{r_{\overline{G}_{\delta'}}(q)}{\overline{r}_{G_O,\eta'}(q)} \cdot \frac{r_{\overline{G}_{\delta'}}(\infty)}{r_{\overline{G}_{\delta'}}(q)}\right) &= \widetilde{\O}\left(\frac{R_{\max}}{\varepsilon}d^{\frac1p} \cdot r_{\overline{G}_{\delta'}}(\infty)\right) = \widetilde{\O}\left(\frac{R_{\max}}{\varepsilon}d^{\frac1p} \cdot r_{G_O,\frac{\delta'}{2}}(\infty)\right) \\
		&= \widetilde{\O}\left(\frac{R_{\max}}{\varepsilon}d^{\frac1p+1}\right),
	\end{align*}
	where we used that since $\overline{G}_{\delta'} \subseteq (G_O)_{\delta'/2}^{(\infty)}$, it follows from \cref{lem:trimmedgrids} that
	\[r_{\overline{G}_{\delta'}}(\infty) \leq r_{(G_O)_{\delta'/2}^{(\infty)}}(\infty) = r_{G_O,\frac{\delta'}{2}}(\infty).\]
	Finally, we can also analyze the query complexity to $D_P$, which becomes
	\[\widetilde{\O}\left(\frac{\overline{r}_{G_O,\eta'}(q)R_{\max}d^{\frac1p}}{\varepsilon} \cdot T\right) = \widetilde{\O}\left(T\frac{R_{\max}}{\varepsilon}d^{\frac1p + \max\{0,\frac12-\frac1q\}}\right).\]
	The proofs for all other cases follow similarly, but with slightly modified constants, and possibly different choices for $\overline{G}_{\delta'}$. This completes the proof.
\end{proof}

\section{Lower bounds}
\label{sec:MMClowerbounds}

In this section, we provide lower bounds on the query complexities of algorithms that solve the multivariate Monte Carlo estimation problem. First, we show that solving the multivariate Monte Carlo estimation problem is at least as hard as producing high-overlap bit strings, and lower bound the hardness of this problem when one has regular phase oracle access to this bit string, in \cref{subsec:highoverlapreduction}. After that, we define specific instances of the multivariate Monte Carlo estimation problem and relate their hardness to the hardness of the aforementioned high-overlap bit-string problem. In \cref{subsec:LBR}, we tailor these instances to arrive at lower bounds for the query complexity to the reward oracle, and in \cref{subsec:LBP}, we focus on the query complexity to the probability transition oracle.

\subsection{Reduction to producing high-overlap bit strings}
\label{subsec:highoverlapreduction}

First, we introduce the concept of high-overlap bit strings. To that end, let $d \in \N$ and $O \in \R^{d \times d}$ an orthogonal matrix. We say that $\mathbf{b}, \mathbf{b}^* \in \{0,1\}^d$ have high overlap if $\norm{O(\mathbf{b} - \mathbf{b}^*)}_1 \leq d/4$. In particular, if $O$ is the identity matrix, this condition implies that $\mathbf{b}$ and $\mathbf{b}^*$ disagree in at most a quarter of the bits, so here the notion of high overlap makes intuitive sense. We will use the same terminology in the case where $O$ is an arbitrary orthogonal matrix, even though in this case $\mathbf{b}$ and $\mathbf{b}^*$ might differ in many more bits than just $d/4$.

\cref{lem:smallell1distancereduction} shows that if we have any algorithm $\A$ that solves the multivariate Monte Carlo estimation problem on a particular family of instances indexed by a bit string $\mathbf{b} \in \{0,1\}^d$, we are able to produce a high-overlap bit string $\mathbf{b}^*$ with just one run of $\A$.

\begin{lemma}[Reduction to producing a high-overlap bit string]
	\label{lem:smallell1distancereduction}
	Let $d \in \N$, $O,O' \in \R^{d \times d}$ orthogonal matrices, $S$ a state space, $s_0 \in S$ an initial state, and $\mathbf{y} \in \R^d$. Suppose that for every $\mathbf{b} \in \{0,1\}^d$, we have an instance of the multivariate Monte Carlo estimation problem (in any setting) whose value function, denoted by $V^{(\mathbf{b})}$, equals
	\[V^{(\mathbf{b})}(s_0) = \mathbf{y} + \frac{8\varepsilon}{d}O'\mathbf{b}.\]
	Then, using a single call to any algorithm $\A$ that solves the multivariate Monte Carlo estimation problem rotated by $O$ up to precision $\varepsilon$ in $\ell_1$-norm with high probability, we are able to construct an algorithm $\B$ that when run on the instance labeled by $\mathbf{b} \in \{0,1\}^d$, with high probability produces a bit string $\mathbf{b}^* \in \{0,1\}^d$ that satisfies
	\[\norm{OO'(\mathbf{b} - \mathbf{b}^*)}_1 \leq \frac{d}{4}.\]
\end{lemma}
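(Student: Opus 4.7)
The plan is to run $\A$ exactly once on the instance indexed by $\mathbf{b}$, undo the known affine shift so that the obtained vector approximates $OO'\mathbf{b}$ directly, and then decode a bit string $\mathbf{b}^*$ by exhaustive classical post-processing. Since only the query complexity to the underlying oracles matters for the downstream lower bounds, $\B$ is allowed to perform arbitrary (inefficient) classical computation after the single call to $\A$.

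Concretely, applying $\A$ to the instance labeled by $\mathbf{b}$ yields, with high probability, a vector $\mathbf{v} \in \R^d$ with $\norm{\mathbf{v} - OV^{(\mathbf{b})}(s_0)}_1 \leq \varepsilon$. Since $OV^{(\mathbf{b})}(s_0) = O\mathbf{y} + (8\varepsilon/d)\,OO'\mathbf{b}$ and the data $O$, $O'$, $\mathbf{y}$, $\varepsilon$, $d$ are known to $\B$, the reduction first computes
\[
\mathbf{w} \;:=\; \frac{d}{8\varepsilon}\left(\mathbf{v} - O\mathbf{y}\right),
\]
so that rescaling the original $\ell_1$-error gives $\norm{\mathbf{w} - OO'\mathbf{b}}_1 \leq d/8$. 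Then $\B$ sets $\mathbf{b}^* := \argmin_{\mathbf{b}' \in \{0,1\}^d} \norm{\mathbf{w} - OO'\mathbf{b}'}_1$. Because $\mathbf{b}$ itself is a feasible candidate achieving objective value at most $d/8$, the minimizer $\mathbf{b}^*$ satisfies $\norm{\mathbf{w} - OO'\mathbf{b}^*}_1 \leq d/8$ as well, and the triangle inequality then closes the argument:
\[
\norm{OO'(\mathbf{b} - \mathbf{b}^*)}_1 \;\leq\; \norm{OO'\mathbf{b} - \mathbf{w}}_1 + \norm{\mathbf{w} - OO'\mathbf{b}^*}_1 \;\leq\; \frac{d}{8} + \frac{d}{8} \;=\; \frac{d}{4}.
\]

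There is no substantial obstacle to this reduction: the proof is essentially a rescale-plus-triangle-inequality argument, and the only point to watch is that $\B$ invokes $\A$ only once, so the success probability of $\A$ transfers directly to $\B$ with no union-bound loss. I would briefly remark at the end that, since the ``decoding'' step is purely classical post-processing of the output of $\A$, the query complexity of $\B$ to the instance's oracles equals that of $\A$, which is exactly what is needed to feed this reduction into the bit-string lower bounds of the subsequent subsections.
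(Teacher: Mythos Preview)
Your proof is correct and essentially identical to the paper's: both run $\A$ once, take the $\argmin$ over all $\mathbf{b}' \in \{0,1\}^d$ of the $\ell_1$-distance to the (affinely transformed) output, and conclude via minimality plus the triangle inequality. The only cosmetic difference is that you first normalize to $\mathbf{w}$ and then minimize $\norm{\mathbf{w} - OO'\mathbf{b}'}_1$, whereas the paper minimizes $\norm{O(\A(\mathbf{b}) - V^{(\mathbf{x})}(s_0))}_1$ directly and does the rescaling at the end; these objectives differ by the constant factor $d/(8\varepsilon)$ and hence have the same minimizer.
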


\begin{proof}
	Let $\A(\mathbf{b})$ denote the random variable that describes the outcome of $\A$ when run on the instance labeled by $\mathbf{b}$. We now let $\B$ be the algorithm that runs $\A$, and then outputs the bit string $\mathbf{b}^*$ as follows:
	\[\mathbf{b}^* = \underset{\mathbf{x} \in \{0,1\}^d}{\argmin} \norm{O(\A(\mathbf{b}) - V^{(\mathbf{x})}(s_0))}_1.\]
	We know that $\norm{O(\A(\mathbf{b}) - V^{(\mathbf{b})}(s_0))}_1 \leq \varepsilon$, with probability at least $2/3$. Moreover, since we took the minimum over all possible assignments for $\mathbf{x}$ in the definition of $\mathbf{b}^*$, we know that $\norm{O(\A(\mathbf{b}) - V^{(\mathbf{b}^*)}(s_0))}_1$ is at most $\norm{O(\A(\mathbf{b}) - V^{(\mathbf{b})}(s_0))}_1$, and hence at most $\varepsilon$ as well. Thus, with probability at least $2/3$, we find that
	\begin{align*}
		\norm{OO'(\mathbf{b} - \mathbf{b}^*)}_1 &= \norm{OO'\left(\frac{d}{8\varepsilon}(O')^T\left(V^{(\mathbf{b})}(s_0) - \mathbf{y}\right) - \frac{d}{8\varepsilon}(O')^T\left(V^{(\mathbf{b}^*)}(s_0) - \mathbf{y}\right)\right)}_1 \\
		&= \frac{d}{8\varepsilon}\norm{O\left(V^{(\mathbf{b})}(s_0) - V^{(\mathbf{b}^*)}(s_0)\right)}_1 \\
		&\leq \frac{d}{8\varepsilon} \left[\norm{O\left(V^{(\mathbf{b})}(s_0) - \A(\mathbf{b})\right)}_1 + \norm{O\left(\A(\mathbf{b}) - V^{(\mathbf{b}^*)}(s_0)\right)}_1\right] \\
		&\leq \frac{d}{4\varepsilon} \norm{O\left(V^{(\mathbf{b})}(s_0) - \A(\mathbf{b})\right)}_1 \leq \frac{d}{4\varepsilon} \cdot \varepsilon = \frac{d}{4}.
	\end{align*}
	This completes the proof.
\end{proof}

Next, in \cref{lem:highoverlapbitstring}, we show that in general, if we have access to a bit string $\mathbf{b} \in \{0,1\}^d$ via a regular phase oracle, then it is difficult to produce a high-overlap bit string $\mathbf{b}^*$.

\begin{lemma}[Hardness of producing a high-overlap bit string with a regular phase oracle]
	\label{lem:highoverlapbitstring}
	Let $d \in \N$, and $O \in \R^{d \times d}$ an orthogonal matrix. Suppose that we are given access to a bit string $\mathbf{b} \in \{0,1\}^d$ by means of a (controlled) phase oracle that acts as
	\[O^{(\mathbf{b})} : \ket{j} \mapsto (-1)^{b_j}\ket{j}.\]
	Then, in order to produce a bit string $\mathbf{b}^* \in \{0,1\}^d$ that with high probability satisfies
	\[\norm{O(\mathbf{b} - \mathbf{b}^*)}_1 \leq \frac{d}{4},\]
	we need to make a number of calls to $O^{(\mathbf{b})}$ that scales as $\Omega(d)$, as $d \to \infty$.
\end{lemma}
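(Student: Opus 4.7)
The plan is to reduce producing a valid $\mathbf{b}^*$ to extracting $\Omega(d)$ bits of information about $\mathbf{b}$ from the phase oracle $O^{(\mathbf{b})}$, and then apply the information-theoretic lower bound of \cite{FGGS99}, which bounds the mutual information between $\mathbf{b}$ and the output of any $Q$-query phase-oracle algorithm by $O(Q)$, to conclude $Q=\Omega(d)$.

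The combinatorial heart of the argument is the claim that for every $\mathbf{b}\in\{0,1\}^d$ and every orthogonal $O$,
\[\left|\{\mathbf{b}^* \in \{0,1\}^d : \|O(\mathbf{b}-\mathbf{b}^*)\|_1 \leq d/4\}\right| \leq 2^{(1-c)d}\]
for an absolute constant $c>0$. Equivalently, sampling $\mathbf{b}^*$ uniformly and writing $\mathbf{b}^*-\mathbf{b} = \tfrac12(\mathbf{1}-2\mathbf{b}) + \tfrac12\bm{\epsilon}$ with $\bm{\epsilon}\in\{-1,+1\}^d$ uniform Rademacher, it suffices to show $\P[\|\mathbf{c}+O\bm{\epsilon}/2\|_1 \leq d/4] \leq 2^{-cd}$, where $\mathbf{c}=O(\mathbf{1}-2\mathbf{b})/2$ is a fixed shift. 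I would combine two ingredients. First, a lower bound on the expectation: each Rademacher sum $(O\bm{\epsilon})_j$ has coefficient $\ell_2$-norm $1$ by orthogonality of $O$, so by the Khintchine--Szarek inequality $\E|(O\bm{\epsilon})_j| \geq 1/\sqrt{2}$, and the symmetry of $(O\bm{\epsilon})_j$ around zero combined with the triangle inequality yields $\E|c_j + (O\bm{\epsilon})_j/2|\geq 1/(2\sqrt{2})$, hence $\E\|\mathbf{c}+O\bm{\epsilon}/2\|_1 \geq d/(2\sqrt{2})>d/4$. Second, exponential concentration: the map $\bm{\epsilon}\mapsto\|\mathbf{c}+O\bm{\epsilon}/2\|_1$ is convex and $(\sqrt{d}/2)$-Lipschitz in the Euclidean metric on $\{-1,+1\}^d$ (using $\|\cdot\|_1\leq\sqrt{d}\|\cdot\|_2$ and orthogonality of $O$), so Talagrand's convex concentration inequality yields $\P[|f - \mathrm{Med}(f)| \geq t]\leq 4\exp(-t^2/d)$. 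Since the mean exceeds the threshold $d/4$ by $\Omega(d)$ and the median is within $O(\sqrt{d})$ of the mean, the desired tail probability is $\exp(-\Omega(d))$.

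With the counting bound in hand, a list-decoding form of Fano's inequality applied with $\mathbf{b}$ uniform and success event $\|O(\mathbf{b}-\mathbf{b}^*)\|_1\leq d/4$ gives $H(\mathbf{b}\mid\mathbf{b}^*)\leq H(1/3)+\tfrac13 d + \tfrac23(1-c)d$ --- exploiting the symmetry that $\mathbf{b}^*$ lies in the ball around $\mathbf{b}$ iff $\mathbf{b}$ lies in the ball around $\mathbf{b}^*$ --- so $I(\mathbf{b};\mathbf{b}^*) \geq (2c/3)d - O(1) = \Omega(d)$, and the FGGS bound forces $Q=\Omega(d)$. The main obstacle is the counting bound for general orthogonal $O$: for $O=I$ it is a standard Hamming-ball entropy estimate, but Hadamard-like rotations permit pairs of flipped bits to remain within the prescribed distance, so no Hamming-type argument survives. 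The critical insight is that Khintchine's inequality still lifts the mean of $\|O(\mathbf{b}-\mathbf{b}^*)\|_1$ above $d/4$ uniformly in $O$, and Talagrand's convex-Lipschitz concentration (rather than McDiarmid, whose coordinatewise bounded differences of $\sqrt{d}$ yield only $\Theta(d)$-scale standard deviation comparable to the mean) delivers genuine exponential concentration at the correct $\Theta(\sqrt{d})$ fluctuation scale.
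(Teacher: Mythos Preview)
Your proposal is correct and follows essentially the same route as the paper. The combinatorial core is identical: you lower-bound $\E\|O(\mathbf{b}-\mathbf{b}^*)\|_1$ via Khintchine's inequality (after the same Rademacher symmetrization and shift trick the paper uses in its Appendix), and you obtain exponential concentration via Talagrand; the paper invokes the self-bounding/subgaussian form of Talagrand from van Handel's notes, whereas you invoke the convex-Lipschitz form, but both yield $\exp(-\Omega(d))$ tail probability for the event $\|O(\mathbf{b}-\mathbf{b}^*)\|_1\le d/4$.

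The only noticeable difference is in packaging the information-theoretic step. The paper converts the high-overlap output into an exact-identification algorithm by sampling $\mathbf{b}^{**}$ uniformly from the ball around $\mathbf{b}^*$, then applies the FGGS99 bound $P[\text{exact}]\le 2^{-d}\sum_{k\le Q}\binom{d}{k}$ directly and combines it with the binary-entropy estimate on $\sum_{k\le Q}\binom{d}{k}$. You instead use list-decoding Fano to get $I(\mathbf{b};\mathbf{b}^*)=\Omega(d)$. One caution: FGGS99 as cited does not literally state a mutual-information bound of $O(Q)$; what it gives is the distinguishability bound above, whose logarithm is $dH(Q/d)$ rather than $O(Q)$. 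This is harmless for the conclusion---$I(\mathbf{b};\mathbf{b}^*)\le dH(Q/d)$ together with $I=\Omega(d)$ still forces $H(Q/d)=\Omega(1)$ and hence $Q=\Omega(d)$---but you should phrase the final step accordingly.
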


\begin{proof}
	Let $\A$ be an algorithm that solves the problem stated in the lemma, i.e., that produces a bit string $\mathbf{b}^* \in \{0,1\}^d$ such that $\norm{O(\mathbf{b} - \mathbf{b}^*)}_1 \leq d/4$, with probability at least $2/3$. Then, we let $\B$ be the algorithm that calls $\A$, and subsequently chooses a new bit string $\mathbf{b}^{**}$ uniformly at random from the set of all bit strings that satisfy $\norm{O(\mathbf{b}^* - \mathbf{b}^{**})}_1 \leq d/4$. The probability that this new bit string exactly equals $\mathbf{b}$ can now be lower bounded in terms of the size of this set, namely
	\[\P\left[\mathbf{b}^{**} = \mathbf{b}\right] \geq \frac23 \cdot \frac{1}{|\{\mathbf{x} \in \{0,1\}^d : \norm{O(\mathbf{b}^* - \mathbf{x})}_1 \leq \d/4\}|} = \frac{2}{3 \cdot 2^d} \cdot \underset{\mathbf{x} \sim \Unif(\{0,1\}^d)}{\P}\left[\norm{O(\mathbf{b}^* - \mathbf{x})}_1 \leq \frac{d}{4}\right]^{-1}.\]
	By the information theoretic lower bound, \cite{FGGS99}, Equation 4, this implies that
	\begin{equation}
	\label{eq:informationbound}
	2^d \leq \frac{3 \cdot 2^d}{2} \cdot \sum_{k=0}^{Q_{\A}} \binom{d}{k} \cdot \underset{\mathbf{x} \sim \Unif(\{0,1\}^d)}{\P}\left[\norm{O(\mathbf{b}^* - \mathbf{x})}_1 \leq \frac{d}{4}\right],
	\end{equation}
	where $Q_{\A}$ is the number of queries that $\A$ makes to $O^{(\mathbf{b})}$. Next, observe that the $j$th entry of $\mathbf{b}^* - \mathbf{x}$ takes values uniformly in $\{0,-1\}$ if $b^*_j = 0$, and uniformly in $\{0,1\}$ if $b^*_j = 1$. Hence, $(2b^*_j-1)(b^*_j-x_j)$ takes values uniformly in $\{0,1\}$, and so $\mathbf{y} = \diag(2\mathbf{b}^*-\mathbf{1})(\mathbf{b}^* - \mathbf{x})$ is uniformly distributed over $\{0,1\}^d$. Furthermore, after defining $\overline{O} = O\diag(2\mathbf{b}^*-\mathbf{1})$, we find that
	\[\overline{O}\mathbf{y} = O\diag(2\mathbf{b}^*-\mathbf{1})\diag(2\mathbf{b}^*-\mathbf{1})(\mathbf{b}^*-\mathbf{x}) = O(\mathbf{b}^*-\mathbf{x}).\]
	Moreover, $\overline{O}$ is also an orthogonal matrix, since it can be derived from $O$ by multiplying some of its columns with $-1$. Hence, we can now use \cref{lem:concentrationbound} to obtain that
	\[\underset{\mathbf{x} \sim \Unif(\{0,1\}^d)}{\P}\left[\norm{O(\mathbf{b}^* - \mathbf{x})}_1 \leq \frac{d}{4}\right] = \underset{\mathbf{y} \sim \Unif(\{0,1\}^d)}{\P}\left[\norm{\overline{O}\mathbf{y}}_1 \leq \frac{d}{4}\right] \leq 2^{-\frac{\log(e)}{2}\left(\frac{1}{2\sqrt{2}} - \frac14\right)^2d}.\]
	We can upper bound the other factor in the right-hand side of \cref{eq:informationbound} with the binary-entropy function, a proof of which can for instance be found in \cite{FG06}, Lemma 16.19, which implies that
	\[2^d \leq \frac{3 \cdot 2^d}{2} \cdot 2^{dH\left(\frac{Q_{\A}}{d}\right)} \cdot 2^{-\frac{\log(e)}{2}\left(\frac{1}{2\sqrt{2}} - \frac14\right)^2d} \leq \frac32 \cdot 2^{d\left(1 + H\left(\frac{Q_{\A}}{d}\right) - \frac{\log(e)}{2}\left(\frac{1}{2\sqrt{2}} - \frac14\right)^2\right)},\]
	where $H(x) = -x\log(x) - (1-x)\log(1-x)$, for all $x \in [0,1]$. Taking the logarithm and dividing by $d$ on both sides yields
	\[1 \leq \frac{\log(3) - 1}{d} + 1 + H\left(\frac{Q_{\A}}{d}\right) - \frac{\log(e)}{2}\left(\frac{1}{2\sqrt{2}} - \frac14\right)^2,\]
	which in turn implies that
	\[H\left(\frac{Q_{\A}}{d}\right) \geq \frac{\log(e)}{2}\left(\frac{1}{2\sqrt{2}} - \frac14\right)^2 - \frac{\log(3)-1}{d} \geq \frac{1}{200},\]
	where we used that $d \geq 250$. Since $H$ is increasing in the interval $[0,1/2]$ and $H(1/3000) < 1/200$, we find that
	\[Q_{\A} \geq \frac{d}{3000},\]
	completing the proof.
\end{proof}

Now that we know that finding a high-overlap bit string can be reduced to the multivariate Monte Carlo estimation problem, and we also know how hard this problem is when one has access to a regular phase oracle, all that remains is relating this access model to the way we assume to have encoded the input in the multivariate Monte Carlo estimation problem. This is the objective of the next two sections.

\subsection{Lower bounds on the query complexity to the reward function}
\label{subsec:LBR}

In this section, we focus on lower bounding the number of queries to the reward oracles required to solve the multivariate Monte Carlo estimation problem. We first aim to prove a lower bound in the path-independent setting, for the case where we have access to the reward function via the phase oracle, defined in \cref{eq:ORS}. This oracle, however, later on will provide access to a bit string $\mathbf{b}$ via a \textit{fractional phase oracle}, so we prove the hardness of the problem of finding high-overlap bit strings when one has access to fractional phase oracles first.

\begin{lemma}[Hardness of producing high-overlap bit strings with a fractional phase oracle]
	\label{lem:fractionalphasequerylowerbound}
	Let $d \in \N$, $\varepsilon > 0$ and $O \in \R^{d \times d}$ be an orthogonal matrix. Suppose that we are given access to a bit string $\mathbf{b} \in \{0,1\}^d$ by means of a (controlled) fractional phase oracle that acts as
	\[O_{\varepsilon}^{(\mathbf{b})} : \ket{j} \mapsto e^{i\varepsilon b_j}\ket{j}.\]
	Then, in order to produce a bit string $\mathbf{b}^* \in \{0,1\}^d$ that satisfies $\norm{O(\mathbf{b} - \mathbf{b}^*)}_1 \leq d/4$ with high probability, we need to make a number of calls to $O^{(\mathbf{b})}_{\varepsilon}$ that scales at least as
	\[\Omega\left(\frac{d}{\varepsilon}\right), \qquad (d \to \infty, \;\; \varepsilon \downarrow 0).\]
\end{lemma}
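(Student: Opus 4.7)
The plan is to adapt the proof of \cref{lem:highoverlapbitstring} to the fractional setting: the post-processing reduction to exact bit-recovery carries over verbatim, and the only new ingredient is a sharpened information-theoretic bound reflecting the fact that each fractional phase query contributes an amplitude factor of $e^{i\varepsilon}-1 = O(\varepsilon)$ rather than $-2$.

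First, as in the proof of \cref{lem:highoverlapbitstring}, I would define $\B$ to run $\A$ once and then sample $\mathbf{b}^{**}$ uniformly from $\{\mathbf{x} \in \{0,1\}^d : \norm{O(\mathbf{b}^* - \mathbf{x})}_1 \leq d/4\}$. The diagonal reparametrization $\mathbf{y} = \diag(2\mathbf{b}^* - \mathbf{1})(\mathbf{b}^* - \mathbf{x})$ together with the concentration bound used in \cref{lem:highoverlapbitstring} (invoking \cref{lem:concentrationbound}) gives $\P[\mathbf{b}^{**} = \mathbf{b}] \geq \frac{2}{3 \cdot 2^d} \cdot 2^{\Omega(d)}$, identical to the success-probability lower bound in the standard-oracle proof.

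The crucial new step is a polynomial-based information bound tailored to fractional phase oracles. Writing $O_\varepsilon^{(\mathbf{b})} = I + (e^{i\varepsilon}-1)\diag(\mathbf{b})$ and expanding the algorithm's final unitary $U$ through $Q_\A$ interleaved fractional queries, the amplitude $\bra{z}U\ket{0}$ becomes a multilinear polynomial in $\mathbf{b}$ of degree at most $Q_\A$, whose coefficient on any degree-$k$ monomial is bounded in magnitude by $|e^{i\varepsilon}-1|^k \binom{Q_\A}{k} = O(\varepsilon^k \binom{Q_\A}{k})$. Summing over the $\binom{d}{k}$ monomials of degree $k$, the total $\ell_1$-mass of the degree-$k$ part is at most $(e^2 d Q_\A \varepsilon / k^2)^k$, which decays super-geometrically for $k$ much larger than $\sqrt{d Q_\A \varepsilon}$. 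Truncating the polynomial at this effective degree therefore incurs negligible error, and substituting $\sqrt{d Q_\A \varepsilon}$ for $Q_\A$ in the FGGS99 information-theoretic bound (Eq.~4 of \cite{FGGS99}) yields
\[
\sum_\mathbf{b} \P[\B(\mathbf{b}) = \mathbf{b}] \;\leq\; C \sum_{k=0}^{O(\sqrt{d Q_\A \varepsilon})} \binom{d}{k}.
\]
Combining this with the success-probability bound above via the binary-entropy manipulation of the original proof gives $\sqrt{d Q_\A \varepsilon} = \Omega(d)$, equivalently $Q_\A \varepsilon = \Omega(d)$, which is the desired $Q_\A = \Omega(d/\varepsilon)$.

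The main obstacle will be rigorously justifying the effective-degree truncation. While each individual monomial admits a clean coefficient bound, destructive interference across monomials and the unitarity constraint $|\bra{z}U\ket{0}| \leq 1$ mean that one must control the truncation error uniformly in both the output $\ket{z}$ and the input $\mathbf{b} \in \{0,1\}^d$ before substituting the truncated polynomial into FGGS99's bound. Ensuring that the tail contribution above degree $\sqrt{d Q_\A \varepsilon}$ does not inflate the success probability -- and in particular does not interfere with the argmin-closest post-processing step used to define $\mathbf{b}^{**}$ -- is the delicate technical step of the argument.
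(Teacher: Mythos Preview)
Your approach differs substantially from the paper's. The paper does not redo an information-theoretic argument at all; instead it invokes the general adversary bound for \emph{relations} from \cite{Belovs15}. After shifting both oracles by a global phase, it observes that the oracle-difference matrices satisfy $\Delta'_{\mathbf{b},\mathbf{b}'} = \sin(\varepsilon/2)\,\Delta_{\mathbf{b},\mathbf{b}'}$, where $\Delta$ is built from the standard phase oracle and $\Delta'$ from the fractional one. Since the only oracle-dependent constraint in the adversary SDP is $\norm{\Gamma\circ\Delta}\leq 1$, rescaling $\Delta$ by $\sin(\varepsilon/2)$ enlarges the feasible region by $1/\sin(\varepsilon/2)$, so $\ADV^{\pm}(r,\Delta') = \ADV^{\pm}(r,\Delta)/\sin(\varepsilon/2) = \Omega(\ADV^{\pm}(r,\Delta)/\varepsilon)$, and \cref{lem:highoverlapbitstring} supplies the $\Omega(d)$ factor. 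Once the relational adversary machinery is set up, this is essentially a two-line scaling argument.

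Your polynomial-method route is plausible but has a concrete gap precisely where you flag it. The per-monomial coefficient bound $|e^{i\varepsilon}-1|^k\binom{Q_\A}{k}$ is not justified: after multilinearizing via $b_j^2=b_j$, a degree-$k$ monomial receives contributions from \emph{every} firing pattern with $m\geq k$ queries landing on those $k$ indices, each weighted by $(e^{i\varepsilon}-1)^m$, so the factor $\binom{Q_\A}{k}$ does not follow. The clean fix is to truncate in operator norm rather than by monomial degree: writing $U=\sum_{S\subseteq[Q_\A]}(e^{i\varepsilon}-1)^{|S|}U_S$ with each $\|U_S\|\leq 1$, one has $\bigl\|U-\sum_{|S|\leq k}(e^{i\varepsilon}-1)^{|S|}U_S\bigr\|\leq\sum_{m>k}\binom{Q_\A}{m}\varepsilon^m$, which is negligible once $k\gtrsim Q_\A\varepsilon$ (not $\sqrt{dQ_\A\varepsilon}$). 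Feeding this effective degree into the FGGS99 argument would then yield $Q_\A\varepsilon=\Omega(d)$ as desired. You would still owe a robust version of the \cite{FGGS99} bound that tolerates the operator-norm truncation error uniformly over $\mathbf{b}$ and over the post-processing step; this can be done, but it is genuine additional work that the adversary scaling argument avoids entirely.
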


\begin{proof}
	This proof is inspired by \cite{LMRSS11}, Appendix B. Analogous results for computing functions with fractional phase oracles can be derived more easily, e.g. via \cite{Belovs15}, Theorem 37, or \cite{YM11}, Theorem 3.1. However, in this setting we specifically need to consider the case in which we are evaluating a relation, rather than a function, which requires us to redo the analysis in this slightly different setting.
	
	Let $r \subseteq \{0,1\}^d \times \{0,1\}^d$ be the relation defined as
	\[r = \left\{(\mathbf{b},\mathbf{b}^*) \in \{0,1\}^d \times \{0,1\}^d : \norm{O(\mathbf{b} - \mathbf{b}^*)}_1 \leq \frac{d}{4}\right\}.\]
	Observe that finding a bit string $\mathbf{b}^* \in \{0,1\}^d$ that satisfies $\norm{O(\mathbf{b} - \mathbf{b}^*)}_1 \leq d/4$ is equivalent to finding a bit string $\mathbf{b}^*$ such that $(\mathbf{b},\mathbf{b}^*) \in r$.
	
	Note that instead of querying $O_{\varepsilon}^{(\mathbf{b})}$, it is equivalent to make queries to a slightly modified oracle, defined as
	\[\overline{O}_{\varepsilon}^{(\mathbf{b})} : \ket{j} \mapsto e^{i\varepsilon(b_j-\frac12)}\ket{j}.\]
	Any algorithm that makes a call to $O_{\varepsilon}^{(\mathbf{b})}$ can without changing the outcome of the algorithm also make a call to $\overline{O}_{\varepsilon}^{(\mathbf{b})}$ instead, and vice versa, because the two differ only by a global phase. Similarly, if an algorithm makes a controlled call to $O_{\varepsilon}^{(\mathbf{b})}$, then it can equivalently make a controlled call to $\overline{O}_{\varepsilon}^{(\mathbf{b})}$, and subsequently apply some power of the $Z$-gate to the control qubit to correct for the mismatch in the phase between the two oracles. Hence, the number of queries required to solve the problem does not change when we switch from the oracle $O_{\varepsilon}^{(\mathbf{b})}$ to $\overline{O}_{\varepsilon}^{(\mathbf{b})}$.
	
	Note that the same also holds for the oracle that we saw in \cref{lem:highoverlapbitstring}. We can modify it in the same way, to arrive at the oracle $\overline{O}^{(\mathbf{b})}$, defined as
	\[\overline{O}^{(\mathbf{b})} : \ket{j} \mapsto i^{-1+2b_j}\ket{j},\]
	and solving the problem of finding a small $\ell_1$-distance bit string $\mathbf{b}^*$ to $\mathbf{b}$ takes equally many calls to $\overline{O}^{(\mathbf{b})}$ as it does to $O^{(\mathbf{b})}$.
	
	We can characterize how many queries we need to evaluate a relation $r$, given any particular oracle access, using the adversary bound for relations, taken from \cite{Belovs15}, Equation 21. The proof that its optimal value, denoted by $\ADV^{\pm}(r,\Delta)$, indeed equals the query complexity up to constants can be found in Theorem~32 of the same paper. The adversary bound in question is the SDP
	\begin{align*}
	\max \quad & \lambda_{\max}\left(\Gamma - \frac13 N\right) \\
	\text{s.t.} \quad & \norm{\Gamma \circ \Delta} \leq 1, \\
	& \Gamma \preceq N \circ E_{\mathbf{b}}, \qquad \text{for all } \mathbf{b} \in \{0,1\}^d.
	\end{align*}
	Here $\Gamma$, $N$ and $E_{\mathbf{b}}$ are Hermitian matrices indexed by $\{0,1\}^d$, and $N$ and $E_{\mathbf{b}}$ are diagonal. $\Delta$ is a family of matrices $\{\Delta_{\mathbf{b},\mathbf{b}'} : \mathbf{b}, \mathbf{b}' \in \{0,1\}^d\}$, where $\Delta_{\mathbf{b},\mathbf{b}'}$ equals the difference between the oracle for $\mathbf{b}$ and $\mathbf{b}'$. Hence, if we have access to the input via the regular phase oracles, we have $\Delta_{\mathbf{b},\mathbf{b}'} = \overline{O}^{(\mathbf{b})} - \overline{O}^{(\mathbf{b}')}$. The product $\Gamma \circ \Delta$ produces a bigger matrix, with blocks labeled by $\mathbf{b}, \mathbf{b}'$, containing $\Gamma_{\mathbf{b},\mathbf{b}'}\Delta_{\mathbf{b},\mathbf{b}'}$. The matrix entries $E_{\mathbf{b}}[\mathbf{b}',\mathbf{b}']$ are defined to be $1$ if $(\mathbf{b},\mathbf{b}') \not\in r$ and $0$ otherwise, and $N \circ E_{\mathbf{b}}$ denotes the entry-wise product between these two matrices.
	
	Let us now compute the two matrices $\Delta$ for the two different input models, i.e., for all $\mathbf{b},\mathbf{b}' \in \{0,1\}^d$, we have
	\[\Delta_{\mathbf{b},\mathbf{b}'} = \overline{O}^{(\mathbf{b})} - \overline{O}^{(\mathbf{b}')} \qquad \text{and} \qquad \Delta'_{\mathbf{b},\mathbf{b}'} = \overline{O}_{\varepsilon}^{(\mathbf{b})} - \overline{O}_{\varepsilon}^{(\mathbf{b}')}.\]
	Since all our oracles are diagonal operators, we immediately find that these newly-defined quantities $\Delta_{\mathbf{b},\mathbf{b}'}$ and $\Delta'_{\mathbf{b},\mathbf{b}'}$ are diagonal too. Furthermore, for all $j \in [d]$, we have that
	\[\bra{j}\Delta_{\mathbf{b},\mathbf{b}'}\ket{j} = \bra{j}\overline{O}^{(\mathbf{b})}\ket{j} - \bra{j}\overline{O}^{(\mathbf{b}')}\ket{j} = i^{-1+2b_j} - i^{-1+2b_j'} = 2i^{-1+2b_j}\left(1-\delta_{b_j,b_j'}\right),\]
	and
	\[\bra{j}\Delta'_{\mathbf{b},\mathbf{b}'}\ket{j} = \bra{j}\overline{O}_{\varepsilon}^{(\mathbf{b})}\ket{j} - \bra{j}\overline{O}_{\varepsilon}^{(\mathbf{b}')}\ket{j} = e^{i\varepsilon\left(b_j - \frac12\right)} - e^{i\varepsilon\left(b_j' - \frac12\right)} = 2i^{-1+2b_j}\sin\left(\frac{\varepsilon}{2}\right)\left(1 - \delta_{b_j,b_j'}\right).\]
	Hence, we deduce that
	\[\Delta' = \sin\left(\frac{\varepsilon}{2}\right)\Delta,\]
	which implies that the feasible region of the semidefinite program is enlarged by a factor of $\sin(\varepsilon/2)$ when we switch from $\Delta$ to $\Delta'$. Consequently, let $\Gamma$ and $N$ compose an optimal solution for the semidefinite program with constraint matrix $\Delta$. Then, we can construct an optimal solution to the SDP with constraint matrix $\Delta'$ by plugging in $\Gamma'$ and $N'$, where
	\[\Gamma' = \frac{\Gamma}{\sin\left(\frac{\varepsilon}{2}\right)}, \qquad \text{and} \qquad N' = \frac{N}{\sin\left(\frac{\varepsilon}{2}\right)},\]
	which in turn implies that
	\[\ADV^{\pm}(r,\Delta') = \frac{\ADV^{\pm}(r,\Delta)}{\sin\left(\frac{\varepsilon}{2}\right)} = \Omega\left(\frac{\ADV^{\pm}(r,\Delta)}{\varepsilon}\right), \qquad (\ADV^{\pm}(r,\Delta) \to \infty, \;\; \varepsilon \downarrow 0).\]
	Combining this with the fact that $\ADV^{\pm}(r,\Delta)$ is equal up to constants to the optimal query complexity of computing $r$ with $\Delta$ depending on the input model, we find that the number of queries required  to the fractional phase oracle to find a low $\ell_1$-distance bit string is in big-$\O$-notation $1/\varepsilon$ times as big as the number of queries required to the regular phase oracle. Together with the result from \cref{lem:highoverlapbitstring}, this completes the proof.
\end{proof}

Now, we are ready to define the specific instances of the multivariate Monte Carlo estimation problem that we will be using in the upcoming lower bound proofs. We only consider the path-independent setting of the problem, because the algorithm for this case uses the algorithm for the cumulative-depth and exact-depth settings as subroutines. Hence, if we can achieve tight bounds in the path-independent setting, tight bounds for the other settings follow immediately.

Let $S = \{s_0\}$ be the state space, with the transition probability function $P(s_0,s_0) = 1$. Let $R_S(s_0) = \mathbf{r} \in \R^d$, and $\gamma \in [0,1]$. The resulting instance is displayed in \cref{fig:LBR}.

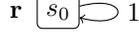
\begin{figure}[h!]
	\centering
	\begin{tikzpicture}[state/.style={draw, rounded corners=.2em},every loop/.style={min distance=2em,looseness=5}]
	\node[state] (s0) at (0,0) {$s_0$};
	\node[left=1em] at (0,0) {$\mathbf{r}$};
	\draw[->] (s0) to[loop right] node[right] {$1$} (s0);
	\end{tikzpicture}
	\caption{Graphical depiction of the instance of the multivariate Monte Carlo estimation problem we use in this section. There is only one state, $s_0$, which the Markovian walk loops in indefinitely. At every time step, a reward of $\mathbf{r} \in \R^d$ is obtained.}
	\label{fig:LBR}
\end{figure}

To ease notation, we define
\begin{equation}
	\label{eq:Tgamma}
	T_{\gamma} = \sum_{t=0}^T \gamma^t = \begin{cases}
		T, & \text{if } \gamma = 1, \\
		\frac{1-\gamma^{T+1}}{1-\gamma}, & \text{otherwise}, \\
	\end{cases}
\end{equation}
and we observe that the value function of these instances is $V(s_0) = T_{\gamma}\mathbf{r}$. Moreover,
\begin{equation}
	\label{eq:Tgammascaling}
	\frac{T_{\gamma}}{T^*} = \max\left\{\frac1T\sum_{t=0}^T \gamma^t, 1-\gamma^{T+1}\right\} \geq \max\left\{\frac{T+1}{T} \gamma^T, 1 - \gamma^{T+1}\right\} \geq \max\left\{\gamma^{T+1}, 1 - \gamma^{T+1}\right\} \geq \frac12,
\end{equation}
and hence $T_{\gamma} = \Omega(T^*)$, in the limit where $T^* \to \infty$.

Next, we use these instances to prove lower bounds on the query complexities to the reward oracles. We start with phase oracles, in \cref{lem:LBRphase}.

\begin{lemma}[Query complextity lower bounds with access to phase oracles]
	\label{lem:LBRphase}
	Let $d \in \N$, $O \in \R^{d \times d}$ an orthogonal matrix, $T \in \N \cup \{\infty\}$, $0 < \varepsilon < dR_{\max}$, $q \in [1,\infty]$, and $\gamma \in [0,1]$, such that if $T = \infty$, then $\gamma < 1$. Suppose that we have a quantum algorithm $\A$, making $Q_{\A}$ queries to a phase oracle $O_{R_S}$, that solves the path-independent multivariate Monte Carlo estimation problem rotated by $O$, with rewards bounded by $R_{\max}$ in $\ell_q$-norm, up to precision $\varepsilon$ in $\ell_1$-norm. Then the number of queries to $O_{R_S}$ it makes scales at least as
	\[\Omega\left(T^*\frac{R_{\max}}{\varepsilon}d^2\right), \qquad (T^*, R_{\max}, d \to \infty, \;\; \varepsilon \downarrow 0).\]
\end{lemma}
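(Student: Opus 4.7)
The plan is to reduce to the fractional-phase-oracle lower bound of \cref{lem:fractionalphasequerylowerbound} by instantiating the single-state family of \cref{fig:LBR} with rewards that encode an arbitrary bit string. Specifically, for every $\mathbf{b}\in\{0,1\}^{d}$, I take the instance with state space $S=\{s_{0}\}$, self-loop $P(s_{0},s_{0})=1$, and state-reward vector
\[
\mathbf{r}^{(\mathbf{b})} \;=\; \frac{8\varepsilon}{d\,T_{\gamma}}\,\mathbf{b}.
\]
Since $\|\mathbf{r}^{(\mathbf{b})}\|_{q} \leq \frac{8\varepsilon\,d^{1/q}}{d\,T_{\gamma}} = \frac{8\varepsilon}{d^{\,1-1/q}\,T_{\gamma}}$, and $T_{\gamma}=\Omega(T^{*})$ by \cref{eq:Tgammascaling}, the reward-norm bound $\|\mathbf{r}^{(\mathbf{b})}\|_{q}\leq R_{\max}$ holds in the asymptotic regime of the lemma ($T^{*}, R_{\max}, d \to \infty$ and $\varepsilon \downarrow 0$), so the instance is admissible. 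Its value function is
\[
V^{(\mathbf{b})}(s_{0}) \;=\; T_{\gamma}\,\mathbf{r}^{(\mathbf{b})} \;=\; \frac{8\varepsilon}{d}\,\mathbf{b},
\]
which matches the template required by \cref{lem:smallell1distancereduction} with $O'=I$ and $\mathbf{y}=\mathbf{0}$. Hence, one run of $\A$ on such an instance, followed by the classical extraction step of that lemma, produces with high probability a bit string $\mathbf{b}^{*}$ satisfying $\|O(\mathbf{b}-\mathbf{b}^{*})\|_{1}\leq d/4$, using no additional oracle queries.

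Next, I observe that because the state register only ever carries $\ket{s_{0}}$, the phase oracle \cref{eq:ORS} on this instance acts as
\[
O_{R_{S}} \;:\; \ket{s_{0}}\ket{j} \;\mapsto\; \exp\!\left(i\,\frac{R_{S}(s_{0})_{j}}{2R_{\max}}\right)\ket{s_{0}}\ket{j} \;=\; \exp\!\left(i\,\frac{4\varepsilon}{d\,T_{\gamma}\,R_{\max}}\,b_{j}\right)\ket{s_{0}}\ket{j},
\]
which is exactly a fractional phase oracle on $\mathbf{b}$ of strength $\varepsilon' = 4\varepsilon/(d\,T_{\gamma}\,R_{\max})$ in the sense of \cref{lem:fractionalphasequerylowerbound}. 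Plugging this into that lemma applied to the composite procedure (run $\A$, then extract $\mathbf{b}^{*}$) yields
\[
Q_{\A} \;=\; \Omega\!\left(\frac{d}{\varepsilon'}\right) \;=\; \Omega\!\left(\frac{d^{\,2}\,T_{\gamma}\,R_{\max}}{\varepsilon}\right) \;=\; \Omega\!\left(T^{*}\,\frac{R_{\max}}{\varepsilon}\,d^{\,2}\right),
\]
where the last equality uses $T_{\gamma}=\Omega(T^{*})$ from \cref{eq:Tgammascaling}, matching the claim.

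The proof is essentially a two-step composition of \cref{lem:smallell1distancereduction,lem:fractionalphasequerylowerbound}, with the family $\{\mathbf{r}^{(\mathbf{b})}\}_{\mathbf{b}\in\{0,1\}^{d}}$ chosen so that (i) the induced value functions fit the affine template of the reduction lemma, and (ii) the induced phase oracle collapses to a fractional phase oracle on $\mathbf{b}$ whose strength is exactly the ratio $\varepsilon/(d\,T_{\gamma}\,R_{\max})$ needed to yield the advertised $d^{\,2}T^{*}R_{\max}/\varepsilon$ bound. The only subtlety I expect in formalizing this is bookkeeping around admissibility: one must confirm that $\|\mathbf{r}^{(\mathbf{b})}\|_{q}\leq R_{\max}$ under the stated asymptotic conditions (since the constraint $\varepsilon < d R_{\max}$ alone is insufficient for very small $q$), which holds because $T^{*}$ and $d$ eventually dominate the fixed $\varepsilon/R_{\max}$ ratio in the limit.
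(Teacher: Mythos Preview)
Your proposal is correct and follows essentially the same approach as the paper: you instantiate the single-state family of \cref{fig:LBR} with $\mathbf{r}^{(\mathbf{b})}=\frac{8\varepsilon}{dT_{\gamma}}\mathbf{b}$, observe that the resulting phase oracle is exactly a fractional phase oracle of strength $\varepsilon'=\frac{4\varepsilon}{dT_{\gamma}R_{\max}}$, and combine \cref{lem:smallell1distancereduction} with \cref{lem:fractionalphasequerylowerbound} and $T_{\gamma}=\Omega(T^{*})$ to conclude. Your explicit check of the admissibility condition $\|\mathbf{r}^{(\mathbf{b})}\|_{q}\leq R_{\max}$ in the asymptotic regime is a detail the paper leaves implicit.
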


\begin{proof}
	Let
	\[\varepsilon' = \frac{4\varepsilon}{R_{\max}T_{\gamma}d}.\]
	Next, suppose that we have access to some bit string $\mathbf{b} \in \{0,1\}^d$ via the fractional phase oracle $O_{\varepsilon'}^{(\mathbf{b})}$. From \cref{lem:fractionalphasequerylowerbound}, we know that constructing a bit string $\mathbf{b}^* \in \{0,1\}^d$ that satisfies $\norm{O(\mathbf{b} - \mathbf{b}^*)}_1 \leq d/4$ takes a number of queries to $O_{\varepsilon'}^{(\mathbf{b})}$ that scales at least as
	\[\Omega\left(\frac{d}{\varepsilon'}\right) = \Omega\left(\frac{R_{\max}T_{\gamma}d^2}{\varepsilon}\right) = \Omega\left(\frac{R_{\max}T^*d^2}{\varepsilon}\right), \qquad (T^*, R_{\max}, d \to \infty, \;\; \varepsilon \downarrow 0).\]
	We now show that we can also find such a bit string $\mathbf{b}^*$ with $Q_{\A}$ queries to $O_{\varepsilon'}^{(\mathbf{b})}$. To that end, for $\mathbf{b} \in \{0,1\}^d$, let
	\[\mathbf{r}^{(\mathbf{b})} = \frac{8\varepsilon}{T_{\gamma}d}\mathbf{b}.\]
	The value function associated to an instance of the multivariate Monte Carlo estimation problem labeled by $\mathbf{b}$ and displayed in \cref{fig:LBR} now becomes
	\[V^{(\mathbf{b})}(s_0) = T_{\gamma}\mathbf{r}^{(\mathbf{b})} = \frac{8\varepsilon}{d}\mathbf{b},\]
	and hence by \cref{lem:smallell1distancereduction}, with $O'= I$ and $\mathbf{y} = \mathbf{0}$, $\A$ will be able to find a high-overlap bit string $\mathbf{b}^* \in \{0,1\}^d$ with high probability. Finally, note that the corresponding phase oracle, $O_{R_S}^{(\mathbf{b})}$, acts as
	\[O_{R_S}^{(\mathbf{b})} : \ket{s_0}\ket{j} \mapsto e^{i\frac{4\varepsilon}{T_{\gamma}dR_{\max}}b_j}\ket{s_0}\ket{j} = e^{i\varepsilon'b_j}\ket{s_0}\ket{j},\]
	and hence $O_{R_S}^{(\mathbf{b})} = O_{\varepsilon'}^{(\mathbf{b})}$. Thus $\A$ indeed queries $O_{\varepsilon'}^{(\mathbf{b})}$ a total number of $Q_{\A}$ times. This completes the proof.
\end{proof}

We use the same ideas to prove lower bounds when we have access to the probability oracle, in \cref{lem:LBRprobability}.

\begin{lemma}[Query complexity lower bounds with access to probability oracles]
	\label{lem:LBRprobability}
	Let $d \in \N$, $O \in \R^{d \times d}$ an orthogonal matrix, $T \in \N \cup \{\infty\}$, $0 < \varepsilon < dR_{\max}$, $q \in [1,\infty]$, and $\gamma \in [0,1]$, such that if $T = \infty$, then $\gamma < 1$. Suppose that we have a quantum algorithm $\A$, making $Q_{\A}$ queries to a probability oracle $U_{R_S}$, that solves the path-independent multivariate Monte Carlo estimation problem rotated by $O$, with rewards bounded by $R_{\max}$ in $\ell_q$-norm, up to precision $\varepsilon$ in $\ell_1$-norm. Then the number of queries to $U_{R_S}$ it makes scales at least as
	\[\Omega\left(T^*\frac{R_{\max}}{\varepsilon}d^{2-\frac{1}{2q}}\right), \qquad (T^*, R_{\max}, d \to \infty, \;\; \varepsilon \downarrow 0).\]
\end{lemma}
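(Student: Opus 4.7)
The plan follows \cref{lem:LBRphase} in broad strokes, but the key novelty is to use a \emph{biased} encoding of the bit string so that the probability oracle $U_{R_S}^{(\mathbf{b})}$, after an input-independent Clifford conjugation, acts as a fractional phase oracle on the bits with parameter $\Delta\theta = \Theta(\alpha'/d^{1/(2q)})$. Concretely, for each $\mathbf{b} \in \{0,1\}^d$ I instantiate the single-state instance of \cref{fig:LBR} with the reward function $R_S(s_0) = c\left(\tfrac12\mathbf{1} + \alpha'(2\mathbf{b}-\mathbf{1})\right)$, where $c = R_{\max}/d^{1/q}$ and $\alpha' = 4\varepsilon/(T_\gamma R_{\max}d^{1-1/q})$. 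The choice of $c$ ensures $\norm{R_S(s_0)}_q \leq R_{\max}$ whenever $\alpha' \leq 1/2$ (which holds asymptotically in the regime of interest), and the choice of $\alpha'$ makes $V^{(\mathbf{b})}(s_0) = T_\gamma R_S(s_0)$ equal $\mathbf{y} + (8\varepsilon/d)\mathbf{b}$ for a constant vector $\mathbf{y}$, so that \cref{lem:smallell1distancereduction} (with $O' = I$) yields a bit string $\mathbf{b}^*$ satisfying $\norm{\mathbf{b}-\mathbf{b}^*}_1 \leq d/4$ using $Q_\A$ queries to $U_{R_S}^{(\mathbf{b})}$.

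The central step is to exhibit $U_{R_S}^{(\mathbf{b})}$ as a query-equivalent form of a fractional phase oracle. Block-diagonally on $\ket{j}$, the probability oracle acts on the auxiliary qubit as $R_y(2\theta_j^{(\mathbf{b})})$ with $\theta_j^{(\mathbf{b})} = \arcsin\sqrt{(\tfrac12+\alpha'(2b_j-1))/d^{1/q}}$. Writing $\theta_j^{(\mathbf{b})} = \theta_- + \Delta\theta\,b_j$ with $\theta_\pm$ denoting the values at $b_j=1$ and $b_j=0$ respectively and $\Delta\theta = \theta_+ - \theta_-$, and using $R_y(\alpha+\beta) = R_y(\alpha) R_y(\beta)$, I factor $U_{R_S}^{(\mathbf{b})} = Q^{(\mathbf{b})}V_0$, where $V_0 = I \otimes R_y(2\theta_-)$ is input-independent and $Q^{(\mathbf{b})} = \sum_{j=1}^d \ket{j}\bra{j}\otimes R_y(2\Delta\theta\,b_j)$. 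Conjugating $Q^{(\mathbf{b})}$ by $R_x(\pi/2)$ on the auxiliary qubit uses the identity $R_x(\pi/2)R_y(\phi)R_x(-\pi/2) = R_z(\phi)$ to produce an operator that is diagonal in the auxiliary computational basis and that acts on $\ket{j}\ket{0}_{\mathrm{aux}}$ as $e^{-i\Delta\theta b_j}\ket{j}\ket{0}_{\mathrm{aux}}$. Up to a removable global phase, this is precisely the fractional phase oracle $O_{\Delta\theta}^{(\mathbf{b})}$ of \cref{lem:fractionalphasequerylowerbound}.

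A first-order Taylor expansion of $\arcsin\sqrt{\cdot}$ around $1/(2d^{1/q})$ gives $\Delta\theta = \Theta(\alpha'/d^{1/(2q)})$. Since the Clifford-conjugated $Q^{(\mathbf{b})}$ is block-diagonal in the computational basis of the auxiliary qubit, initialising the auxiliary qubit in $\ket{0}$ keeps it in $\ket{0}$ throughout, and every fractional-phase-oracle algorithm using $Q$ queries can be simulated query-for-query using $Q$ queries to $U_{R_S}^{(\mathbf{b})}$ interleaved with input-independent operations ($V_0^{-1}$ and $R_x(\pm\pi/2)$ on the auxiliary qubit). \Cref{lem:fractionalphasequerylowerbound} then forces $Q_\A \geq \Omega(d/\Delta\theta) = \Omega(d^{1+1/(2q)}/\alpha')$, and substituting $\alpha'$ together with $T_\gamma = \Omega(T^*)$ from \cref{eq:Tgammascaling} yields the claimed $\Omega(T^* R_{\max} d^{2-1/(2q)}/\varepsilon)$ bound. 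The hardest part is the bookkeeping for the query-for-query simulation, which crucially relies on the biased encoding making $\Delta\theta$ linear in $\alpha'$ rather than $\sqrt{\alpha'}$ (as would arise from the naive unbiased encoding $R_S(s_0)_j = c\,b_j$ used in \cref{lem:LBRphase}); this is precisely the factor that lets the bound match the target $d^{2-1/(2q)}$ instead of the weaker $d^{3/2}$ that a naive translation would produce.
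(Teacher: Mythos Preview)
Your instance construction, the biased encoding $R_S(s_0)_j = R_{\max}d^{-1/q}(\tfrac12 + \alpha'(2b_j-1))$, the value-function computation, the use of \cref{lem:smallell1distancereduction}, and the Taylor expansion yielding $\Delta\theta = \Theta(\alpha'/d^{1/(2q)})$ are all correct and match the paper's argument essentially line for line. The factorisation $U_{R_S}^{(\mathbf{b})} = Q^{(\mathbf{b})}V_0$ with $Q^{(\mathbf{b})}$ Clifford-equivalent to a diagonal $R_z$-type gate is also a valid way to see the structure; the paper arrives at the same place via the explicit $\ket{+}$-trick in \cref{eq:modifiedfractionalphaseoracle}.

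However, the final step has the reduction direction reversed. You write that ``every fractional-phase-oracle algorithm using $Q$ queries can be simulated query-for-query using $Q$ queries to $U_{R_S}^{(\mathbf{b})}$'', and then invoke \cref{lem:fractionalphasequerylowerbound} to conclude $Q_{\A} \geq \Omega(d/\Delta\theta)$. That implication does not hold: showing that $U_{R_S}^{(\mathbf{b})}$ can simulate $O_{\Delta\theta}^{(\mathbf{b})}$ only tells you that $U_{R_S}^{(\mathbf{b})}$ is at least as powerful, which gives no lower bound on $Q_{\A}$. What you need is the opposite direction: simulate each call to $U_{R_S}^{(\mathbf{b})}$ using $O(1)$ calls to $O_{\Delta\theta}^{(\mathbf{b})}$, so that $\A$ becomes a fractional-phase-oracle algorithm with $O(Q_{\A})$ queries solving the high-overlap problem, whence $Q_{\A} = \Omega(d/\Delta\theta)$. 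Relatedly, the claim that ``initialising the auxiliary qubit in $\ket{0}$ keeps it in $\ket{0}$ throughout'' is unjustified: $\A$ is an arbitrary algorithm and may apply arbitrary unitaries to the auxiliary qubit between oracle calls, so you cannot restrict attention to the $\ket{0}_{\mathrm{aux}}$ sector of the conjugated $Q^{(\mathbf{b})}$.

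The fix is immediate from your own factorisation. After conjugation, $Q^{(\mathbf{b})}$ acts as $\ket{j}\ket{c}_{\mathrm{aux}} \mapsto e^{i(-1)^{1-c}\Delta\theta\, b_j}\ket{j}\ket{c}_{\mathrm{aux}}$, which is exactly one controlled call to $O_{\Delta\theta}^{(\mathbf{b})}$ plus one controlled call to its inverse. Hence the specific $R_y$-rotation implementation of $U_{R_S}^{(\mathbf{b})}$ (which is a valid probability oracle, so $\A$ must succeed with it) can be built from two calls to $O_{\Delta\theta}^{(\mathbf{b})}$ together with the input-independent gates $V_0$ and $R_x(\pm\pi/2)$. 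This is precisely the paper's construction (it uses the same two controlled calls, just packaged slightly differently), and it gives $2Q_{\A} \geq \Omega(d/\Delta\theta)$.
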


\begin{proof}
	Let 
	\[y = \arcsin\sqrt{\frac{1}{2d^{\frac1q}} - \frac{8\varepsilon}{T_{\gamma}dR_{\max}}}, \qquad \text{and} \qquad \varepsilon' = \arcsin\sqrt{\frac{1}{2d^{\frac1q}}} - \arcsin\sqrt{\frac{1}{2d^{\frac1q}} - \frac{8\varepsilon}{T_{\gamma}dR_{\max}}}.\]
	Suppose that we have access to some bit string $\mathbf{b} \in \{0,1\}^d$ via the fractional phase oracle $O_{\varepsilon'}^{(\mathbf{b})}$. From \cref{lem:fractionalphasequerylowerbound}, we know that constructing a bit string $\mathbf{b}^* \in \{0,1\}^d$ such that $\norm{O(\mathbf{b} - \mathbf{b}^*)}_1 \leq d/4$ takes a number of queries to $O_{\varepsilon'}^{(\mathbf{b})}$ that scales at least as
	\[\Omega\left(\frac{d}{\varepsilon'}\right), \qquad (d \to \infty, \;\; \varepsilon' \downarrow 0).\]
	We can analyze how $\varepsilon'$ scales in terms of $T^*$, $d$, $R_{\max}$ and $\varepsilon$ by expanding the $f_a(x) = \arcsin\sqrt{a+x}$ around $x = 0$, with $a \downarrow 0$, $x \to 0$. We find that
	\begin{align}
		f_a(x) &= f_a(0) + f_a'(0)x + \O(x^2) = \arcsin\sqrt{a} + \frac{1}{\sqrt{1-a}} \cdot \frac{x}{2\sqrt{a}} + \O(f_a''(0)x^2)\nonumber \\
		&= \arcsin\sqrt{a} + \frac{x}{2\sqrt{a(1-a)}} + \O\left(\frac{x^2}{a^{3/2}}\right).
		\label{eq:taylorexpansionarcsin}
	\end{align}
	When we plug in $a = 1/(2d^{1/q})$, then
	\begin{align*}
	\varepsilon' &= f_a(0) - f_a\left(-\frac{8\varepsilon}{T_{\gamma}dR_{\max}}\right) = \frac{8\varepsilon}{T_{\gamma}dR_{\max}} \cdot \frac{1}{2\sqrt{\frac{1}{2d^{\frac1q}}\left(1-\frac{1}{2d^{\frac1q}}\right)}} + \O\left(\frac{\varepsilon^2}{R_{\max}^2T_{\gamma}^2d^{2-\frac{3}{2q}}}\right) \\
	&= \O\left(\frac{\varepsilon\sqrt{d^{\frac1q}}}{T_{\gamma}dR_{\max}}\right) = \O\left(\frac{\varepsilon}{T_{\gamma}d^{1-\frac{1}{2q}}R_{\max}}\right), \qquad (T_{\gamma},R_{\max},d \to \infty, \;\; \varepsilon \downarrow 0),
	\end{align*}
	and hence, recovering a high-overlap bit string $\mathbf{b}^* \in \{0,1\}^d$ given access to the oracle $O_{\varepsilon'}^{(\mathbf{b})}$ requires a number of queries that scales at least as
	\[\Omega\left(\frac{d}{\varepsilon'}\right) = \Omega\left(\frac{T_{\gamma}R_{\max}}{\varepsilon}d^{2-\frac{1}{2q}}\right) = \Omega\left(\frac{T^*R_{\max}}{\varepsilon}d^{2-\frac{1}{2q}}\right), \qquad (T^*, d, R_{\max} \to \infty, \;\; \varepsilon \downarrow 0).\]
	We now show that we could also construct a high-overlap bit string $\mathbf{b}^*$ with a number of calls to $O_{\varepsilon'}^{(\mathbf{b})}$ that is only $2Q_{\A}$. This is sufficient to prove that $Q_{\A}$ has to scale at least as quickly as stated in the lemma. To that end, for every $\mathbf{b} \in \{0,1\}^d$, we let
	\[\mathbf{r}^{(\mathbf{b})} = \left(\frac{R_{\max}}{2d^{\frac1q}} - \frac{8\varepsilon}{T_{\gamma}d}\right)\mathbf{1} + \frac{8\varepsilon}{T_{\gamma}d}\mathbf{b}.\]
	If we use these vectors $\mathbf{r}^{(\mathbf{b})}$ in the instance to the multivariate Monte Carlo estimation problem, as displayed in \cref{fig:LBR}, the value function becomes
	\[V^{(\mathbf{b})}(s_0) = T_{\gamma}\mathbf{r}^{(\mathbf{b})} = \left(\frac{R_{\max}T_{\gamma}}{2d^{\frac1q}} - \frac{8\varepsilon}{d}\right)\mathbf{1} + \frac{8\varepsilon}{d}\mathbf{b}.\]
	Hence, by \cref{lem:smallell1distancereduction}, with $O' = I$, the algorithm $\A$ is able to construct a high-overlap bit string $\mathbf{b}^*$ with $Q_{\A}$ calls to the probability oracle $U_{R_S}^{(\mathbf{b})}$. Now, we define the modified oracle $\overline{O}_{\varepsilon'}^{(\mathbf{b})}$, that differs from $O_{\varepsilon'}^{(\mathbf{b})}$ merely by a global phase:
	\begin{equation}
		\label{eq:modifiedfractionalphaseoracle}
		\overline{O}_{\varepsilon'}^{(\mathbf{b})} : \ket{j} \mapsto e^{iy + \varepsilon'b_j}\ket{j},
	\end{equation}
	which we can call in a controlled manner simply by performing one call to $O_{\varepsilon'}^{(\mathbf{b})}$, and applying some power of the Pauli-$Z$-gate to the control qubit. Now, suppose that we start in the state
	\[\ket{s_0}\ket{j}\ket{+},\]
	we apply the operation $\overline{O}_{\varepsilon'}^{(\mathbf{b})}$ to the second register conditioned on the final register being in state $\ket{0}$, and we apply its inverse when the final register is in state $\ket{1}$. Then, we obtain the state
	\[\ket{s_0}\ket{j}\frac{1}{\sqrt{2}}\left(e^{i(y + \varepsilon'b_j)}\ket{0} + e^{-i(y + \varepsilon'b_j)}\ket{1}\right) = \ket{s_0}\ket{j}\left(\cos\left(y+\varepsilon'b_j\right)\ket{+} + i\sin\left(y+\varepsilon'b_j\right)\ket{-}\right).\]
	With some extra single qubit gates on the last qubit can be turned into the following operation:
	\[\ket{s_0}\ket{j}\ket{0} \mapsto \ket{s_0}\ket{j}\left(\sin(y+\varepsilon'b_j)\ket{1} + \cos(y+\varepsilon'b_j)\ket{0}\right),\]
	which is a probability oracle to the function
	\begin{align*}
	& R_{\max}\sin^2\left(y\mathbf{1}+\varepsilon'\mathbf{b}\right) \\
	=\;\; & R_{\max}\sin^2\left(\arcsin\sqrt{\frac{1}{2d^{\frac1q}} - \frac{8\varepsilon}{T_{\gamma}dR_{\max}}}\mathbf{1} + \left(\arcsin\sqrt{\frac{1}{2d^{\frac1q}}} - \arcsin\sqrt{\frac{1}{2d^{\frac1q}} - \frac{8\varepsilon}{T_{\gamma}dR_{\max}}} \right)\mathbf{b}\right) \\
	=\;\; & \left(\frac{R_{\max}}{2d^{\frac1q}} - \frac{8\varepsilon}{T_{\gamma}d}\right)\mathbf{1} + \frac{8\varepsilon}{T_{\gamma}d}\mathbf{b} = \mathbf{r}^{(\mathbf{b})},
	\end{align*}
	where in the last line we used that $\mathbf{b} \in \{0,1\}^d$ and hence each entry in $\mathbf{b}$ is either $0$ or $1$. Thus, with two calls to $O_{\varepsilon'}^{(\mathbf{b})}$, we have constructed the probability oracle $U_{R_S}^{(\mathbf{b})}$, meaning that $Q_{\A}$ has to satisfy the earlier derived lower bound too. This completes the proof.
\end{proof}

Finally, we prove a lower bound when we have access to a distribution oracle, in \cref{lem:LBRdistribution}.

\begin{lemma}[Query complextity lower bounds with access to distribution oracles]
	\label{lem:LBRdistribution}
	Let $d \in \N$, $O \in \R^{d \times d}$ an orthogonal matrix, $T \in \N \cup \{\infty\}$, $0 < \varepsilon < dR_{\max}$, $q \in [1,\infty]$, and $\gamma \in [0,1]$, such that if $T = \infty$, then $\gamma < 1$. Suppose that we have a quantum algorithm $\A$, making $Q_{\A}$ queries to a probability oracle $D_{R_S}$, that solves the path-independent multivariate Monte Carlo estimation problem rotated by $O$, with rewards bounded by $R_{\max}$ in $\ell_q$-norm, up to precision $\varepsilon$ in $\ell_1$-norm. Then the number of queries to $D_{R_S}$ it makes scales at least as
	\[\Omega\left(T^*\frac{R_{\max}}{\varepsilon}d^{2-\frac1q}\right), \qquad (T^*, R_{\max}, d \to \infty, \;\; \varepsilon \downarrow 0).\]
\end{lemma}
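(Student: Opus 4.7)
The proof will follow the template of \cref{lem:LBRprobability}, reducing to the fractional-phase high-overlap bit-string problem of \cref{lem:fractionalphasequerylowerbound} via \cref{lem:smallell1distancereduction}, but with parameters calibrated to the distribution oracle's tighter normalization. Set $\alpha = \pi/4$ and
\[
\varepsilon' = \arcsin\sqrt{\tfrac{1}{2} + \tfrac{8\varepsilon}{T_\gamma R_{\max}\,d^{1-1/q}}} - \tfrac{\pi}{4},
\]
so that a Taylor expansion around $1/2$ (using that the derivative $1/(2\sqrt{a(1-a)})$ equals $1$ at $a = 1/2$), analogous to \cref{eq:taylorexpansionarcsin}, gives $\varepsilon' = \O(\varepsilon/(T^* R_{\max}\,d^{1-1/q}))$. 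By \cref{lem:fractionalphasequerylowerbound}, producing a high-overlap bit string $\mathbf{b}^* \in \{0,1\}^d$ from queries to $O_{\varepsilon'}^{(\mathbf{b})}$ then requires $\Omega(d/\varepsilon') = \Omega(T^* R_{\max}\,d^{2-1/q}/\varepsilon)$ queries.

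For each $\mathbf{b} \in \{0,1\}^d$, I will instantiate the single-state MRP of \cref{fig:LBR} with the path-independent reward $r_j^{(\mathbf{b})} = (R_{\max}/d^{1/q})\sin^2(\alpha + \varepsilon' b_j)$. The entry-wise bound $R_{\max}/d^{1/q}$ ensures both $\|\mathbf{r}^{(\mathbf{b})}\|_q \leq R_{\max}$ and $\|\mathbf{r}^{(\mathbf{b})}\|_1 \leq d^{1-1/q}R_{\max}$, matching the normalization in the definition of $D_{R_S}$. By the choice of $\varepsilon'$, $\sin^2(\alpha+\varepsilon' b_j) - \sin^2\alpha = (8\varepsilon/(T_\gamma R_{\max}\,d^{1-1/q}))\,b_j$, so $V^{(\mathbf{b})}(s_0) = T_\gamma\mathbf{r}^{(\mathbf{b})} = \mathbf{y} + (8\varepsilon/d)\mathbf{b}$ with $\mathbf{y} = (T_\gamma R_{\max}/(2d^{1/q}))\mathbf{1}$, which is exactly the form required by \cref{lem:smallell1distancereduction} with $O' = I$. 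Hence one successful call to $\A$ produces a high-overlap bit string.

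The core technical step is to simulate each query to $D_{R_S}^{(\mathbf{b})}$ using $\O(1)$ queries to $O_{\varepsilon'}^{(\mathbf{b})}$. Starting from $\ket{s_0}\ket{0}_{[d]_0}$, a query-free unitary sends $\ket{0} \mapsto (1/\sqrt{d})\sum_{j=1}^d \ket{j}$. Then, using an ancilla qubit initialized to $\ket{+}$ and the sin-cos trick from the proof of \cref{lem:LBRprobability} (one controlled application of the shifted oracle $\overline{O}_{\varepsilon'}^{(\mathbf{b})} : \ket{j} \mapsto e^{i(\alpha + \varepsilon' b_j)}\ket{j}$, one controlled inverse, and a Hadamard on the ancilla), the state becomes
\[
\frac{1}{\sqrt{d}}\sum_{j=1}^{d} \ket{j}\bigl(\cos(\alpha+\varepsilon' b_j)\ket{0}_{\text{anc}} + i\sin(\alpha+\varepsilon' b_j)\ket{1}_{\text{anc}}\bigr).
\]
The amplitude on $\ket{j}\ket{1}_{\text{anc}}$ equals $i\sqrt{r_j^{(\mathbf{b})}/(d^{1-1/q}R_{\max})}$, matching (up to the global phase $i$) the amplitude on $\ket{j}_{[d]_0}$ in $D_{R_S}^{(\mathbf{b})}\ket{s_0}\ket{0}_{[d]_0}$, while the orthogonal $\ket{\cdot}\ket{0}_{\text{anc}}$ block carries the correct total weight $1 - \|\mathbf{r}^{(\mathbf{b})}\|_1/(d^{1-1/q}R_{\max})$ of the $\ket{0}_{[d]_0}$ component.

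The main obstacle will be reconciling this construction, which uses an extra ancilla qubit, with the distribution oracle's specification on the $[d]_0$ register alone. The plan is to view the pair (input $[d]$-register, ancilla) as a valid enlarged carrier of the distribution-oracle output, identifying $\ket{j}_{[d]_0}$ with $\ket{j}\ket{1}_{\text{anc}}$ for $j \in [d]$ and embedding $\ket{0}_{[d]_0}$ into the orthogonal $\ket{\cdot}\ket{0}_{\text{anc}}$ block by any query-free unitary extension. Since \cref{lem:smallell1distancereduction} only probes the oracle on $\ket{s_0}\ket{0}_{[d]_0}$, this suffices for the reduction: $\A$ produces a high-overlap bit string using $\O(Q_\A)$ queries to $O_{\varepsilon'}^{(\mathbf{b})}$, so invoking \cref{lem:fractionalphasequerylowerbound} yields $Q_\A = \Omega(T^* R_{\max}\,d^{2-1/q}/\varepsilon)$, as claimed.
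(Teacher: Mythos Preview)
Your proof is essentially the same as the paper's: the same single-state MRP instance with reward vectors chosen so that $V^{(\mathbf{b})}(s_0) = \mathbf{y} + (8\varepsilon/d)\mathbf{b}$, the same sin-cos trick from \cref{lem:LBRprobability} to simulate $D_{R_S}^{(\mathbf{b})}$ from two calls to the (shifted) fractional phase oracle, and the same appeal to \cref{lem:smallell1distancereduction,lem:fractionalphasequerylowerbound}. The only differences are cosmetic---the paper parameterizes with $y = \arcsin\sqrt{1/2 - 8\varepsilon/(T_\gamma d^{1-1/q}R_{\max})}$ (subtracting rather than adding the perturbation) and simply asserts that the constructed state ``is a distribution oracle'' without addressing the ancilla-embedding subtlety you flag.
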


\begin{proof}
	We follow a similar argument as in the previous lemma, and refer to that proof on several occasions for brevity. Let
	\[y = \arcsin\sqrt{\frac12 - \frac{8\varepsilon}{T_{\gamma}d^{1-\frac1q}R_{\max}}}, \qquad \text{and} \qquad \varepsilon' = \arcsin\sqrt{\frac12} - \arcsin\sqrt{\frac12 - \frac{8\varepsilon}{T_{\gamma}d^{1-\frac1q}R_{\max}}}.\]
	Then, using the expansion derived in \cref{eq:taylorexpansionarcsin}, we obtain that
	\[\varepsilon' = \O\left(\frac{\varepsilon}{T_{\gamma}d^{1-\frac1q}R_{\max}}\right), \qquad (T_{\gamma},R_{\max},d \to \infty, \;\; \varepsilon \downarrow 0),\]
	which implies that in order to get a high-overlap bit string $\mathbf{b}^*$, given access to a fractional phase oracle $O_{\varepsilon'}^{(\mathbf{b})}$, \cref{lem:fractionalphasequerylowerbound} tells us that we need to call it a number of times that scales at least as
	\[\Omega\left(\frac{d}{\varepsilon'}\right) = \Omega\left(\frac{T_{\gamma}R_{\max}}{\varepsilon}d^{2-\frac1q}\right) = \Omega\left(\frac{T^*R_{\max}}{\varepsilon}d^{2-\frac1q}\right), \qquad (T^*,R_{\max},d \to \infty, \;\; \varepsilon \downarrow 0).\]
	We use the instances to the multivariate Monte Carlo estimation problem that is depicted in \cref{fig:LBR}, where for every bit string $\mathbf{b} \in \{0,1\}^d$, we use the reward vector
	\[\mathbf{r}^{(\mathbf{b})} = \left(\frac{R_{\max}}{2d^{\frac1q}} - \frac{8\varepsilon}{T_{\gamma}d}\right)\mathbf{1} + \frac{8\varepsilon}{T_{\gamma}d}\mathbf{b}.\]
	Using identical reasoning as in the proof of \cref{lem:LBRprobability}, it now suffices to prove that we can implement the distribution oracle $D_{R_S}^{(\mathbf{b})}$ with $2$ calls to $\overline{O}_{\varepsilon'}^{(\mathbf{b})}$, defined in \cref{eq:modifiedfractionalphaseoracle}. Using the same construction as in the proof of \cref{lem:LBRprobability}, we can construct the following operation with $2$ calls to $\overline{O}_{\varepsilon'}^{(\mathbf{b})}$, acting as
	\[\ket{s_0}\ket{0}\ket{0} \mapsto \ket{s_0}\left(\sum_{j=1}^d \frac{\sin\left(y+\varepsilon'b_j\right)}{\sqrt{d}}\ket{j}\ket{1} + \sum_{j=0}^d \frac{\cos\left(y+\varepsilon'b_j\right)}{\sqrt{d}}\ket{j}\ket{0}\right).\]
	This is a distribution oracle evaluating the function
	\begin{align*}
	& \frac{R_{\max}}{d^{\frac1q}}\sin^2\left(y\mathbf{1}+\varepsilon'\mathbf{b}\right) \\
	=\;\; & \frac{R_{\max}}{d^{\frac1q}}\sin^2\left(\arcsin\sqrt{\frac12 - \frac{8\varepsilon}{T_{\gamma}d^{1-\frac1q}R_{\max}}}\mathbf{1} + \left(\arcsin\sqrt{\frac12} - \arcsin\sqrt{\frac12 - \frac{8\varepsilon}{T_{\gamma}d^{1-\frac1q}R_{\max}}}\right)\mathbf{b}\right) \\
	=\;\; & \left(\frac{R_{\max}}{2d^{\frac1q}} - \frac{8\varepsilon}{T_{\gamma}d}\right)\mathbf{1} + \frac{8\varepsilon}{T_{\gamma}d}\mathbf{b} = \mathbf{r}^{(\mathbf{b})},
	\end{align*}
	and hence we have constructed $D_{R_S}^{(\mathbf{b})}$ with $2$ calls to $\overline{O}_{\varepsilon}^{(\mathbf{b})}$. This implies that the number of calls to $D_{R_S}^{(\mathbf{b})}$ made by $\A$ must be at least the lower bound we derived earlier, completing the proof.
\end{proof}

\subsection{Lower bounds on the query complexity to the probability transition matrix}
\label{subsec:LBP}

In this subsection, we focus on lower bounds on the number of calls we need to make to the state transition matrix oracle, $D_P$, in order to solve the multivariate Monte Carlo estimation problem. We provide a lower bound for the path-independent setting, from which we derive lower bounds for the other settings as well.

Similar to the previous section, we use the hardness of the high-overlap bit-string problem. Contrary to the previous section, though, we do not assume fractional phase oracle access, but instead we provide a construction in which the individual bits in the high-overlap problem are computed via a composition of a majority and a parity function. In \cref{lem:DPlowerboundhardness}, we analyze the hardness of the resulting composed problem.

\begin{lemma}[Composition of high-overlap bit string problem with majority and parity]
	\label{lem:DPlowerboundhardness}
	Let $d,k,T \in \N$, with $k$ and $T$ odd. Define the domain
	\begin{equation}
		\label{eq:defD}
		D_{d,k,T} = \left\{\mathbf{b} \in \{0,1\}^d \times \{0,1\}^k \times \{0,1\}^T : \forall j \in [d], \sum_{\ell=1}^k \Parity((b_{j,\ell,t})_{t=1}^T) \in \left\{\left\lfloor \frac{k}{2}\right\rfloor, \left\lceil \frac{k}{2} \right\rceil\right\}\right\},
	\end{equation}
	and for every element $\mathbf{b} \in D_{d,k,T}$, let $\mathbf{c}_{\mathbf{b}} \in \{0,1\}^d$ be such that for all $j \in [d]$,
	\begin{equation}
		\label{eq:defc}
		(c_{\mathbf{b}})_j = \begin{cases}
			1, & \text{if } \sum_{\ell=1}^k \Parity((b_{j,\ell,t})_{t=1}^T) = \lceil k/2 \rceil, \\
			0, & \text{otherwise}.
		\end{cases}
	\end{equation}
	Now suppose that the input element from $\mathbf{b} \in D_{d,k,T}$ is encoded in a phase oracle $O_{\mathbf{b}}$, acting on three registers containing $d$, $k$ and $T$ orthogonal states each, and defined as
	\[O_{\mathbf{b}} : \ket{j}\ket{\ell}\ket{t} \mapsto (-1)^{b_{j,\ell,t}}\ket{j}\ket{\ell}\ket{t}.\]
	Then, any quantum algorithm that finds a vector $\mathbf{c}^* \in \{0,1\}^d$ such that $\norm{O(\mathbf{c}_{\mathbf{b}} - \mathbf{c}^*)}_1 \leq d/4$ needs to make a number of queries to $O_{\mathbf{b}}$ that scales at least as $\Omega\left(dkT\right)$, in the limit $d,k,T \to \infty$.
\end{lemma}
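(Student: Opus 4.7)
My plan is to recognize the problem as a composition of three layers --- an outer ``high-overlap'' relation over $d$ bits, a middle promise-majority on $k$ bits, and an inner parity on $T$ bits --- and then lower bound the general adversary bound of this composed relation by multiplying the adversary bounds of the individual layers, invoking the relational extension of the composition theorem from \cite{Belovs15}. Concretely, I would recast the task as evaluating a relation $R \subseteq D_{d,k,T} \times \{0,1\}^d$ where $(\mathbf{b}, \mathbf{c}^*) \in R$ iff $\norm{O(\mathbf{c}_\mathbf{b} - \mathbf{c}^*)}_1 \leq d/4$, with input $\mathbf{b}$ accessed through $O_\mathbf{b}$; by \cite{Belovs15}, Theorem 32, with difference matrices $\Delta_{\mathbf{b}, \mathbf{b}'} = O_\mathbf{b} - O_{\mathbf{b}'}$, the quantum query complexity of $R$ coincides with $\ADV^{\pm}(R, \Delta)$ up to constants.

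I would then gather the three ingredient bounds. The parity function on $T$ bits satisfies $\ADV^{\pm}(\Parity_T) = \Omega(T)$, which is standard (e.g.\ from the polynomial method of Beals et al., or directly from the adversary method applied to the hard instance pair differing in every bit). Promise majority on $k$ bits, restricted to Hamming weight in $\{\lfloor k/2 \rfloor, \lceil k/2 \rceil\}$, satisfies $\ADV^{\pm}(\mathrm{Maj}_k) = \Omega(k)$, the classical quantum lower bound for threshold functions at the boundary. Finally, \cref{lem:highoverlapbitstring} already yields $\ADV^{\pm}(R_{\mathrm{hi}}) = \Omega(d)$ for the outer high-overlap relation when $\mathbf{c}_\mathbf{b}$ is accessed through a phase oracle on its bits. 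Invoking the composition theorem in its relational form, I would assemble an adversary matrix for $R$ from the three witnessing matrices via the usual tensoring construction; the objective values multiply, giving
\[\ADV^{\pm}(R, \Delta) \;=\; \Omega\bigl(\ADV^{\pm}(R_{\mathrm{hi}}) \cdot \ADV^{\pm}(\mathrm{Maj}_k) \cdot \ADV^{\pm}(\Parity_T)\bigr) \;=\; \Omega(dkT),\]
which translates back into the desired lower bound on the number of queries to $O_\mathbf{b}$.

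The main obstacle will be the composition step itself. Standard $\ADV^{\pm}$ composition theorems (Reichardt, Lee--Mittal--Reichardt--\v{S}palek--Szegedy) are stated for Boolean function composition, so adapting them to a relation composed with two functions requires care in how the diagonal ``error'' matrices $N_\mathbf{b}$ appearing in the Belovs SDP tensor across the three layers, and in verifying that the promise domain $D_{d,k,T}$ factors consistently with the inner function domains (which it does by construction, as the promise on the number of satisfied parities is imposed coordinate-wise on the majority input). Establishing that an optimal \emph{relational} adversary matrix for the outer layer can be lifted through the inner functional layers while preserving both the spectral objective and the error constraints --- precisely the novel combination of function and relation adversary bounds the introduction alludes to --- is the key technical step and the one I would expect to consume most of the proof.
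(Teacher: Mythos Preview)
Your approach is essentially the same as the paper's: decompose into the high-overlap relation composed with promise-majority composed with parity, and multiply adversary bounds. You correctly identify that the relation-over-function composition is the crux and that no off-the-shelf theorem covers it.

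Two points of comparison. First, a minor structural difference: rather than treating all three layers at once, the paper first collapses majority$\,\circ\,$parity into a single Boolean function $f$ on $kT$ bits using Reichardt's function-composition theorem \cite{Reichardt10}, obtaining $\ADV^\pm(f)=\Theta(kT)$, and only then handles the relation-function composition $r\bullet f$. This reduces the novel step to a single relation-over-function layer rather than relation-over-function-over-function. Second, and more important: where you say you would ``invoke the composition theorem in its relational form,'' the paper has to \emph{build} that theorem, since none exists in the literature. Concretely, it takes optimal solutions $(\Gamma_r,N_r)$ to the Belovs relational SDP and an optimal $\Gamma_f$ to the functional adversary SDP, and writes down an explicit candidate pair $(\Gamma_{r\bullet f},N_{r\bullet f})$ following the HLS07 tensoring recipe \cite{HLS07}. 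Feasibility of the error constraint $\Gamma_{r\bullet f}\preceq N_{r\bullet f}\circ E_b$ is established via Schur's product theorem, factoring the difference as an elementwise product of a $\preceq 0$ block (coming from $\Gamma_r-N_r\circ E_b$) and a $\succeq 0$ block ($(\Gamma_f+\|\Gamma_f\|I)^{\otimes d}$); this step additionally requires that the preimages $f^{-1}(0)$ and $f^{-1}(1)$ have equal cardinality, which is where the oddness of $k$ and $T$ is used. The objective value is then read off from an explicit eigenvector built as a tensor product of the witnessing eigenvectors for $r$ and $f$.

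So your plan is correct as a plan, but the acknowledged gap---the relation-function composition---is the entire technical content of the lemma, and your proposal gives no indication of how the $N$ matrices should be tensored or how the feasibility constraints survive; the paper's proof supplies exactly that construction and verification.
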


\begin{proof}
	First, we define the function $f : D'_{k,T} \to \{0,1\}$, with
	\[D'_{k,T} = \left\{\mathbf{b} \in \{0,1\}^k \times \{0,1\}^T : \sum_{\ell=1}^k \Parity((b_{\ell,t})_{t=1}^T) \in \left\{\left\lfloor \frac{k}{2} \right\rfloor, \left\lceil \frac{k}{2} \right\rceil \right\}\right\},\]
	as
	\[f(\mathbf{b}) = \begin{cases}
		1, & \text{if } \sum_{\ell=1}^k \Parity((b_{\ell,t})_{t=1}^T) = \lceil k/2 \rceil, \\
		0, & \text{otherwise}.
	\end{cases}\]
	Since this function is the composition of majority on $k$ bits, with inputs having Hamming weight either $k/2-1$ or $k/2$, and parity on $T$ bits, we know by the composition theorem, Theorem 1.5 in \cite{Reichardt10}, that the query complexity of this function is $\Theta(k \cdot T)$, with $k,T \to \infty$.
	
	Next, we define the relation $r \subseteq \{0,1\}^d \times \{0,1\}^d$, as
	\[r = \left\{(\mathbf{b},\mathbf{b}^*) \in \{0,1\}^d \times \{0,1\}^d : \norm{O(\mathbf{b} - \mathbf{b}^*)}_1 \leq \frac{d}{4}\right\}.\]
	From \cref{lem:highoverlapbitstring}, we know that it takes at least $\Omega(d)$ queries to evaluate this relation, where $d \to \infty$.
	
	We now observe that the problem posed in the lemma statement is the composition of these two problems. If we have $2$ elements in $\mathbf{b}, \mathbf{b}^* \in D_{d,k,T}$, then we say that they are related by the relation $r \bullet f$ if $(f(b_{j,\cdot,\cdot}))_{j=1}^d$ and $(f(b_{j,\cdot,\cdot}^*))_{j=1}^d$ are related by $r$. It remains to show that evaluating $r \bullet f$ takes at least the product of the number of queries it takes to evaluate $r$ and $f$ individually.
	
	To that end, suppose that we have optimal adversary matrices $\Gamma_r, N_r \in \R^{\{0,1\}^d \times \{0,1\}^d}$ for the adversary bound for the relation $r$, given by Equation 21 in \cite{Belovs15},
	\begin{align*}
	\max \;\; & \lambda_{\max} \left(\Gamma - \frac13N\right) \\
	\text{s.t.} \;\; & \norm{\Gamma \circ \Delta_r} \leq 1 \\
	& \Gamma \preceq N \circ E_b, \qquad \forall b \in \{0,1\},
	\end{align*}
	and that we have the optimal adversary matrix $\Gamma_f \in \R^{D'_{k,T} \times D'_{k,T}}$ for the adversary bound for the function $f$, given by Equation 24 in \cite{Belovs15},
	\begin{align*}
	\max \;\; & \norm{\Gamma} \\
	\text{s.t.} \;\; & \norm{\Gamma \circ \Delta_f} \leq 1 \\
	& \Gamma[f^{-1}(b),f^{-1}(b)] = 0, \qquad \forall b \in \{0,1\}.
	\end{align*}
	We now follow the proof from Section 6.1 in \cite{HLS07}. Specifically, we construct an explicit solution to the dual adversary bound for the relation $r \bullet f$ from $\Gamma_r$, $N_r$ and $\Gamma_f$, in a way that is inspired by Definition 6 in \cite{HLS07}. That is, we let $\Gamma_{r \bullet f} \in \R^{D_{d,k,T} \times D_{d,k,T}}$ be such that for all $x,y \in D$,
	\[\Gamma_{r \bullet f}[x,y] = \frac{2}{\norm{\Gamma_f}^{d-1}} \cdot \Gamma_r[(f(x_{j,\cdot,\cdot}))_{j=1}^d,(f(x_{j,\cdot,\cdot}))_{j=1}^d] \cdot \prod_{j=1}^d \begin{cases}
	\norm{\Gamma_f}\delta_{(x_{j,\cdot,\cdot}),(y_{j,\cdot,\cdot})}, & \text{if } f(x_{j,\cdot,\cdot}) = f(y_{j,\cdot,\cdot}), \\
	\Gamma_f[x_{j,\cdot,\cdot},y_{j,\cdot,\cdot}], & \text{otherwise},
	\end{cases}\]
	and we let $N_{r \bullet f} \in \R^{D_{d,k,T} \times D_{d,k,T}}$ be a diagonal matrix such that for all $x \in D_{d,k,T}$,
	\[N_{r \bullet f}[x,x] = 2\norm{\Gamma_f} \cdot N_r[(f(x_{j,\cdot,\cdot}))_{j=1}^d,(f(x_{j,\cdot,\cdot}))_{j=1}^d].\]
	Note that these definitions differ from Definition 6 in \cite{HLS07} by a prefactor of $2/\norm{\Gamma_f}^{d-1}$. On top of that, note that the matrix $D_f$ in \cite{HLS07} differs with a factor of $2$ from the matrix $\Delta_f$ from \cite{Belovs15}. With these subtleties in mind, the sentence at the top of page 21 of \cite{HLS07} implies that
	\begin{align}
	\norm{\Gamma_{r \bullet f} \circ \Delta_{r \bullet f}} &=  \frac{4}{\norm{\Gamma_f}^{d-1}} \norm{\frac{\norm{\Gamma_f}^{d-1}}{2}\Gamma_{r \bullet f} \circ \frac12\Delta_{r \bullet f}} = \frac{4}{\norm{\Gamma_f}^{d-1}} \norm{\Gamma_r \circ \frac12\Delta_r} \cdot \norm{\Gamma_f \circ \frac12\Delta_f} \cdot \norm{\Gamma_f}^{d-1}\nonumber \\
	&= \norm{\Gamma_r \circ \Delta_r} \cdot \norm{\Gamma_f \circ \Delta_f} \leq 1.
	\label{eq:compositionconstraintbound}
	\end{align}
	We also observe that, for all $x,y \in D_{d,k,T}$ and $b \in \{0,1\}$,
	\begin{align}
		\left[\Gamma_{r \bullet f} - N_{r \bullet f} \circ E_{r \bullet f, b}\right][x,y] = \; &\frac{2}{\norm{\Gamma_f}^{d-1}} \cdot \left[\Gamma_r - N_r \circ E_{r, b}\right][(f(x_{j,\cdot,\cdot}))_{j=1}^d, (f(x_{j,\cdot,\cdot}))_{j=1}^d] \nonumber \\
		&\cdot \prod_{j=1}^d \begin{cases}
		\norm{\Gamma_f}\delta_{(x_{j,\cdot,\cdot}),(y_{j,\cdot,\cdot})}, & \text{if } f(x_{j,\cdot,\cdot}) = f(y_{j,\cdot,\cdot}), \\
		\Gamma_f[x_{j,\cdot,\cdot},y_{j,\cdot,\cdot}], & \text{otherwise}.
		\end{cases}
		\label{eq:matrixproduct}
	\end{align}
	Thus, neglecting the constant in front, the matrix $A = \Gamma_{r \bullet f} - N_{r \bullet f} \circ E_{r \bullet f}$ can be written as the element-wise product of two matrices. Hence, in order to prove that $A \preceq 0$, by Schur's product theorem, it suffices to show that the left and right factors in \cref{eq:matrixproduct} are $\preceq 0$ and $\succeq 0$, respectively.
	
	We start with the right factor, since we can easily observe that this can be written as
	\[(\Gamma_f + \norm{\Gamma_f}I)^{\otimes d}[x,y] = \prod_{j=1}^d (\Gamma_f + \norm{\Gamma_f}I)[x_{j,\cdot,\cdot},y_{j,\cdot,\cdot}] = \prod_{j=1}^d \begin{cases}
	\norm{\Gamma_f}\delta_{(x_{j,\cdot,\cdot}),(y_{j,\cdot,\cdot})}, & \text{if } f(x_{j,\cdot,\cdot}) = f(y_{j,\cdot,\cdot}), \\
	\Gamma_f[x_{j,\cdot,\cdot},y_{j,\cdot,\cdot}], & \text{otherwise}.
	\end{cases}\]
	Moreover, it trivially holds that $\Gamma_f + \norm{\Gamma_f}I$ is positive semidefinite, and hence so is its $d$-fold tensor product.
	
	Next, we focus at the left factor in \cref{eq:matrixproduct}. To that end, let $\mathbf{b} \in D'_{k,T} \subseteq \{0,1\}^k \times \{0,1\}^T$, and let $\overline{\mathbf{b}}$ be the bit-wise negation of $\mathbf{b}$. Then, since $T$ is odd, we find that
	\[\sum_{\ell=1}^k \Parity((\overline{b}_{\ell,t})_{t=1}^T) = \sum_{\ell=1}^k (1 - \Parity((b_{\ell,t})_{t=1}^T))  = k - \sum_{\ell=1}^T \Parity((b_{\ell,t})_{t=1}^T),\]
	and since $k$ is odd, we find that $f(\overline{\mathbf{b}}) = 1 - f(\mathbf{b})$. Thus, bit-wise negation defines a bijection between $0$- and $1$-instances of the function $f$, and hence there are equally many of them. This implies that there are equally many instances $\mathbf{b} \in D$, for any resulting bit string $(f(b_{j,\cdot,\cdot}))_{j=1}^d$ as well. Thus, the left product in \cref{eq:matrixproduct} can be rearranged by permuting the rows and columns in such a way that every entry of $\Gamma_r - N_r \circ E_{r,b}$ gets blown up to a block of all equal values. This multiplies all elements in the spectrum by the same factor, and hence from $\Gamma_r - N_r \circ E_{r,b} \preceq 0$, we find that this bigger matrix is $\preceq 0$ as well. Thus, we find that $\Gamma_{r \bullet f} - N_{r \bullet f} \circ E_{r \bullet f,b} \preceq 0$, ensuring that the tuple $(\Gamma_{r \bullet f}, N_{r \bullet f})$ is indeed a feasible solution to the general adversary bound for the relation $r \bullet f$.
	
	Finally, without loss of generality, we can assume that there exists an eigenvector $\mathbf{v}_f \in \R^{D'_{k,T}}$ of $\Gamma_f$ with eigenvalue $\norm{\Gamma_f}$, because if it does not exist, then $\Gamma_f$ must admit an eigenvector with eigenvalue $-\norm{\Gamma_f}$, in which case we can always multiply $\Gamma_f$ with $-1$, without leaving the feasible region. Similarly, let $\mathbf{v}_r \in \R^d$ be an eigenvector of $\Gamma_r - N_r/3$, with largest eigenvalue $\lambda$. Next, define $\mathbf{v}_{r \bullet f} \in \R^{D_{d,k,T}}$, with
	\[v_{r \bullet f}[x] = v_r[(f(x_{j,\cdot,\cdot}))_{j=1}^d] \cdot \prod_{j=1}^d v_f[x_{j,\cdot,\cdot},y_{j,\cdot,\cdot}] = v_r[(f(x_{j,\cdot,\cdot}))_{j=1}^d] \cdot (\mathbf{v}_f)^{\otimes d}[x,y].\]
	Then, following the same reasoning as in the proof of Lemma 16 in \cite{HLS07}, we find that $v_{r \bullet f}$ is an eigenvector of $\Gamma_{r \bullet f} - N_{r \bullet f}/3$, with eigenvalue $\lambda \cdot \norm{\Gamma_f}$. Thus, we find that
	\[\lambda_{\max}\left(\Gamma_{r \bullet f} - \frac13N_{r \bullet f}\right) \geq \lambda_{\max} \left(\Gamma_r - \frac13N_r\right) \cdot \norm{\Gamma_f},\]
	and so the optimal value of the adversary bound of the relation $r \bullet f$ is at least as large the product of the optimal values of the adversary bounds of the relation $r$ and the function $f$. This completes the proof.
\end{proof}

Finally, we construct instances of the multivariate Monte Carlo estimation problem that feature a value function of the form considered in \cref{lem:smallell1distancereduction}, and whose probability transition oracle can be constructed from the oracle considered in \cref{lem:DPlowerboundhardness}. The depth of the instances that we are using depends on the choice of $T$ and $\gamma$. This amounts to encoding the composition of the majority and parity functions in a carefully crafted instance. Specifically, we define the quantity
\[T' = \max\left\{32,2\left(2\left\lfloor\frac{T_{\gamma}}{4}\right\rfloor - 1\right)\right\},\]
where we use the definition of $T_{\gamma}$ from \cref{eq:Tgamma}. It follows directly from \cref{eq:Tgammascaling} that $T' = \Omega(T^*)$, when $T^* \to \infty$. The resulting instances are graphically depicted in \cref{fig:DPlowerboundinstances}.

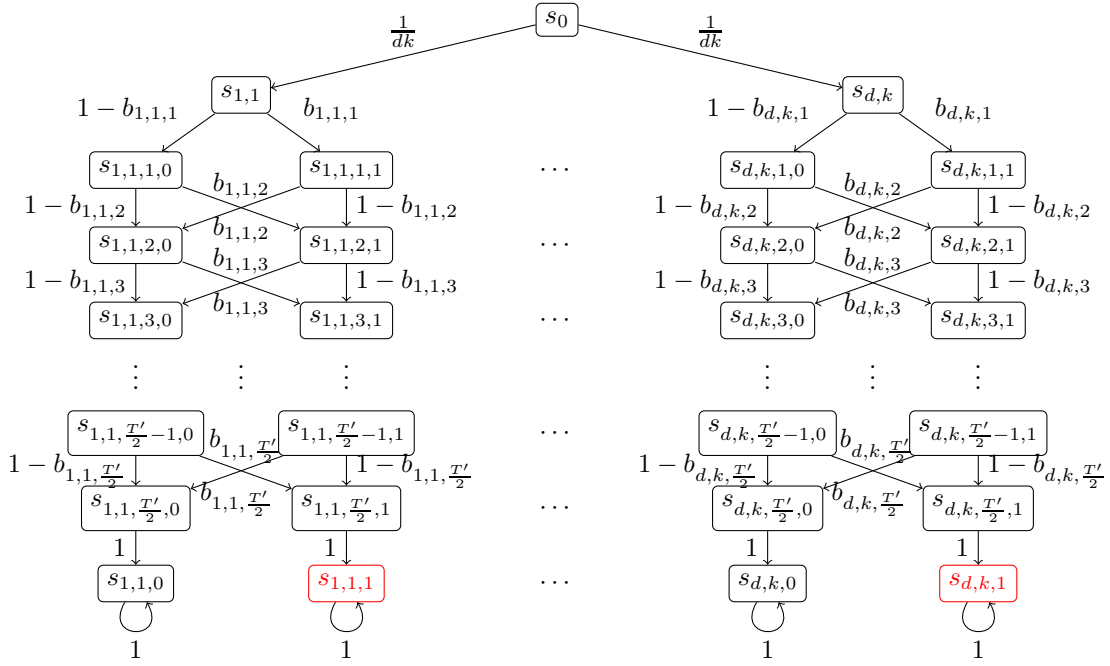
\begin{figure}[h!]
	\centering
	\begin{tikzpicture}[xscale=1.4,yscale=-1,state/.style={draw, rounded corners=.2em},every loop/.style={min distance=3em,looseness=10}]
	\begin{scope}[shift={(-3,1)}]
	\node[state] (s110) at (0,0) {$s_{1,1}$};
	\node[state] (s10) at (-1,1) {$s_{1,1,1,0}$};
	\node[state] (s11) at (1,1) {$s_{1,1,1,1}$};
	\node[state] (s20) at (-1,2) {$s_{1,1,2,0}$};
	\node[state] (s21) at (1,2) {$s_{1,1,2,1}$};
	\node[state] (s30) at (-1,3) {$s_{1,1,3,0}$};
	\node[state] (s31) at (1,3) {$s_{1,1,3,1}$};
	\node at (-1,3.65) {$\vdots$};
	\node at (0,3.65) {$\vdots$};
	\node at (1,3.65) {$\vdots$};
	\node[state] (sT10) at (-1,4.5) {$s_{1,1,\frac{T'}{2}-1,0}$};
	\node[state] (sT11) at (1,4.5) {$s_{1,1,\frac{T'}{2}-1,1}$};
	\node[state] (sT0) at (-1,5.5) {$s_{1,1,\frac{T'}{2},0}$};
	\node[state] (sT1) at (1,5.5) {$s_{1,1,\frac{T'}{2},1}$};
	\node[state] (terminal0) at (-1,6.5) {$s_{1,1,0}$};
	\node[state,red] (terminal1) at (1,6.5) {$s_{1,1,1}$};
	\draw[->] (s110) to node[above left] {$1-b_{1,1,1}$} (s10);
	\draw[->] (s110) to node[above right] {$b_{1,1,1}$} (s11);
	\draw[->] (s10) to node[left] {$1-b_{1,1,2}$} (s20);
	\draw[->] (s10) to node[above] {$b_{1,1,2}$} (s21);
	\draw[->] (s11) to node[below] {$b_{1,1,2}$} (s20);
	\draw[->] (s11) to node[right] {$1-b_{1,1,2}$} (s21);
	\draw[->] (s20) to node[left] {$1-b_{1,1,3}$} (s30);
	\draw[->] (s20) to node[above] {$b_{1,1,3}$} (s31);
	\draw[->] (s21) to node[below] {$b_{1,1,3}$} (s30);
	\draw[->] (s21) to node[right] {$1-b_{1,1,3}$} (s31);
	\draw[->] (sT10) to node[left] {$1-b_{1,1,\frac{T'}{2}}$} (sT0);
	\draw[->] (sT10) to node[above] {$b_{1,1,\frac{T'}{2}}$} (sT1);
	\draw[->] (sT11) to node[below] {$b_{1,1,\frac{T'}{2}}$} (sT0);
	\draw[->] (sT11) to node[right] {$1-b_{1,1,\frac{T'}{2}}$} (sT1);
	\draw[->] (sT0) to node[left] {$1$} (terminal0);
	\draw[->] (sT1) to node[left] {$1$} (terminal1);
	\draw[->] (terminal0) to[loop above] node[below] {$1$} (terminal0);
	\draw[->] (terminal1) to[loop above] node[below] {$1$} (terminal1);
	\end{scope}
	\begin{scope}[shift={(3,1)}]
	\node[state] (t0) at (0,0) {$s_{d,k}$};
	\node[state] (t10) at (-1,1) {$s_{d,k,1,0}$};
	\node[state] (t11) at (1,1) {$s_{d,k,1,1}$};
	\node[state] (t20) at (-1,2) {$s_{d,k,2,0}$};
	\node[state] (t21) at (1,2) {$s_{d,k,2,1}$};
	\node[state] (t30) at (-1,3) {$s_{d,k,3,0}$};
	\node[state] (t31) at (1,3) {$s_{d,k,3,1}$};
	\node at (-1,3.65) {$\vdots$};
	\node at (0,3.65) {$\vdots$};
	\node at (1,3.65) {$\vdots$};
	\node[state] (tT10) at (-1,4.5) {$s_{d,k,\frac{T'}{2}-1,0}$};
	\node[state] (tT11) at (1,4.5) {$s_{d,k,\frac{T'}{2}-1,1}$};
	\node[state] (tT0) at (-1,5.5) {$s_{d,k,\frac{T'}{2},0}$};
	\node[state] (tT1) at (1,5.5) {$s_{d,k,\frac{T'}{2},1}$};
	\node[state] (tterminal0) at (-1,6.5) {$s_{d,k,0}$};
	\node[state,red] (tterminal1) at (1,6.5) {$s_{d,k,1}$};
	\draw[->] (t0) to node[above left] {$1-b_{d,k,1}$} (t10);
	\draw[->] (t0) to node[above right] {$b_{d,k,1}$} (t11);
	\draw[->] (t10) to node[left] {$1-b_{d,k,2}$} (t20);
	\draw[->] (t10) to node[above] {$b_{d,k,2}$} (t21);
	\draw[->] (t11) to node[below] {$b_{d,k,2}$} (t20);
	\draw[->] (t11) to node[right] {$1-b_{d,k,2}$} (t21);
	\draw[->] (t20) to node[left] {$1-b_{d,k,3}$} (t30);
	\draw[->] (t20) to node[above] {$b_{d,k,3}$} (t31);
	\draw[->] (t21) to node[below] {$b_{d,k,3}$} (t30);
	\draw[->] (t21) to node[right] {$1-b_{d,k,3}$} (t31);
	\draw[->] (tT10) to node[left] {$1-b_{d,k,\frac{T'}{2}}$} (tT0);
	\draw[->] (tT10) to node[above] {$b_{d,k,\frac{T'}{2}}$} (tT1);
	\draw[->] (tT11) to node[below] {$b_{d,k,\frac{T'}{2}}$} (tT0);
	\draw[->] (tT11) to node[right] {$1-b_{d,k,\frac{T'}{2}}$} (tT1);
	\draw[->] (tT0) to node[left] {$1$} (tterminal0);
	\draw[->] (tT1) to node[left] {$1$} (tterminal1);
	\draw[->] (tterminal0) to[loop above] node[below] {$1$} (tterminal0);
	\draw[->] (tterminal1) to[loop above] node[below] {$1$} (tterminal1);
	\end{scope}
	\node[state] (s0) at (0,0) {$s_0$};
	\node at (0,2) {$\cdots$};
	\node at (0,3) {$\cdots$};
	\node at (0,4) {$\cdots$};
	\node at (0,5.5) {$\cdots$};
	\node at (0,6.5) {$\cdots$};
	\node at (0,7.5) {$\cdots$};
	\draw[->] (s0) to node[above] {$\frac{1}{dk}$} (s110);
	\draw[->] (s0) to node[above] {$\frac{1}{dk}$} (t0);
	\end{tikzpicture}
	\caption{The instances we use in the lower bound proof on the query complexity to the probability-transition oracle, in \cref{lem:DPlowerboundproof}. The input consists of bits $b_{j,\ell,t} \in \{0,1\}$, with $j \in [d]$, $\ell \in [k]$, and $t \in [T'/2]$, where $k \in \N$. Only the red nodes give non-zero rewards: the node $s_{j,k,1}$ gives reward $R_{\max}\mathbf{e}_j$ if $q \in [1,2]$, and $R_{\max}d^{\frac12-\frac1q}H\mathbf{e}_j$, if $q \in (2,\infty]$, where $H$ is a $d$-dimensional Hadamard matrix.}
	\label{fig:DPlowerboundinstances}
\end{figure}

\begin{lemma}[Lower bound on the query complexity to the probability transition oracle]
	\label{lem:DPlowerboundproof}
	Let $d \in \N$, $\gamma \in [0,1]$, $q \in [1,\infty]$, $0 < \varepsilon < R_{\max}(d/2)^{\max\{0,\frac12-\frac1q\}}$ and $T \in \N \cup \{\infty\}$, such that if $T = \infty$, then $\gamma < 1$. Let $O \in \R^{d \times d}$ be an orthogonal matrix. Suppose that $\A$ is a quantum algorithm solving the multivariate Monte Carlo estimation problem rotated by $O$ up to precision $\varepsilon$ w.r.t.\ the $\ell_1$-norm, with $Q_P$ calls to the probability transition oracle $D_P$. Then,
	\[Q_P = \Omega\left((T^*)^2\frac{R_{\max}}{\varepsilon}d^{\max\{1,\frac32-\frac1q\}}\right), \qquad (T^*,R_{\max},d \to \infty, \;\; \varepsilon \downarrow 0).\]
\end{lemma}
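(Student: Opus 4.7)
The plan is to reduce the problem of producing a high-overlap bit string to the path-independent multivariate Monte Carlo problem on the instances depicted in \cref{fig:DPlowerboundinstances}, and then invoke the hardness result from \cref{lem:DPlowerboundhardness}. For each input $\mathbf{b} \in D_{d,k,T'/2}$ (with $k$ an odd integer to be chosen below and $T' = \max\{32, 2(2\lfloor T_\gamma/4\rfloor-1)\}$ chosen so that $T'/2$ is odd and $T' = \Theta(T^*)$ by \cref{eq:Tgammascaling}), we consider the following Markov reward process. From $s_0$ the walk moves uniformly to one of the $dk$ branch roots $s_{j,\ell}$, and then deterministically traverses a ``parity tree'' of depth $T'/2$ driven by $(b_{j,\ell,t})_{t=1}^{T'/2}$, landing on the absorbing state $s_{j,\ell,P_{j,\ell}}$ with $P_{j,\ell} = \Parity((b_{j,\ell,t})_{t=1}^{T'/2})$. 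Only the terminal states $s_{j,\ell,1}$ carry a non-zero reward, equal to $R_{\max}\mathbf{e}_j$ when $q \in [1,2]$ and $R_{\max}d^{\frac12-\frac1q}H\mathbf{e}_j$ otherwise; in both cases the reward vectors have $\ell_q$-norm exactly $R_{\max}$.

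Using the domain constraint $\sum_\ell P_{j,\ell} \in \{(k-1)/2,(k+1)/2\}$, the path-independent value function evaluates to
\[
V^{(\mathbf{b})}(s_0) \;=\; S_\gamma \cdot \frac{R_{\max}\,d^{\max\{0,\frac12-\frac1q\}}}{dk}\, O'\!\left(\tfrac{k-1}{2}\mathbf{1} + \mathbf{c}_{\mathbf{b}}\right),
\]
where $O' \in \{I,H\}$ according to the regime of $q$, and $S_\gamma = \sum_{t=T'/2+2}^{T}\gamma^t = \Theta(T^*)$ by a calculation analogous to the one behind \cref{eq:Tgammascaling}. Picking $k$ as the smallest odd integer with $S_\gamma R_{\max}d^{\max\{0,\frac12-\frac1q\}}/(dk) = 8\varepsilon/d$, i.e.\ $k = \Theta((T^*R_{\max}/\varepsilon)\,d^{\max\{0,\frac12-\frac1q\}})$, rewrites $V^{(\mathbf{b})}(s_0)$ in the form $\mathbf{y} + (8\varepsilon/d)\,O'\mathbf{c}_{\mathbf{b}}$ required by \cref{lem:smallell1distancereduction}. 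The hypothesis $\varepsilon < R_{\max}(d/2)^{\max\{0,\frac12-\frac1q\}}$ guarantees $k \geq 3$, and orthogonality of $O'$ makes $OO'$ orthogonal, so the reduction applies with target rotation $OO'$.

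I then check that a single query to $D_P$ on these instances costs only $O(1)$ queries to the phase oracle $O_{\mathbf{b}}$ of \cref{lem:DPlowerboundhardness}. All transitions are either query-free (the root broadcast, the jump from the last tree level to the terminal states, and the absorbing loops) or happen between two adjacent levels of a parity tree, where the outgoing distribution depends on a single bit $b_{j,\ell,t}$. A controlled call to $O_{\mathbf{b}}$ together with a Hadamard on an ancilla yields the bit oracle $\ket{j}\ket{\ell}\ket{t}\ket{y}\mapsto\ket{j}\ket{\ell}\ket{t}\ket{y\oplus b_{j,\ell,t}}$, which can be used once to coherently produce the correct two-outcome branch and once to uncompute the ancilla, realizing the desired action of $D_P$ at constant cost.

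Combining \cref{lem:smallell1distancereduction} with this simulation, any algorithm $\A$ solving MVMC on the instances above using $Q_P$ queries to $D_P$ produces a high-overlap $\mathbf{c}^*$ for $\mathbf{c}_{\mathbf{b}}$ with $O(Q_P)$ queries to $O_{\mathbf{b}}$, and \cref{lem:DPlowerboundhardness} then forces $Q_P = \Omega(d\,k\,T'/2) = \Omega\!\left((T^*)^2(R_{\max}/\varepsilon)\,d^{\max\{1,\frac32-\frac1q\}}\right)$, which is exactly the claimed bound. The main subtlety is the case split at $q=2$: the Hadamard rotation is exactly what allows the reward vectors to fit within the $\ell_q$-bound while the value function still encodes $\mathbf{c}_{\mathbf{b}}$ with the $\ell_1$-``room'' needed by \cref{lem:smallell1distancereduction}, and this rotation is responsible for the factor $d^{\max\{0,\frac12-\frac1q\}}$ entering $k$, and hence the asymmetric dimension exponent in the final bound.
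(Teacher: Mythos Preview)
Your proposal is correct and follows essentially the same route as the paper: the same family of parity-tree instances from \cref{fig:DPlowerboundinstances}, the same $q\le 2$ versus $q>2$ case split with the Hadamard rotation in the latter case, the reduction via \cref{lem:smallell1distancereduction}, and the hardness from \cref{lem:DPlowerboundhardness}, together with the observation that each call to $D_P$ can be simulated with $O(1)$ calls to the bit oracle $O_{\mathbf{b}}$. Two small technicalities the paper makes explicit that you glossed over: since $k$ must be an odd integer you cannot hit $8\varepsilon/d$ exactly, so the paper fixes $k$ first and then works with the resulting $\varepsilon'\ge\varepsilon$ (an $\varepsilon$-precise algorithm is automatically $\varepsilon'$-precise); and Hadamard matrices are not known to exist in every dimension, so the paper passes to a nearby $d'\ge d/2$ that is a power of two.
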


\begin{proof}
	We first focus on the case where $q \in [1,2]$. In this case, we need to prove that the exponent of $d$ in the asymptotic lower bound on the query complexity is $1$. To that end, let
	\[k = 2\left\lfloor \frac{T'R_{\max}}{32\varepsilon} \right\rfloor - 1, \qquad \text{and} \qquad \varepsilon' = \frac{T'R_{\max}}{16k},\]
	and immediately observe that $k \geq 1$ and $\varepsilon' \geq \varepsilon$.
	
	Next, suppose we have a bit string $\mathbf{b} \in D_{d,k,T'}$, where $D_{d,k,T'}$ is defined in \cref{eq:defD}. We define the following instance of the multivariate Monte Carlo estimation problem. Let the state space be $S = \{s_0\} \cup \{s_{\ell,0,0} : \ell \in [k]\} \cup \{s_{\ell,t,b} : \ell \in [k], t \in [T'/2], b \in \{0,1\}\}$. Let the probability transition function be defined as follows, for all $\ell \in [k]$, $t \in [T'/2]$, and $b,b' \in \{0,1\}$,
	\[P(s_0,s_{\ell,0,0}) = \frac{1}{dk}, \qquad P(s_{\ell,\frac{T'}{2},b},s_{\ell,\frac{T'}{2},b}) = 1,\]
	\[P(s_{\ell,t-1,b},s_{\ell,t,b'}) = \begin{cases}
		1, & \text{if } (b = b' \land b_{\ell,t} = 0) \lor (b \neq b' \land b_{\ell,t} = 1), \\
		0, & \text{otherwise},
	\end{cases}\]
	and $0$ elsewhere. Let the state-reward function be
	\[R_S\left(s_{\ell,\frac{T'}{2},1}\right) = R_{\max}\mathbf{e}_j,\]
	and $0$ elsewhere. It immediately follows that for all states $\norm{R_S(s)}_q \leq \norm{R_S(s)}_1 \leq R_{\max}$, and hence $R_S$ is bounded by $R_{\max}$ in $\ell_q$-norm.
	
	For every element $\mathbf{b} \in D_{d,k,T'}$, let $\mathbf{c}_{\mathbf{b}} \in \{0,1\}^d$ be defined as in \cref{eq:defc}. Now, we can write the value function of the instance labeled by $\mathbf{b}$ as
	\[V^{(\mathbf{b})}(s_0) = \frac{1}{dk} \cdot \sum_{j=1}^d \sum_{\ell=1}^k \Parity((b_{j,\ell,t})_{t=1}^{T'}) \cdot \frac{T'}{2} \cdot R_{\max}\mathbf{e}_j = \frac{T'R_{\max}}{2dk} \cdot \left\lfloor \frac{k}{2} \right\rfloor\mathbf{1} + \frac{T'R_{\max}}{2dk}\mathbf{c}_{\mathbf{b}} = \mathbf{y} + \frac{8\varepsilon'}{d}\mathbf{c}_{\mathbf{b}},\]
	where we defined $\mathbf{y} = T'R_{\max}/(2dk) \lfloor k/2\rfloor \mathbf{1}$. Since $\A$ is $\varepsilon$-precise and $\varepsilon \leq \varepsilon'$, $\A$ is also $\varepsilon'$-precise, and hence from \cref{lem:smallell1distancereduction}, with $O' = I$, we find that with a single run of algorithm $\A$ we can find a vector $\mathbf{c}^* \in \{0,1\}^d$ such that with probability at least $2/3$, $\norm{O(\mathbf{c}_{\mathbf{b}} - \mathbf{c}^*)}_1 \leq d/4$. Furthermore, the probability transition oracle of this particular instance of the multivariate Monte Carlo estmiation problem can be constructed using a single call to the bit oracle that provides access to the individual bits in $\mathbf{b}$. This implies, using \cref{lem:DPlowerboundhardness}, that we need to make a number of calls to $D_P$ that scales as
	\[\Omega\left(dkT'\right) = \Omega\left(d \cdot \frac{R_{\max}T'}{\varepsilon} \cdot T'\right) = \Omega\left((T')^2\frac{R_{\max}}{\varepsilon}d\right) = \Omega\left((T^*)^2\frac{R_{\max}}{\varepsilon}d\right), \qquad (T^*, R_{\max}, d \to \infty, \;\; \varepsilon \downarrow 0).\]
	This completes the proof in the case where $q \in [1,2]$.
	
	On the other hand, suppose that $q \in (2,\infty]$. Then, let $H \in \R^{d' \times d'}$ be a normalized Hadamard matrix with $d' \geq d/2$. Such a Hadamard matrix always exists, because they can be trivially constructed for every dimension that is a power of $2$. We let
	\[k = 2\left\lfloor \frac{T'R_{\max}(d')^{\frac12-\frac1q}}{32\varepsilon}\right\rfloor - 1, \qquad \text{and} \qquad \varepsilon' = \frac{T'R_{\max}(d')^{\frac12-\frac1q}}{16k},\]
	and again observe that $k \geq 1$ and $\varepsilon' \geq \varepsilon$. We use the same state space and probability transition matrix as before, but with $d'$ instead of $d$, but now we use the rewards
	\[R_S(s_{\ell,\frac{T}{2},1}) = (d')^{\frac12-\frac1q}R_{\max}H\mathbf{e}_j.\]
	One readily verifies that all rewards are bounded by $R_{\max}$ in $\ell_q$-norm, since
	\[\norm{R_S(s_{\ell,\frac{T}{2},1})}_q = (d')^{\frac12-\frac1q}R_{\max} \norm{H\mathbf{e}_j}_q \leq (d')^{\frac12-\frac1q}R_{\max} \cdot (d')^{\frac1q}\norm{H\mathbf{e}_j}_{\infty} = (d')^{\frac12}R_{\max} \cdot \frac{1}{\sqrt{d}} = R_{\max}.\]
	Again, for all $\mathbf{b} \in D_{d',k,T'}$, we define $\mathbf{c}_{\mathbf{b}} \in \{0,1\}^{d'}$ as in \cref{eq:defc}, which allows us to write the value function as
	\begin{align*}
		V^{(\mathbf{b})}(s_0) &= \frac{1}{d'k} \sum_{j=1}^{d'} \sum_{\ell=1}^k \Parity((b_{j,\ell,t})_{t=1}^{T'}) \cdot \frac{T'}{2} \cdot (d')^{\frac12-\frac1q}R_{\max} H\mathbf{e}_j \\
		&= \frac{T'R_{\max}(d')^{\frac12-\frac1q}}{2d'k} \left\lfloor \frac{k}{2} \right\rfloor H\mathbf{1} + \frac{T'R_{\max}(d')^{\frac12-\frac1q}}{2d'k}H\mathbf{c}_{\mathbf{b}} = \mathbf{y} + \frac{8\varepsilon'}{d}H\mathbf{c}_{\mathbf{b}},
	\end{align*}
	where we defined $\mathbf{y} = T'R_{\max}(d')^{\frac12-\frac1q}/(2d'k)\lfloor k/2 \rfloor H\mathbf{1}$. Thus, through \cref{lem:smallell1distancereduction}, with $O' = H$, we find that with a single run of $\A$, we can produce a vector $\mathbf{c}^* \in \{0,1\}^d$ such that with probability at least $2/3$, $\norm{OH(\mathbf{c}_{\mathbf{b}} - \mathbf{c}^*)}_1 \leq d'/4$. Since $OH$ is also an orthogonal matrix, we can use similar reasoning as before to arrive at a lower bound of
	\[\Omega(d'kT') = \Omega\left(d' \cdot \frac{R_{\max}T'(d')^{\frac12-\frac1q}}{\varepsilon} \cdot T'\right) = \Omega\left((T')^2\frac{R_{\max}}{\varepsilon}d^{\frac32-\frac1q}\right) = \Omega\left((T^*)^2\frac{R_{\max}}{\varepsilon}d^{\frac32-\frac1q}\right),\]
	in the limit where $T^*, R_{\max}, d \to \infty$ and $\varepsilon \downarrow 0$. This completes the proof in the case where $q \in (2,\infty]$ as well.
\end{proof}

\subsection{Results}

We now aggregate all results from the lower bound section into the following theorem.

\begin{theorem}[Query complexity lower bounds]
    \label{thm:mvmc-lower-bounds}
	Let $d \in \N$, $O \in \R^{d \times d}$ an orthogonal matrix, $T \in \N \cup \{\infty\}$, $0 < \varepsilon < R_{\max}$, $\gamma \in [0,1]$, $p,q \in [1,\infty]$, such that if $T = \infty$, then $\gamma < 1$. Let $S$ a state space, $P : S \times S \to [0,1]$ a probability transition matrix. The table below specifies the query complexity of the algorithm that solves the multivariate Monte Carlo estimation problem rotated by $O$, with reward vectors bounded by $R_{\max}$ in $\ell_q$-norm, up to error $\varepsilon$ in $\ell_p$-norm, with high probability:
	\begin{center}
		\begin{tabular}{r|c|c}
			Case & Calls to reward oracle & Calls to probability transition oracle \\\hline
			Exact-depth & $\displaystyle\Omega\left(\frac{R_{\max}}{\varepsilon}d^{\frac1p} \cdot \xi(d,q)\right)$ & $\displaystyle\Omega\left(\frac{R_{\max}}{\varepsilon}d^{\frac1p} \cdot d^{\max\{0,\frac12-\frac1q\}}\right)$ \\
			Cumulative-depth & $\displaystyle\Omega\left(T^*\frac{R_{\max}}{\varepsilon}d^{\frac1p} \cdot \xi(d,q)\right)$ & $\displaystyle\Omega\left(\frac{R_{\max}}{\varepsilon}d^{\frac1p} \cdot d^{\max\{0,\frac12-\frac1q\}}\right)$ \\
			Path-independent & $\displaystyle\Omega\left(T^*\frac{R_{\max}}{\varepsilon}d^{\frac1p} \cdot \xi(d,q)\right)$ & $\displaystyle\Omega\left(\frac{R_{\max}}{\varepsilon}d^{\frac1p} \cdot d^{\max\{0,\frac12-\frac1q\}}\right)$
		\end{tabular}
	\end{center}
	where the lower bounds for the probability transition oracle complexity only hold in the regime where $\varepsilon < R_{\max}(d/2)^{\max\{0,1/2-1/q\}+1/p-1}$, and $\xi$ depends on the access model to the reward function, listed in the following table:
	\begin{center}
		\begin{tabular}{r|c}
			Access model & $\xi(d,q)$ \\\hline
			Phase oracle & $d$ \\
			Probability oracle & $d^{1-\frac{1}{2q}}$ \\
			Distribution oracle & $d^{1-\frac1q}$ \\
			Lattice oracle on $G_O$ & $d^{1-\frac1q}$
		\end{tabular}
	\end{center}
	The $\Omega$-notation holds in the limit where $T^*,R_{\max},d \to \infty$ and $\varepsilon \downarrow 0$.
\end{theorem}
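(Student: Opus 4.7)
The proof gathers the lower bounds from \cref{subsec:highoverlapreduction,subsec:LBR,subsec:LBP} and assembles them across three axes: the setting (exact-depth, cumulative-depth, path-independent), the reward oracle type, and the $\ell_p$-norm in which precision is measured. The plan is to establish the path-independent lower bounds first for general $p$ and each reward oracle type, and then transfer to the two remaining settings by simple simulation arguments.

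The extension from $\ell_1$-precision (which is what \cref{lem:LBRphase,lem:LBRprobability,lem:LBRdistribution,lem:DPlowerboundproof} prove) to general $\ell_p$-precision is the main routine step. The reduction in \cref{lem:smallell1distancereduction} demands $V^{(\mathbf{b})}(s_0)=\mathbf{y}+(8\varepsilon/d)O'\mathbf{b}$; by H\"older's inequality $\|OO'(\mathbf{b}-\mathbf{b}^*)\|_1\le d^{1-1/p}\|OO'(\mathbf{b}-\mathbf{b}^*)\|_p$, so if the bit-encoding coefficient $8\varepsilon/d$ is replaced by $8\varepsilon/d^{1/p}$, an $\ell_p$-precise output still forces $\|OO'(\mathbf{b}-\mathbf{b}^*)\|_1\le d/4$ and the reduction goes through. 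Propagating this rescaling through the hard instances, the parameter $\varepsilon'$ of the induced fractional phase oracle on $\mathbf{b}$ is multiplied by $d^{1-1/p}$, and the $\Omega(d/\varepsilon')$ bound of \cref{lem:fractionalphasequerylowerbound} becomes exactly $\Omega(T^*R_{\max}/\varepsilon\cdot d^{1/p}\cdot\xi(d,q))$ for the phase, probability, and distribution oracles; an analogous rescaling of $k$ in \cref{lem:DPlowerboundproof} yields $\Omega(R_{\max}/\varepsilon\cdot d^{1/p}\cdot d^{\max\{0,1/2-1/q\}})$ for the probability-transition oracle.

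For the lattice oracle on $G_O$, I would use the same hard instances. Since $R_S^{(\mathbf{b})}$ is affine in $\mathbf{b}$, the lattice oracle's phase $\mathbf{x}^TR_S^{(\mathbf{b})}/(2r_{G_O}(q)R_{\max})$ splits into a $\mathbf{b}$-independent part, which is implementable without queries since $\mathbf{x}$ lives in a register, and a linear piece of the form $\alpha\,\mathbf{x}^T\mathbf{b}$. The latter can be synthesized from the fractional bit phase oracle $O_{\varepsilon'}^{(\mathbf{b})}$ with a small number of calls by bit-decomposing the coordinates of $\mathbf{x}$ and applying $O_{\varepsilon'}^{(\mathbf{b})}$ with $\varepsilon'=\Theta(\alpha\cdot 2^k)$ conditioned on the $k$-th bit of $x_j$; since $\|\mathbf{x}\|_\infty=\O(\sqrt{d})$ for $\mathbf{x}\in G_O$ and $n=\O(\polylog(d/\varepsilon))$, the overhead is polylogarithmic. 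This yields the same exponent $\xi(d,q)=d^{1-1/q}$ as for the distribution oracle.

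Finally, the transfer from path-independent to the other two settings is immediate. Every path-independent instance is simultaneously a cumulative-depth instance (via $R^{(t)}(s_0,\dots,s_t)=\gamma^tR_S(s_t)$) and an exact-depth instance (via $R(\tau)=\sum_{t=0}^T\gamma^tR_S(s_t)$). For the cumulative-depth reward oracle, a call to $O_{R^{(t)}}$ is simulated by one fractional call to $O_{R_S}$ on the $t$-th state register with exponent $\gamma^t$, so the path-independent bound transfers verbatim. For the exact-depth reward oracle, a call to $O_R$ factors into $T+1$ fractional calls to $O_{R_S}$, one per state register, eroding the lower bound by a factor of $T+1=\O(T^*)$ and removing the $T^*$ prefactor. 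The probability-transition bound is identical in all three settings, because $D_P$ is independent of the reward-access model. The main obstacle is the lattice oracle step: while the simulation sketched above is natural, verifying that the bitwise expansion of $\mathbf{x}^T\mathbf{b}$ can be stitched together with only polylogarithmic overhead, and that the phase normalization scales correctly with $r_{G_O}(q)$, requires care; the other steps reduce to propagating the single rescaling factor $d^{1-1/p}$ through the Taylor expansions and radii computations already present in \cref{subsec:LBR,subsec:LBP}.
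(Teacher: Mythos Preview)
Your H\"older step is equivalent to the paper's but packaged differently. The paper does it in one line at the algorithm level: an $\ell_p$-precise algorithm is automatically $\ell_1$-precise with accuracy $\varepsilon'=d^{1-1/p}\varepsilon$, so one simply substitutes $\varepsilon'$ into the $p=1$ lemmas. You instead push the factor $d^{1-1/p}$ into the instance encoding; this yields the same exponents but forces you to revisit the Taylor expansions and the choice of $k$ in \cref{lem:DPlowerboundproof}, which the paper avoids entirely.

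Your transfer between settings is more elaborate than necessary and, as stated, slightly off. The hard instances of \cref{subsec:LBR} have a single state $s_0$, so for that instance the exact-depth oracle $O_R$ acts on the unique path $(s_0,\dots,s_0)$ with phase $R(s_0,\dots,s_0)_j/(2R_{\max}^{\text{exact}})=T_\gamma r_j/(2T_\gamma R_{\max}^{\text{path-ind}})=r_j/(2R_{\max}^{\text{path-ind}})$, i.e.\ it \emph{equals} $O_{R_S}$ up to relabeling registers. So one call suffices, not $T+1$ fractional calls; your general simulation bound would lose a factor of $T$ after also rescaling $R_{\max}$. The paper sidesteps this by the remark preceding \cref{lem:LBRphase}: the same one-state instance is set up directly in each of the three settings, and the lemma proofs go through verbatim with $T_\gamma$ replaced by $1$ in the exact-depth case.

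On the lattice oracle you are filling a genuine gap: the paper's proof cites only \cref{lem:LBRphase,lem:LBRprobability,lem:LBRdistribution,lem:DPlowerboundproof} and never provides a separate lattice-oracle lemma. Your bit-decomposition sketch, however, runs into trouble when $O\neq I$: the coordinates of $\mathbf{x}=O^T\mathbf{z}/2^n$ are real linear combinations of the half-integers $z_k$ and have no finite bit expansion, so ``applying $O_{\varepsilon'}^{(\mathbf{b})}$ conditioned on the $k$-th bit of $x_j$'' is not well-defined. A cleaner route is to write $\mathbf{x}^T\mathbf{r}^{(\mathbf{b})}=(8\varepsilon/(T_\gamma d))\,\mathbf{z}^T(O\mathbf{b})/2^n$ and observe that for the hard instances the lattice oracle is a fractional phase oracle for the \emph{rotated} bit string $O\mathbf{b}$; one then needs a version of \cref{lem:fractionalphasequerylowerbound} for this access model, together with the normalization $r_{G_O}(q)=\Theta(d^{1-1/q})$ to recover the exponent $\xi(d,q)=d^{1-1/q}$.
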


\begin{proof}
	All statements in this theorem in the case where $p = 1$ follow directly from \cref{lem:LBRphase,lem:LBRprobability,lem:LBRdistribution,lem:DPlowerboundproof}. For the other values of $p \in (1,\infty]$, suppose that we have a quantum algorithm $\A$ that solves the multivariate Monte Carlo estimation problem up to precision $\varepsilon$ w.r.t.\ the $\ell_p$-norm. Then, by H\"older's inequality, we find that it solves the problem up to precision $\varepsilon' = d^{1-\frac1p}\varepsilon$ w.r.t.\ the $\ell_1$-norm. We can now plug this $\varepsilon'$ into our lower bounds for the $\ell_1$-case, and obtain the claimed complexities.
\end{proof}

There is one important subtlety remaining in \cref{thm:mvmc-lower-bounds} that we highlight here, which is that the lower bounds for the probability transition oracle only hold in the regime where the precision $\varepsilon$ is upper bounded by $R_{\max}(d/2)^{-1+1/p+\max\{0,1/2-1/q\}}$. Hence, we make no lower bound statement about the query complexity if $\varepsilon$ is bigger than this value. We expect some other effects to come into play in this regime, because in the related setting considered by \cite{dagum00} where one estimates a probability distribution up to $\ell_2$-norm given sampling access to it, there surprisingly exists a \textit{classical} algorithm that runs in $\widetilde{\O}(1/\varepsilon^2)$ samples, and hence no lower bound depending polynomially on the dimension can exist. Nailing down the query complexity in this regime would be an interesting question for further research.

\section{Applications}
\label{sec:applications}

In this section, we describe some applications of our results. We first show an application in the context of policy-based reinforcement learning, where our formulation of the MVMC problem in terms of Markov reward processes is particularly convenient. We then show how the general problem of estimating the expectation values of mutually-commuting observables on a given quantum state can be viewed as an instance of the MVMC problem. We also list applications in quantum machine learning where this problem arises. 

\subsection{Policy-based reinforcement learning}
\label{subsec:rl-application}

In the context of reinforcement learning \cite{sutton98}, an agent-environment interaction is described by a Monte Carlo process where, sequentially, the agent acts probabilistically on the environment, the latter updates probabilistically its state depending on the actions of the agent, and then issues a reward. A common description of this interaction is in terms of Markov decision processes (MDP) \cite{sutton98}, where the probability distributions involved in this Monte Carlo process are assumed to be Markovian. Notably, the agent's actions are sampled from a stationary policy $\pi(a|s)$, i.e., a probability distribution over actions $a$ given a state $s$, and the environment dynamics are described by a transition probability function $P(s'|s,a)$. Interestingly, when we assume that this policy is fixed, the MDP can be viewed as a Markov reward process (MRP, i.e., an instance of our MVMC processes) by absorbing the action of the policy $\pi(a|s)$ in the environment transitions $P(s'|s) = P(s'|s,a)\pi(a|s)$. However, this remains an MRP with a 1-dimensional reward at this point.

The goal of the reinforcement learning problem then is to find a policy that maximizes the resulting value function $V_\pi(s_0)$ of its MRP. To do this, policy-based algorithms define a certain family of parametrized policies $\pi_{\bm{\theta}}(a|s) \in \Pi_{\bm{\theta}}$ (e.g., deep neural networks) and explore this policy family using gradient ascent on the value function $V_{\pi_{\bm{\theta}}}(s_0)$. The so-called policy gradient theorem \cite{sutton00} gives a formulation of the gradient of the value function as:
\begin{equation}
    \nabla_{\bm{\theta}} V_{\pi_{\bm{\theta}}}(s_0) = \underset{\tau \sim P(T;s_0)}{\E}\left[ \sum_{t=0}^T \nabla_{\bm{\theta}} \log(\pi_{\bm{\theta}}(a_t|s_t)) \sum_{t'=t}^T \gamma^{t'} r_{t'}\right]
\end{equation}
for an MDP/MRP of depth $T \in \mathbb{N} \cup {\infty}$. Notice that this is in turn the cumulative-depth value function of an MRP with a multidimensional reward vector given by $R^{(t')}(s_0, \dots, s_{t'}) = \gamma^{t'} r_{t'}\sum_{t=0}^{t'} \nabla_{\bm{\theta}} \log(\pi_{\bm{\theta}}(a_t|s_t))$ for each step $t'$ of the interaction (where the actions $a_t$ have been absorbed in their associated states $s_t$).\\
Given $\ell_p$-norm constraints on the gradients $\nabla_{\bm{\theta}} \log(\pi_{\bm{\theta}}(a_t|s_t))$ of the log-policy family and a bound $R_\textrm{max}$ on the rewards $r_t$ of the MDP, we can therefore make claims on the query complexity of estimating the gradient $\nabla_{\bm{\theta}} V_{\pi_{\bm{\theta}}}(s_0)$ and hence the expected speed-ups in policy-gradient reinforcement learning. Notably, only when $\nabla_{\bm{\theta}} \log(\pi_{\bm{\theta}}(a_t|s_t))$ is bounded in $\ell_1$-norm can we guarantee that a quadratic speed-up in gradient evaluation (in $\ell_\infty$-norm) without an associated slowdown in the dimension of $\bm{\theta}$, up to logarithmic factors.

\subsection{Estimating expectation values of commuting observables}

The second problem we apply our results to is that of computing expectation values of mutually commuting observables for a preparable quantum state $\ket{\psi}$. Quantum algorithms have been studied for the univariate version of this problem (i.e., for one observable) \cite{knill07,wocjan09}, but as we show in this subsection, the multivariate case appears as well naturally in machine learning applications.

\subsubsection{Problem definition}

Let $U$ be a unitary transformation and let $O_1, \ldots, O_d$ be $d$ mutually-commuting observables (Hermitian operators), acting both on a complex Hilbert space spanned by $n$ qubits. We define their corresponding Monte Carlo process as measuring one sample per observable\footnote{Sampling from an observable here means that we measure one of its eigenstates according to the Born rule on state $\ket{\psi}$ and the measurement outcome is defined as the corresponding eigenvalue of this eigenstate.} on the quantum state $\ket{\psi} = U\ket{0^{\otimes n}}$. We want to compute estimates of the $d$ expectation values $\expval{O_i} = \bra{\psi}O_i\ket{\psi}$, again up to some $\varepsilon$ error in $\ell_p$-norm.\\
Since all the observables commute, they all share a common eigenbasis $\left\{\ket{\phi_j}\right\}_{1\leq j\leq 2^n}$. Hence, measuring $\ket{\psi}$ in this eigenbasis allows to generate one sample for each of the observables $O_1, \ldots, O_d$.\footnote{When the observables do not commute, one cannot ``parallelize" measurements in such a manner, and would then be required to use more complicated techniques like shadow tomography \cite{aaronson19,huang21}.} In this common eigenbasis, we can also compare the eigenvalues $\lambda_{i,j} \in \mathbb{R}$ associated to the basis states $\ket{\phi_j}$ by each observable $O_i$: we assume the vectors $\bm{\lambda}_{.,j} = (\lambda_{1,j}, \ldots, \lambda_{d,j})$ to be bounded in some $\ell_q$-norm, for all $j$.

When $U$ is a unitary that simulates a classical probabilistic computation (i.e., creates a superposition of computational basis states without relative phases) and all the observables $O_i$ are diagonal in the computational basis, this problem can be trivially formulated as depth-one MVMC estimation. We show however that is also the case of general unitaries $U$ and general observables $O_i$ that all mutually commute (but not necessarily in the computational basis). For this, associate the basis states $\ket{s_j}$ to the shared basis states $\ket{\phi_j}$ of these observables, such that the action of any unitary $U$ can be re-written as:
\begin{equation}\label{eq:unitary-oracle}
    D_P = I \otimes U : \ket{0}\ket{0^{\otimes n}} \mapsto \ket{0}\sum_{s_j \in S} e^{i\varphi_j} \sqrt{P(0,s_j)} \ket{s_j}
\end{equation}
and access to the observables $O_i$ can be made, e.g., through a phase oracle of the form:
\begin{equation}
O_R : \ket{s_{j}}\ket{i} \mapsto e^{i\frac{\lambda_{i,j}}{2R_\textrm{max}}}\ket{s_{j}}\ket{i}
\end{equation}
and similarly for the other types of oracle access we consider.\\
Note that the relative phases $e^{i\varphi_j}$ appearing in \cref{eq:unitary-oracle} do not contribute the expectation values $\expval{O_i} = \sum_{s_j \in S} P(0,s_j) \lambda_{i,j}$ and moreover do not come into play in the application of our quantum algorithms (notably in \cref{lem:valuefunctionoracleexactdepth}), as they can be absorbed in the states $\ket{s_j}$ in this depth-one case. Hence, all our results (upper and lower bounds) are applicable to this problem.

\subsubsection{Examples of applications}

\paragraph{Training variational quantum circuits}
A straightforward application fitting this problem definition appears in some variational quantum algorithms for machine learning \cite{benedetti19}. In a multidimensional regression setting \cite{mitarai18} or a reinforcement learning setting \cite{jerbi21b,skolik21}, a variational quantum circuit defined by a parametrized and data-dependent unitary $U(\bm{x},\bm{\theta})$ and a set of observables $(O_1, \ldots, O_d)$ can be used as a hypothesis family $f_{\bm{\theta}}(\bm{x}) = (\expval{O_1}_{\bm{x},\bm{\theta}}, \ldots, \expval{O_d}_{\bm{x},\bm{\theta}})$, for $\expval{O_i}_{\bm{x},\bm{\theta}} = \bra{0^{\otimes n}} U^\dagger(\bm{x},\bm{\theta}) O_i U(\bm{x},\bm{\theta})\ket{0^{\otimes n}}$, to model target functions $g$ with $d$-dimensional outputs. When the observables $O_1, \ldots, O_d$ all commute (e.g., weighted tensor products of Pauli-Z operators or projectors on some basis states), the problem of estimating $f_{\bm{\theta}}(\bm{x})$ fits the problem definition above.

\paragraph{Training Boltzmann machines}
Another application considers the problem of estimating updates of a Boltzmann machine in a machine learning setting (e.g., a classification or generative modeling problem) \cite{wiebe16,wiebe19,kieferova17,jerbi21}. Take for instance a Boltzmann machine defined by a Hamiltonian:
\begin{equation}
H = \sum_{i<j} J_{i,j} \sigma^z_i\sigma^z_j + \sum_{i} b_i\sigma^z_i
\end{equation}
where $J_{i,j}$ and $b_i$ are real weights and biases and $\sigma^z_i$ is a Pauli-$Z$ operator acting on a qubit $i$ out of $n$ total qubits. The updates on the weights and biases of this Boltzmann machine take the form:
\begin{equation}
\Delta J_{i,j} = -\mathcal{L}(\bm{J},\bm{b})\expval{\sigma^z_i\sigma^z_j}, \quad \Delta b_{i} = -\mathcal{L}(\bm{J},\bm{b})\expval{\sigma^z_i}
\end{equation}
where $\mathcal{L}(\bm{J},\bm{b})$ is a loss dependent on the Boltzmann machine performance at the machine learning task and the expectation values $\expval{\sigma^z_i}, \expval{\sigma^z_i\sigma_z^j}$ are with respect to the Gibbs state:
\begin{equation}\label{eq:Gibbs-state}
\ket{\psi} = \frac{1}{\textrm{Tr}_{\bm{x}}[e^{-H}]} \sum_{\bm{x}} \sqrt{e^{-H(\bm{x})}} \ket{\bm{x}}
\end{equation}
for computational basis states $\ket{\bm{x}}$.\\
Assume having access to a unitary $U$ that prepares the Gibbs state of Eq.~(\ref{eq:Gibbs-state}), e.g., using one of the subroutines in \cite{wiebe16,wiebe19,kieferova17,jerbi21}, then estimating the updates of the Boltzmann machine is an instance of the problem above for observables $\left\{ -\mathcal{L}(\bm{J},\bm{b})\sigma^z_i\sigma^z_j,\ -\mathcal{L}(\bm{J},\bm{b})\sigma^z_i\right\}_{i,j}$, i.e., weighted $\sigma^z_i$ and $\sigma^z_i\sigma^z_j$ operators, which are all diagonal in the computational basis.

\subsubsection{Implications of our results}
Since the eigenvalue decomposition of the observables in the applications presented above is known in general, we can can therefore implement all the oracle access required by our quantum algorithms, making them applicable to these problems. Moreover, our lower bound results indicate that, in the case of training Boltzmann machines, all quantum encodings of the observables we consider lead to a trade-off between a quadratic speed-up in the precision of the updates and an exponential slowdown in the number of parameters $\abs{\bm{\theta}}$. As for training variational quantum circuits, we can only guarantee that this trade-off won't appear in the case where the eigenvalues of the observables $O_i$ satisfy $\norm{\bm{\lambda}_{.,j}}_1 \leq R_\textrm{max},\ \forall j$, i.e., when they are bounded in $\ell_1$-norm for any given shared eigenstate $\ket{\phi_j}$.

\section{Discussion \& Outlook}
\label{sec:MMCdiscussion}

To the best of our knowledge, the results obtained in this text provide the first complete characterization of the quantum query complexity of a multivariate problem, when one has oracular access to the individual variables via any of the quantum oracles outlined in \cref{subsec:oracles}. This suggests that the quantum algorithmic techniques and lower bounds outlined in this document are exhaustive in this setting. We expect these techniques to be relevant in studying the query complexities in other problems that use a similar access model as well. For instance, the observations made in this document might help in closing the optimality gap that remains in the gradient estimation problem when the partial derivatives of the objective function are bounded by Gevrey conditions, as considered in \cite{GAW18} and \cite{Cor19}, but it is not immediately clear whether the construction of the hard instances considered in this paper carry over directly to a setting where the asymptotics of higher-order partial derivatives of a Gevrey function play a non-trivial role as well.

The oracle conversions considered in \cref{subsec:rewardoracleconversions} give rise to more interesting questions that fall outside the scope of this research. In particular, there are some directed edges missing in the graph displayed in \cref{fig:oracleconversions}. In our results, we did not need these conversions, but it would be an interesting direction for further research to nail down the optimal complexities of the other oracle conversions too, since we expect them to be useful in other use cases.

Coming back on a remark made in the introduction, we only considered here input oracles for the Monte Carlo random variables that are natural for quantum algorithms, i.e., encode the information in phases or amplitudes. This leaves as an open question whether speed-ups with respect to the precision of the estimates with better scaling in the dimension of the random variables are possible in the more general access model of binary oracles (which can be converted to all the oracles we consider without any overhead). However, to the best of our knowledge, we do not know of any quantum algorithms that do not rely on an encoding in phase or amplitude as part of their processing, as covered by the input oracles we consider. Moreover, proving lower bounds in the binary setting would be challenging due to the ability to simulate classical algorithms from MVMC (e.g., the algorithm in \cref{appdx:classical-complexity-MVMC}), which trade off speed-ups in precision for a better dependence in the dimension of the random variables. Assuming a speed-up without any associated slowdown is impossible in the general case, these lower bounds would then need to feature this trade-off in query complexity, which we are not aware is possible to show using existing tools for lower bounds.

When comparing quantum and classical query complexities for the MVMC problem, it only makes sense to compare the query complexity to the transition probability oracles with our considerations. We know of an exact scaling with respect to the dimension $d$ for the case $p,q=\infty$ (see \cref{appdx:classical-complexity-MVMC}). For the special case studied by \cite{vanApeldoorn21} of estimating probability vectors (i.e., $q=1$), we also know of classical algorithms with query complexity $\widetilde{\Theta} \left( \varepsilon^{-2} \right)$ for $p\in\{2,\infty\}$ \cite{dagum00} and $\widetilde{\mathcal{O}} \left( d \varepsilon^{-2} \right)$ for $p=1$ \cite{kamath15}. We leave open a more detailed analysis of the classical query complexity of this problem for general $p,q$, as to fully characterize the slowdown in the dimension $d$ associated to our quantum algorithms, as well as nailing down the query complexity to the probability transition oracle in the high-error regime.

\section*{Acknowledgments}

First of all, both authors would like to thank Vedran Dunjko and M\={a}ris Ozols for many insightful discussions and motivating us to think about the problems considered in this paper. AJ would like to thank Ronald de Wolf for multiple insightful and helpful discussions, and Joran van Apeldoorn for interesting conversations and providing an early version of his manuscript. Finally, AJ would like to extend his gratitude to the anonymous legends that answered the math overflow question posted here: \cite{Mathoverflow}. SJ acknowledges support from the Austrian Science Fund (FWF) through the projects DK-ALM:W1259-N27 and SFB BeyondC F7102. SJ also acknowledges the Austrian Academy of Sciences as a recipient of the DOC Fellowship.

\bibliographystyle{alpha}
\bibliography{references}

\appendix

\section{Complexity of a classical MVMC algorithm\label{appdx:classical-complexity-MVMC}}

For a Monte Carlo process $\mathcal{A}$ generating a $d$-dimensional random variable $\bm{v} = (v_1, \ldots, v_d)$ bounded in $\ell_\infty$-norm $\norm{\bm{v}}_\infty \leq B$, consider the following algorithm:\\
\begin{enumerate}
	\item Execute $N = \left\lceil \frac{2B^2}{\varepsilon^2} \log\left(\frac{2d}{\delta}\right)\right\rceil$ runs of the Monte Carlo process $\mathcal{A}$ and store their outcomes $(v^{(i)}_1, \ldots, v^{(i)}_d)_{1\leq i \leq N}$.
	\item Compute the $d$ averages $\widehat{v_j} = \frac{1}{N}\sum_{i=1}^{N} v^{(i)}_j$ and use these as estimates.
\end{enumerate}
Now consider the probability of failure of this algorithm, i.e., that at least one of the estimates is more than $\varepsilon$ away from its expected value:
\begin{align*}
\P\left( \bigcup_{j \in [d]} \abs{\widehat{v_j} - \mathbb{E}_\mathcal{A}[v_j]} \geq \varepsilon \right) &\leq \sum_{j=1}^d \P\left(\abs{\widehat{v_j} - \mathbb{E}_\mathcal{A}[v_j]} \geq \varepsilon \right) & \#\textrm{ union bound}\\
&\leq d \times \max_{j \in [d]} \P\left(\abs{\widehat{v_j} - \mathbb{E}_\mathcal{A}[v_j]} \geq \varepsilon \right) &\\
&\leq 2d\exp\left( -\frac{2N^2\varepsilon^2}{4NB^2} \right)  & \#\textrm{ Hoeffding's bound and bound on $v_j$}\\
&\leq \delta. & \#\textrm{ definition of }N
\end{align*}
Hence, for arbitrary $\varepsilon$ and $\delta$, the $d$ expectations can be estimated to $\varepsilon$ error in $\ell_\infty$-norm with success probability $1-\delta$ using $N=\mathcal{O} \left( \frac{B^2}{\varepsilon^2} \log(\frac{d}{\delta})\right)$ runs of $\mathcal{A}$.

\section{Concentration bound of $\ell_1$-norm of vertices of the rotated Hamming cube}

In the lower bound of the query complexity to the reward oracle, we need a rather technical probability theory lemma, which we prove in the lemma below.

\begin{lemma}[Concentration bound of $\ell_1$-norm of vertices of the rotated Hamming cube]
	\label{lem:concentrationbound}
	Let $d \in \N$, and $O \in \R^{d \times d}$ be an orthogonal matrix, i.e., $O^TO = OO^T = I$. Let $\mathbf{x}$ be a random variable, taking values uniformly in the set $\{0,1\}^d$. Then,
	\[\P\left[\norm{O\mathbf{x}}_1 \leq \frac{d}{4}\right] \leq 2^{-\frac{\log(e)}{2}\left(\frac{1}{2\sqrt{2}} - \frac14\right)^2d}.\]
\end{lemma}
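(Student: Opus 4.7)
The plan is to show that $\|O\mathbf{x}\|_1$ is typically of order $d$, with Gaussian-type concentration of width $O(\sqrt d)$ around its mean. The argument splits cleanly into a mean estimate and a concentration step.

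For the mean, I would write $\mathbf{x} = (\mathbf{1}+\boldsymbol{\eta})/2$ with $\boldsymbol{\eta} \sim \Unif(\{-1,+1\}^d)$, so that
\[
\|O\mathbf{x}\|_1 = \tfrac{1}{2}\sum_{j=1}^d |s_j + W_j|,
\]
where $\mathbf{s}=O\mathbf{1}$ and $W_j = \sum_k O_{jk}\eta_k$ is a Rademacher sum whose coefficient vector (the $j$-th row of $O$) has unit $\ell_2$-norm. The symmetry $\boldsymbol{\eta}\stackrel{d}{=}-\boldsymbol{\eta}$ gives $\E|s_j+W_j|=\E|s_j-W_j|$, and averaging the two via the triangle inequality $|a|+|b|\geq|a-b|$ yields $\E|s_j+W_j| \geq \E|W_j|$. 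Khintchine's inequality with Szarek's sharp constant then produces $\E|W_j| \geq 1/\sqrt{2}$, and summing over $j \in [d]$ gives $\E\|O\mathbf{x}\|_1 \geq d/(2\sqrt{2})$, explaining where the constant $1/(2\sqrt 2)$ in the exponent comes from.

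For the concentration step, the function $f(\mathbf{x}) = \|O\mathbf{x}\|_1$ is convex (as a norm composed with a linear map) and Euclidean $\sqrt{d}$-Lipschitz, since the orthogonality of $O$ and Cauchy--Schwarz yield $|f(\mathbf{x})-f(\mathbf{x}')| \leq \|O(\mathbf{x}-\mathbf{x}')\|_1 \leq \sqrt{d}\,\|O(\mathbf{x}-\mathbf{x}')\|_2 = \sqrt{d}\,\|\mathbf{x}-\mathbf{x}'\|_2$. Applying Talagrand's concentration inequality for convex Lipschitz functions of independent bounded random variables (equivalently, the Bobkov--Ledoux sub-Gaussian bound on the Boolean cube) then yields
\[
\P\!\left[\|O\mathbf{x}\|_1 \leq \E\|O\mathbf{x}\|_1 - t\right] \leq \exp\!\left(-\frac{t^2}{2d}\right).
\]
Setting $t = \E\|O\mathbf{x}\|_1 - d/4 \geq d\left(\tfrac{1}{2\sqrt{2}}-\tfrac{1}{4}\right) > 0$ and rewriting $\exp(-t^2/(2d))$ in base $2$ gives precisely $2^{-\frac{\log e}{2}\left(\tfrac{1}{2\sqrt{2}}-\tfrac{1}{4}\right)^2 d}$, matching the stated bound.

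The hard part is securing the sharp constant $1/(2L^2)$ in the concentration inequality: a naive bounded-differences argument (McDiarmid) gives only $\exp(-2t^2/d^2)$, a $d$-independent bound that is useless here, because the coordinate-wise Lipschitz constants can be as large as $\sqrt{d}$ (e.g.\ for a Hadamard $O$), making $\sum_k c_k^2$ as large as $d^2$. The standard form of Talagrand's convex concentration inequality yields only $\exp(-t^2/(4L^2))$, off by a factor of two in the exponent from what is needed. The required $1/(2L^2)$ constant is obtainable on $\{0,1\}^d$ via modified log-Sobolev / Bobkov--Ledoux inequalities for convex Lipschitz functions, and it is this sharp form that must be invoked to hit the exact constant in the statement.
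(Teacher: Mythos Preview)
Your proposal is correct and follows the same argument as the paper. The mean estimate is identical: the Rademacher substitution $\mathbf{x}=(\mathbf{1}+\boldsymbol\eta)/2$, the symmetrization trick $\E|s_j+W_j|=\tfrac12\E(|s_j+W_j|+|s_j-W_j|)\geq \E|W_j|$, and Khintchine with the sharp constant $1/\sqrt{2}$ all appear verbatim. For the concentration step the paper does exactly what underlies your ``convex + Lipschitz'' formulation, but makes it explicit: it computes the subgradient $\nabla f(\mathbf{x})=O^T\sgn(O\mathbf{x})$, writes the one-sided Lipschitz bound $f(\mathbf{x})-f(\mathbf{y})\leq\sum_j c_j(\mathbf{x})\,1_{x_j\neq y_j}$ with $c_j(\mathbf{x})=|(O^T\sgn(O\mathbf{x}))_j|$, checks that $\sum_j c_j(\mathbf{x})^2=\|\sgn(O\mathbf{x})\|_2^2=d$, and then invokes the one-sided Talagrand inequality in the form of \cite{vanHandel16}, Theorem~4.20, which already delivers the sharp sub-Gaussian bound $\E e^{t(f-\E f)}\leq e^{t^2 d/2}$; this resolves precisely the constant issue you flagged without needing to name Bobkov--Ledoux separately.
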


\begin{proof}
	First of all, we let $\varepsilon_1, \dots, \varepsilon_d$ be i.i.d.\ Rademacher random variables, i.e., $\P(\varepsilon_j = \pm 1) = 1/2$. Observe that
	\begin{align*}
	\E\left[\norm{O\mathbf{x}}_1\right] &= \sum_{j=1}^d \E\left[\left|\left(O\mathbf{x}\right)_j\right|\right] = \sum_{j=1}^d \E\left[\left|\sum_{k=1}^d O_{jk}x_k\right|\right] = \sum_{j=1}^d \E\left[\left|\sum_{k=1}^d O_{jk}\left(\frac12 + \frac12\varepsilon_k\right)\right|\right] \\
	&= \frac12\sum_{j=1}^d \E\left[\left|\sum_{k=1}^d O_{jk} + \sum_{k=1}^d O_{jk}\varepsilon_k\right|\right].
	\end{align*}
	We focus on each of the terms in the right-hand side individually. To that end, let $j \in [d]$. Since $\varepsilon \in \{-1,1\}^d$ has a distribution that is point-symmetric in the origin, we can rewrite the $j$th term as
	\begin{align*}
	\E\left[\left|\sum_{k=1}^d O_{jk} + \sum_{k=1}^d O_{jk}\varepsilon_k\right|\right] &= \frac12\E\left[\left|\sum_{k=1}^d O_{jk} + \sum_{k=1}^d O_{jk}\varepsilon_k\right|\right] + \frac12\E\left[\left|\sum_{k=1}^d O_{jk} - \sum_{k=1}^d O_{jk}\varepsilon_k\right|\right] \\
	&= \frac12\E\left[\left|\sum_{k=1}^d O_{jk} + \sum_{k=1}^d O_{jk}\varepsilon_k\right| + \left|\sum_{k=1}^d O_{jk} - \sum_{k=1}^d O_{jk}\varepsilon_k\right|\right] \\
	&\geq \frac12\E\left[\left|\sum_{k=1}^d O_{jk} + \sum_{k=1}^d O_{jk}\varepsilon_k - \sum_{k=1}^d O_{jk} + \sum_{k=1}^d O_{jk}\varepsilon_k\right|\right] = \E\left[\left|\sum_{k=1}^d O_{jk}\varepsilon_k\right|\right] \\
	&\geq \frac{1}{\sqrt{2}}\sqrt{\sum_{k=1}^d \left|O_{jk}\right|^2} = \frac{1}{\sqrt{2}},
	\end{align*}
	where we used Khintchine's inequality in the last step, and that for any $a,b \in \R$,
	\[|a+b| + |a-b| = |a+b| + |b-a| \geq |a+b + b-a| = 2|b|.\]
	Thus, we find that
	\[\E\left[\norm{O\mathbf{x}}_1\right] = \frac12\sum_{j=1}^d \E\left[\left|\sum_{k=1}^d O_{jk} + \sum_{k=1}^d O_{jk}\varepsilon_k\right|\right] \geq \frac{d}{2\sqrt{2}}.\]
	Now, it remains to prove a concentration bound for $\norm{O\mathbf{x}}_1$. To that end, we define a function $f : \R^d \to \R$ as
	\[f(\mathbf{x}) = \norm{O\mathbf{x}}_1,\]
	and we also define the vector $\sgn(O\mathbf{x}) \in \R^d$ as follows, for all $j \in [n]$,
	\[\sgn(O\mathbf{x})_j = \begin{cases}
	1, & \text{if } (O\mathbf{x})_j \geq 0, \\
	-1, & \text{otherwise.}
	\end{cases}\]
	Now, for all $\mathbf{z} \in \R^d$,
	\[\norm{O\mathbf{x} + \mathbf{z}}_1 \geq \norm{O\mathbf{x}}_1 + \mathbf{z}^T\sgn(O\mathbf{x}).\]
	Thus, for all $\mathbf{x},\mathbf{y} \in \{0,1\}^d$, with $\mathbf{z} = O(\mathbf{y} - \mathbf{x})$,
	\begin{align*}
	f(\mathbf{x}) - f(\mathbf{y}) &= \norm{O\mathbf{x}}_1 - \norm{O\mathbf{y}}_1 = \norm{O\mathbf{x}}_1 - \norm{O\mathbf{x} + O(\mathbf{y} - \mathbf{x})}_1 \\
	&\leq \norm{O\mathbf{x}}_1 - \norm{O\mathbf{x}}_1 - (\mathbf{y}-\mathbf{x})^TO^T\sgn(O\mathbf{x}) = (\mathbf{x} - \mathbf{y})^TO^T\sgn(O\mathbf{x}).
	\end{align*}
	Hence, if we define, for all $j \in [d]$,
	\[c_j(\mathbf{x}) = |(O^T\sgn(O\mathbf{x}))_j|,\]
	then
	\[f(\mathbf{x}) - f(\mathbf{y}) \leq \sum_{j=1}^d |x_j-y_j| \cdot c_j(\mathbf{x}) = \sum_{j=1}^d c_j(\mathbf{x})1_{x_j \neq y_j}.\]
	Thus, we can use Talagrand's inequality, in the form of Theorem 4.20 in \cite{vanHandel16}. We find that $f(\mathbf{x}) = \norm{O\mathbf{x}}_1$ is subgaussian, with a constant
	\[\max_{\mathbf{x} \in \{0,1\}^d}\sum_{j=1}^d c_j(\mathbf{x})^2 = \max_{\mathbf{x} \in \{0,1\}^d} \sum_{j=1}^d |(O^T\sgn(O\mathbf{x}))_j|^2 = \max_{\mathbf{x} \in \{0,1\}^d} \norm{O^T\sgn(O\mathbf{x})}_2^2 = \max_{\mathbf{x} \in \{0,1\}^d} \norm{\sgn(O\mathbf{x})}_2^2 = d.\]
	Thus, for all $t \in \R$,
	\[\E\left[e^{t\left(f(\mathbf{x}) - \E\left[f(\mathbf{x})\right]\right)}\right] = \E\left[e^{t\left(\norm{O\mathbf{x}}_1 - \E\left[\norm{O\mathbf{x}}_1\right]\right)}\right] \leq e^{\frac{t^2d}{2}},\]
	from which we deduce, with $t > 0$,
	\begin{align*}
	\P\left[\norm{O\mathbf{x}}_1 \leq \frac{d}{4}\right] &\leq \P\left[\norm{O\mathbf{x}}_1 - \E\left[\norm{O\mathbf{x}}_1\right] \leq \frac{d}{4} - \frac{d}{2\sqrt{2}}\right] = \P\left[e^{-t\left(\norm{O\mathbf{x}}_1 - \E\left[\norm{O\mathbf{x}}_1\right]\right)} \geq e^{-td\left(\frac14 - \frac{1}{2\sqrt{2}}\right)}\right] \\
	&\leq \frac{\E\left[e^{-t\left(\norm{O\mathbf{x}}_1 - \E\left[\norm{O\mathbf{x}}_1\right]\right)}\right]}{e^{-td\left(\frac14 - \frac{1}{2\sqrt{2}}\right)}} \leq e^{\frac{t^2d}{2} + td\left(\frac14 - \frac{1}{2\sqrt{2}}\right)} = e^{-\frac{d}{2}\left(\frac{1}{2\sqrt{2}} - \frac14\right)^2},
	\end{align*}
	where we used Markov's inequality, and in the last equality we plugged in $t = 1/(2\sqrt{2}) - 1/4 > 0$. Thus, we find
	\[\P\left[\norm{O\mathbf{x}}_1 \leq \frac{d}{4}\right] \leq 2^{-\frac{\log(e)}{2}\left(\frac{1}{2\sqrt{2}} - \frac14\right)^2d}.\]
	This completes the proof.
\end{proof}

\end{document}